\newcommand{\orange}[1]{{\color{orange}#1}}
\newcommand{\purple}[1]{{\color{purple}#1}}
\renewcommand{\phi}{\varphi}
\renewcommand{\emptyset}{\varnothing}
\renewcommand{\epsilon}{\varepsilon}
\newcommand{\NP}{\mathbf{NP}}
\newcommand{\co}{\mathbf{co}}
\newcommand{\coNP}{\co\NP}
\newcommand{\infers}{\rightarrow}
\newcommand{\Lin}{ L}
\newcommand{\V}{\ensuremath{\mathcal{V}}}
\newcommand{\W}{\ensuremath{\mathcal{W}}}
\newcommand{\R}{\ensuremath{\mathcal{R}}}
\renewcommand*{\S}{\ensuremath{\mathcal{S}}}
\newcommand{\m}{\ensuremath{\mathsf{m}}}
\newcommand{\s}{\ensuremath{\mathsf{s}}}
\newcommand{\ms}{\ensuremath{\mathsf{ms}}}
\newcommand{\msu}{\ensuremath{\mathsf{msu}}}
\newcommand{\ac}{\ensuremath{\mathsf{ac}}}
\newcommand{\un}{\ensuremath{\mathsf{u}}}
\newcommand{\acu}{\ensuremath{\mathsf{acu}}}
\newcommand{\red}{\ensuremath{\overset{*}{\rightsquigarrow}}}
\newcommand{\redms}{\ensuremath{\red_\ms}}
\newcommand{\redmsu}{\ensuremath{\red_\msu}}
\newcommand{\reds}{\ensuremath{\red_\s}}
\newcommand{\redsu}{\ensuremath{\red_{\s\un}}}
\newcommand{\basis}{\ensuremath{\mathsf{basis}}}
\newcommand{\pp}[2]{x_{#1#2}}
\newcommand{\qq}[2]{y_{#1#2}}
\newcommand{\rr}[2]{z_{#1#2}}
\Crefname{prop}{Proposition}{Propositions}
\Crefname{exa}{Example}{Examples}
\Crefname{cor}{Corollary}{Corollaries}
\Crefname{rem}{Remark}{Remarks}
\Crefname{lem}{Lemma}{Lemmas}
\Crefname{defi}{Definition}{Definitions}
\Crefname{thm}{Theorem}{Theorems}
\Crefname{fact}{Fact}{Facts}
\Crefname{obs}{Observation}{Observations}
\Crefname{fig}{Figure}{Figures}
\theoremstyle{defC}\newtheorem{exaC}[thm]{Example}
\title[Enumerating independent linear inferences]{Enumerating independent linear inferences\rsuper*} 
\author[A.~Das]{Anupam Das\lmcsorcid{0000-0002-0142-3676}}[a]
\address{University of Birmingham}
\email{a.das@bham.ac.uk}
\author[A.~Rice]{Alex Rice\lmcsorcid{0000-0002-2698-5122}}[b]
\address{University of Cambridge}
\email{alex.rice@cl.cam.ac.uk}
\thanks{The authors would like to thank Matteo Acclavio, Ross Horne, Lutz Stra{\ss}burger, and Alessio Guglielmi for several comments and discussions on this and earlier versions of this work.}
\begin{document}

\begin{abstract}
  A \emph{linear inference} is a valid inequality of Boolean algebra in which each variable occurs at most once on each side.

  In this work we leverage recently developed \emph{graphical} representations of linear formulae to build an implementation that is capable of more efficiently searching for switch-medial-independent inferences. We use it to find four `minimal' 8-variable independent inferences and also prove that no smaller ones exist; in contrast, a previous approach based directly on formulae reached computational limits already at 7 variables. Two of these new inferences derive some previously found independent linear inferences. The other two (which are dual) exhibit structure seemingly beyond the scope of previous approaches we are aware of; in particular, their existence contradicts a conjecture of Das and Stra{\ss}burger.

  We were also able to identify 10 minimal 9-variable linear inferences independent of all the aforementioned inferences, comprising 5 dual pairs, and present applications of our implementation to recent `graph logics'.
\end{abstract}

\maketitle

\section{Introduction}
\label{sec:introduction}
A \emph{linear inference} is a valid implication $\phi \infers \psi$ of Boolean logic, where $\phi$ and $\psi$ are \emph{linear}, i.e.\ each variable occurs at most once in each of $\phi$ and $\psi$.
Such implications have played a crucial role in many areas of structural proof theory.
For instance the inference \emph{switch},
\[
\s \ : \
x \land (y \lor z)
\ \infers \
(x \land y) \lor z
\]
governs the logical behaviour of the \emph{multiplicative} connectives $\parr$ and $\otimes$ of linear logic \cite{Gir87:linear-logic},
and similarly the inference \emph{medial},
\[
\m \ : \
(w \land x) \lor (y \land z)
\ \infers \
(w \lor y) \land (x \lor z)
\]
together with the structural rules \emph{weakening} and \emph{contraction}, governs the logical behaviour of the \emph{additive} connectives $\oplus$ and $\&$ \cite{Str02:lin-log-loc-sys,Str03:phd-thesis}.
Both of these inferences are fundamental to \emph{deep inference} proof theory, in particular allowing weakening and contraction to be reduced to atomic form \cite{BruTiu01:class-log-loc-sys,Bru03:phd-thesis}, thereby admitting elegant `geometric' proof normalisation procedures based on \emph{atomic flows} \cite{GugGun08:norm-contr-di-at-fl,Gun09:phd-thesis}.
One particular feature of these normalisation procedures is that they are robust under the addition of further linear inferences to the system, thanks to the atomisation of structural steps.

On the other hand the set of \emph{all} linear inferences $\Lin$ plays an essential role in certain {models} of linear logic and related substructural logics.
In particular, the multiplicative fragment of Blass' \emph{game semantics} model of linear logic validates \emph{just} the linear inferences (there called `binary tautologies') \cite{Bla92:game-semantics-ll}, and this coincides too with the multiplicative fragment of Japaridze's \emph{computability logic}, cf., e.g., \cite{Jap05:intro-to-cirquent-calc}.
From a complexity theoretic point of view, the set $\Lin$ is sufficiently rich to encode all of Boolean logic: it is $\coNP$-complete \cite{Str12:ext-wo-cut,DasStr16:no-compl-lin-sys}.

It was recently shown by one of the authors, together with Stra{\ss}burger, that, despite its significance, $\Lin$ admits no polynomial-time axiomatisation by linear inferences unless $\coNP = \NP$ \cite{DasStr15:no-comp-lin-sys,DasStr16:no-compl-lin-sys}, resolving a long-standing open problem of Blass and Japaridze for their respective logics (see, e.g., \cite{Jap17:elem-base-cirq-calc}).
Note that the condition of polynomial-time computability naturally arises from proof theory \cite{CooRec74:length-of-proofs,CooRec79:rel-eff-pps}, and is also required for the result to be meaningful: it prevents us just taking the entire set $\Lin$ as an axiomatisation.

From a Boolean algebra point of view, this result means that the class of linear Boolean inequalities has no feasible basis (unless $\coNP = \NP$).
From a proof theoretic point of view this means that any propositional proof system (in the Cook-Reckhow sense \cite{CooRec74:length-of-proofs,CooRec79:rel-eff-pps}, see also \cite{Kra19:cook-reckhow}) must necessarily admit some `structural' behaviour, even when restricted to proving only linear inferences (unless $\coNP= \NP$).

An immediate consequence of this result is that $\s$ and $\m$ above do not suffice to generate all linear inferences (unless $\coNP= \NP$), even modulo all valid linear equations (which are just associativity, commutativity, and unit laws, cf.~\cite{DasStr15:no-comp-lin-sys,DasStr16:no-compl-lin-sys}).
In fact, this was known before the aforementioned result, due to the identification of an explicit 36 variable inference in \cite{Str12:ext-wo-cut}.\footnote{Stra{\ss}burger refers to the inference as a `balanced tautology', but like the `binary tautologies' of Blass and Japaridze, these are equivalent to linear inferences. In particular we recast Stra{\ss}burger's example as a bona fide linear inference in \cref{sec:prev-lin-infs}.}
Already in that work the question was posed whether such an inference was minimal, and since then the identification of a minimal $\{\s,\m \}$-independent linear inference has been a recurring theme in the literature of this area.

It has been verified in \cite{Das13:lin-inf-rew} that a minimal $\{\s,\m \}$-independent linear inference must be `non-trivial', as long as we admit all valid linear equations (again, these are just associativity, commutativity and unit laws).
Intuitively, `non-triviality' rules out pathological inferences such as $x \land y \infers x \lor y$ or $x \land (y \lor z) \infers x \lor (y \land z)$. For these inferences the variable, say, $y$ is, in a sense, redundant; it turns out that they may be derived in $\{\s,\m\}$, modulo linear equations, from a smaller non-trivial `core'.
We recall these arguments in \cref{sec:preliminaries}.

Furthermore \cite{Das13:lin-inf-rew} identified a 10 variable linear inference that is not derivable by switch and medial (even under linear equations), which Stra{\ss}burger conjectured was minimal \cite{Str12:private-conjecture}.
Around the same time \v{S}ipraga attempted a computational approach, searching for independent linear inferences by brute force \cite{Sip12:aut-search-lin-inf}.
However, computational limits were reached already at 7 variables.
In particular, every linear inference of up to 6 variables is already derivable by switch and medial, modulo linear equations; due to the aforementioned 10 variable inference, any minimal independent linear inference must have size 7,8,9, or 10.

Since 2013 there have been significant advances in the area, in particular through the proliferation of \emph{graph-theoretic} tools.
Indeed, the interplay between formulae and graphs was heavily exploited for the aforementioned result of \cite{DasStr15:no-comp-lin-sys,DasStr16:no-compl-lin-sys}.
Since then, multiple works have emerged in the world of linear proof theory that treat these graphs as `first class citizens', comprising a now active area of research \cite{NguSei18:coh-int-graphs,AccHorStr20:mll-graphs-short,AccHorStr20:mll-graphs-full,CalDasWar20:bgl}.

\paragraph*{Contribution}
In this work we revisit the question of minimal $\{\s,\m\}$-independent linear inferences by exploiting the aforementioned recent graph theoretic techniques.
Such an approach vastly reduces the computational resources necessary and, in particular, we are able to provide a conclusive result:
the smallest $\{\s,\m \}$-independent linear inference has size 8. In fact there are four minimal (w.r.t.\ derivability) such ones, unique up to associativity, commutativity, renaming of variables, given by,
\begin{equation}
  \label{eq:php32-derived-inf}
  \begin{alignedat}{2}
    & &&(z \lor (w \land w')) \land ((x \land x' ) \lor ((y \lor y') \land z'))\\
    & \to &\quad&(z \land (x \lor y)) \lor ((w \lor y') \land ((w' \land x')  \lor z') )
  \end{alignedat}
\end{equation}
\\
\begin{equation}
    \label{eq:unknown-inference-8}
    \begin{alignedat}{2}
    & &&(z \lor (w \land w')) \land ((x \land x') \lor ((y \lor y') \land z'))\\
    & \to &\quad&((z \lor (w \land x)) \land (x' \lor y)) \lor (z' \land (w' \lor y'))
    \end{alignedat}
\end{equation}
\\
\begin{equation}
  \label{eq:counterexample-inference}
  \begin{alignedat}{2}
    & &&((w \land w') \lor (x \land x')) \land ((y \land y') \lor (z \land z'))\\
    & \to&\quad& (w \land y) \lor ((x \lor (w'\land z')) \land ((x'\land y') \lor z) )
  \end{alignedat}
\end{equation}
and the De Morgan dual of \cref{eq:counterexample-inference}.
Note that \cref{eq:unknown-inference-8} did not appear in the preliminary version \cite{DasRice21} as it was erroneously considered to be isomorphic to the first.
We dedicate some discussion to each of these separately in \cref{sec:two-found-inferences}, and include a manual verification of their soundness and $\{\s,\m \}$-independence in \cref{sect:app:further-proofs-examples}, as a sanity check.

Our main contribution, as announced in the preliminary version of this paper \cite{DasRice21}, is an implementation that checks inference for \(\{\s,\m\}\)-derivability, which was able to confirm that all 7 variable linear inference are derivable from switch and medial. In fact we found \eqref{eq:php32-derived-inf} independently of the implementation presented in this paper.\footnote{These two developments were respectively communicated via blog posts \cite{Ric20:lin-inf-size-7} and \cite{Das20:lin-inf-size-8}.}
Ultimately, we improved the implementation to run on inferences of size 8 too,
and our inference \eqref{eq:php32-derived-inf} was duly found, as well as \eqref{eq:unknown-inference-8}, \eqref{eq:counterexample-inference}, and the dual of \eqref{eq:counterexample-inference}.
One highlight of \eqref{eq:counterexample-inference} is that it exhibits a peculiar structural property that \emph{refutes} Conjecture 7.9 from \cite{DasStr16:no-compl-lin-sys}, as we explain in \cref{sec:counterexample-inference}.

In this extended version, we also show how our search algorithm can be generalised to enumerate minimal linear inferences independent of an arbitrary set $S$ of linear inferences.
 This can be employed to devise a recursive algorithm $\basis(n)$ that computes all $n$-variable minimal linear inferences independent of all smaller ones.
To this end, in \cref{sec:further-theoretical} we define the notion of an $n$-minimal set, which is intuitively a minimal set of inferences that derives all $n$-variable inferences modulo all $<n$-variable inferences.
We show these $n$-minimal sets are unique up to linear equations and renaming of variables, and finally prove correctness of our algorithm $\basis(n)$ by extending some theoretical results on constant-freeness for switch and medial to the general setting of arbitrary $n$-minimal sets.
For instance the 9-variable minimal linear inferences given by $\basis(9)$ are those not derivable from switch, medial, and the four aforementioned 8-variable linear inferences, instead of the weaker condition of not being derivable from only switch and medial.
Indeed, we were able to successfully run $\basis(9)$ to compute the $9$-minimal linear inferences.
We classify these in \cref{sec:9var-infs}, identifying 10 such inferences in total, comprising 5 dual pairs.

Our implementation \cite{Ric21:implementation} is split into a \emph{library} and an \emph{executable}, where the executable implements our search algorithm described in \cref{sec:search}, and the library contains foundations for working with linear inferences using the graph theoretic techniques presented in \cref{sec:webs}. These are written in Rust and designed to be relatively fast while maintaining readability.

In addition to having the machinery needed for our original search, the library has also been extended for to work with \emph{arbitrary graphs}, e.g.\ as presented in \cite{CalDasWar20:bgl}, in particular being able to check other notions of inference for graphs and having tools for determining whether an inference is an instance of an arbitrary, user-supplied graph rewrite.
Our intention is that this could form a reusable base for future investigations in the area, both for linear formulae and for the recent linear graph theoretic settings of \cite{NguSei18:coh-int-graphs,AccHorStr20:mll-graphs-short,AccHorStr20:mll-graphs-full,CalDasWar20:bgl}.

As well as the additions already mentioned, this extended version contains several further proofs, examples and general discussion compared with the preliminary version \cite{DasRice21}.
\section{Preliminaries}
\label{sec:preliminaries}
In this section we present some preliminaries on linear inferences, triviality and minimality.
The content of this section is based on Section 2 of the preliminary version, \cite{DasRice21}, with some additional proofs and narrative.

Throughout this paper we shall work with a countably infinite set of \textbf{variables}, written $x,y, z$ etc.
A \textbf{linear formula} on a (finite) set of variables \(\V\) is defined recursively as follows:
\begin{itemize}

\item \(\top\) and \(\bot\) are linear formulae on $\emptyset$, the empty set of variables (called \textbf{units} or \textbf{constants}).

\item \(x\) and \(\neg x\) are linear formulae on \(\{x\}\), for each variable $x$.
\item If \(\phi\) is a linear formula on \(\V_1\) and \(\psi\) is a linear formula on \(\V_2\), with \(\V_1 \cap \V_2  = \emptyset \), then \(\phi \lor \psi\) and \(\phi \land \psi\) are linear formulae on \(\V_1 \cup \V_2\).
\end{itemize}

We adopt the expected Boolean semantics of (linear) formulae. Namely we extend any map $\alpha$ from variables to $\{0,1\}$ (an `\textbf{assignment}') to one on all (linear) formulae as follows:
\[
\begin{array}{r@{\ := \ }l}
     \alpha(\top) & 1 \\
     \alpha (\bot) & 0 \\
     \alpha (\neg x ) & 1 - \alpha (x) \\
\end{array}
\qquad
\begin{array}{r@{\ := \ }l}
     \alpha(\phi \land \psi) & \min (\alpha (\phi) , \alpha(\psi)) \\
     \alpha(\phi \lor \psi) & \max (\alpha (\phi) , \alpha(\psi))
\end{array}
\]
The formula $\phi$ \textbf{satisfies} an assignment $\alpha$ if $\alpha (\phi)=1$.
Formulae $\phi $ and $\psi$ are \textbf{logically equivalent} if they satisfy the same assignments.

\begin{rem}
  [On negation]
  Note that our restriction of negation to only propositional variables is made without loss of generality. Namely, we may recover $\neg \phi$ thanks to De Morgan's laws by setting:
  \[
  \neg \bot := \top
  \qquad
  \neg \top := \bot
  \qquad
  \neg \neg \phi := \phi
  \qquad
  \neg (\phi \lor \psi) := \neg \phi \land \neg \psi
  \qquad
  \neg (\phi \land \psi) := \neg \phi \lor \neg \psi
  \]
  Note in particular that, if $\phi$ is a linear formula on $\V$, then so is $\neg \phi$.
\end{rem}

A linear formula that does not contain \(\top\) or \(\bot\) is \textbf{constant-free}.
A linear formula with no negated variables (i.e.\ formulae of form $\neg x$) is \textbf{negation-free}. Later in the paper, we will be able to restrict our search to inferences between constant-free negation-free formulae.

In what follows, we shall omit explicit consideration of variable sets, assuming that they are disjoint whenever required by the notation being used.

A binary relation \(\sim\) on linear formulae is \textbf{closed under contexts} if we have for all \(\phi,\psi\), and \(\chi\):
\begin{alignat*}{2}
  &\phi \sim \psi \implies \phi \land \chi \sim \psi \land \chi &\quad\quad&\phi \sim \psi \implies \phi \lor \chi \sim \psi \lor \chi\\
  &\phi \sim \psi \implies \chi \land \phi \sim \chi \land \psi &&\phi \sim \psi \implies \chi \lor \phi \sim \chi \lor \psi
\end{alignat*}
An equivalence relation (on linear formulae) that is closed under contexts is called a (linear) \textbf{congruence}.

\begin{defi}
[Linear equations]
Let \(\sim_\ac\) be the smallest congruence satisfying,
\begin{alignat*}{2}
  &\phi \lor \psi \sim_\ac \psi \lor \phi &\quad\quad&\phi \land (\psi \land \chi) \sim_\ac (\phi \land \psi) \land \chi\\
  &\phi \land \psi \sim_\ac \psi \land \phi &&\phi \lor (\psi \lor \chi) \sim_\ac (\phi \lor \psi) \lor \chi
\end{alignat*}

$\sim_\un$ is the smallest congruence satisfying:
\begin{equation}
  \label{eq:un-eq-rel}
  \begin{alignedat}{4}
    &\phi \land \top \sim_\un \phi   &\quad& \phi \lor \bot\sim_\un \phi &\quad&
    \top \land \phi \sim_\un  \phi  &\quad&  \bot \lor \phi \sim_\un \phi  \\
    &\phi \land \bot \sim_\un \bot  && \phi \lor \top \sim_\un \top  &&
    \bot \land \phi \sim_\un \bot  && \top \lor \phi \sim_\un \top
  \end{alignedat}
\end{equation}

\(\sim_\acu\) is the smallest congruence containing both $\sim_\ac$ and $\sim_\un$.
\end{defi}
\begin{rem}
[Variable sets under congruence]
Suppose $\phi$ and $\phi'$ are linear formulae on $\V $ and $\V'$ respectively.
If $\phi \sim_\ac \phi'$ then $\V = \V'$.
\end{rem}
Note also that $\sim_\un$ generates a unique normal form of linear formulae by maximally eliminating constants:

\begin{prop}
[Folklore]
  \label[prop]{prop:unit-free}
  Every formula is $\sim_\un$-equivalent to a unique constant-free formula, or is equivalent to $\bot$ or $\top$.
\end{prop}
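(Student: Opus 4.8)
The plan is to prove \cref{prop:unit-free} by induction on the structure of the linear formula $\phi$, constructing the normal form explicitly via a rewriting procedure that orients the unit laws of \cref{eq:un-eq-rel} from left to right. First I would define, by recursion on $\phi$, a candidate normal form $\mathsf{nf}(\phi)$ taking values in the set $\{\bot,\top\}$ together with the constant-free formulae. In the base cases, $\mathsf{nf}(\top) := \top$, $\mathsf{nf}(\bot) := \bot$, $\mathsf{nf}(x) := x$, $\mathsf{nf}(\neg x) := \neg x$. In the inductive case for $\phi \land \psi$: compute $\mathsf{nf}(\phi)$ and $\mathsf{nf}(\psi)$; if either is $\bot$ return $\bot$; else if one is $\top$ return the other; else both are constant-free and we return $\mathsf{nf}(\phi) \land \mathsf{nf}(\psi)$, which is again a constant-free linear formula since the variable sets remain disjoint. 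The case for $\phi \lor \psi$ is dual. A routine induction shows $\mathsf{nf}(\phi) \sim_\un \phi$ (each recursive step corresponds to zero or more applications of the congruence rules in \cref{eq:un-eq-rel}, and $\sim_\un$ is closed under contexts, so the recombination steps are sound).

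Next I would establish uniqueness, which splits into two parts. For the first part, I claim that if $\phi \sim_\un \psi$ then $\mathsf{nf}(\phi) = \mathsf{nf}(\psi)$ (syntactic identity). This follows from the general principle that $\sim_\un$ is generated by the displayed equations as a congruence, so it suffices to check that $\mathsf{nf}$ is invariant under a single rewrite step $\chi[\phi_0] \sim_\un \chi[\psi_0]$ where $\phi_0 \sim_\un \psi_0$ is one of the axiom instances; then invariance under arbitrary $\sim_\un$ follows by induction on the length of a derivation, and invariance under placing in a context $\chi[-]$ follows by induction on $\chi$ together with the observation that $\mathsf{nf}$ of a compound formula depends on its immediate subformulae only through their $\mathsf{nf}$-values. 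Checking the axioms themselves is a finite case analysis: for instance $\mathsf{nf}(\phi_0 \land \top)$ and $\mathsf{nf}(\phi_0)$ agree because if $\mathsf{nf}(\phi_0) = \bot$ both sides are $\bot$, and otherwise $\mathsf{nf}(\top) = \top$ gets absorbed; the other seven axioms are similar. For the second part I note that $\mathsf{nf}$ restricted to constant-free formulae is the identity, and $\mathsf{nf}(\bot) = \bot$, $\mathsf{nf}(\top) = \top$. Hence if $\phi$ is $\sim_\un$-equivalent to two constant-free formulae $\psi_1,\psi_2$ then $\psi_1 = \mathsf{nf}(\psi_1) = \mathsf{nf}(\phi) = \mathsf{nf}(\psi_2) = \psi_2$; and a constant-free formula cannot be $\sim_\un$-equivalent to $\bot$ or $\top$ since $\mathsf{nf}$ would have to be simultaneously constant-free and equal to $\bot$ or $\top$. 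This also shows the trichotomy (constant-free / $\bot$ / $\top$) is exhaustive: every $\phi$ falls into exactly one case according to the value of $\mathsf{nf}(\phi)$.

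I expect the main obstacle to be the bookkeeping in the uniqueness argument, specifically verifying that $\mathsf{nf}$ is a well-defined $\sim_\un$-invariant — one has to be careful that the recursive definition genuinely only inspects the $\mathsf{nf}$-values of immediate subformulae (so that it descends to the quotient), and that the congruence closure is handled correctly rather than just the raw axioms. An alternative, perhaps cleaner, route avoiding the explicit $\mathsf{nf}$ function is to argue via a terminating and confluent rewriting system: orient the eight unit equations left-to-right as reductions, observe termination since each reduction strictly decreases formula size, and check local confluence of the (finitely many) critical pairs, so that Newman's lemma gives unique normal forms; the normal forms are then exactly the constant-free formulae plus the two constants, since any formula still containing a constant as a proper subformula has an applicable reduction. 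Since this is folklore I would keep whichever presentation is shortest, likely just sketching the rewriting argument and leaving the critical-pair check to the reader.
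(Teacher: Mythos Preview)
Your proposal is correct. The alternative route you sketch at the end—orienting the unit equations left-to-right, noting termination by size decrease, and checking local confluence of the critical pairs to invoke Newman's lemma—is exactly the argument the paper gives (it even tabulates the eight critical pairs, all of which are $\land$s or $\lor$s of two constants). Your primary approach via an explicit recursive $\mathsf{nf}$ function is a mild variant: it computes the same normal form directly and replaces the confluence check by showing $\mathsf{nf}$ is a $\sim_\un$-invariant, which amounts to the same finite case analysis on the axioms plus an induction on contexts. Either works; the rewriting presentation is shorter since Newman's lemma packages the context induction for you.
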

\begin{proof}
Construe the equations of \cref{eq:un-eq-rel} as a rewrite system by orienting them left-to-right.
This system reduces the size of formulae and so must terminate.
By inspection, the normal form must contain no constant under the scope of a connective $\lor $ or $\land$.

For uniqueness, it suffices to consider the critical pairs. In fact, the analysis is particularly simple since any critical pair is a $\lor$ or $\land $ of constants that always reduces to the same constant:
\[
\begin{array}{rcccl}
    \top & \underset{\top \land }{\longleftarrow} & \top \land \top & \underset{ \land \top }{\longrightarrow} & \top \\
    \bot & \underset{\top \land }{\longleftarrow} &  \top \land \bot & \underset{ \land \bot }{\longrightarrow} &  \bot \\
    \bot & \underset{\bot \land }{\longleftarrow} &\bot \land \top & \underset{ \land \top }{\longrightarrow} & \bot\\
    \bot &\underset{\bot \land }{\longleftarrow} &\bot \land \bot& \underset{ \land \bot }{\longrightarrow} &\bot
\end{array}
\qquad\qquad
\begin{array}{rcccl}
    \top & \underset{\top \lor }{\longleftarrow} & \top \lor \top &  \underset{ \lor \top }{\longrightarrow} & \top\\
    \top & \underset{\top \lor }{\longleftarrow} & \top \lor \bot & \underset{ \lor \bot }{\longrightarrow} & \top \\
    \top & \underset{\bot \lor }{\longleftarrow} & \bot \lor \top & \underset{ \lor \top }{\longrightarrow} & \top\\
    \bot & \underset{\bot \lor }{\longleftarrow} & \bot \lor \bot & \underset{ \lor \bot }{\longrightarrow} & \bot
\end{array}
\]
For the above pairs, we have indicated a reduction, say $\top \land \phi \to \phi$ or $\phi \lor \top \to \top$, by $\top \land $ or $\lor \top$ respectively.
\end{proof}

\begin{rem}
[On logical equivalence]
\label[rem]{acu-and-logical-equivalence}
Clearly, if $\phi \sim_\acu \psi$ then $\phi$ and $\psi$ are logically equivalent.
In fact, for linear formulae, we also have a converse: two linear formulae $\phi$ and $\psi$ on the same variables are logically equivalent if and only if $\phi \sim_\acu \psi$ \cite{DasStr15:no-comp-lin-sys,DasStr16:no-compl-lin-sys}. Furthermore, given two linear formulae \(\phi\) and \(\psi\) that are constant-free, \(\bot\), or \(\top\), then \(\phi\) and \(\psi\) are logically equivalent if and only if they are $\sim_\ac$-equivalent.
These property follows from \cref{prop:unit-free} above, the results of \cref{sec:trivial}, and the graphical representation of linear formulae and their semantics in \cref{sec:webs}.
\end{rem}

\subsection{Linear inferences}
A \textbf{linear inference}
is just a valid implication $\phi \to \psi$ (i.e., $\alpha (\phi) = 1 \implies \alpha (\psi) = 1$) where $\phi$ and $\psi$ are linear formulae.
The left-hand side (LHS) and right-hand side (RHS) of a linear inference, generally speaking, need not be linear formulae on the same variables.
Nonetheless we shall occasionally refer to linear inferences ``on $\V$'' or ``on \(n\) variables'', assuming that the LHS and RHS are both linear formulae on some fixed \(\V\) with \(|\V| = n\).

There are two linear inferences we shall particularly focus on, due to their prevalence in structural proof theory. \textbf{Switch} is the following inference on 3 variables,
\begin{equation}\label{eq:switch}
\s : x \land (y \lor z) \to (x \land y) \lor z
\end{equation}
and \textbf{medial} is the following inference on 4 variables:
\begin{equation}\label{eq:medial}
\m : (w \land x) \lor (y \land z) \to (w \lor y) \land (x \lor z)
\end{equation}

We may compose switch and medial (and more generally an arbitrary set of linear inferences) to form new linear inferences by construing them as \emph{term rewriting rules}.
More generally, we will consider rewriting derivations modulo the equivalence relations $\sim_\ac$ and $\sim_\acu$ we introduced earlier.
In the latter case, as previously mentioned, the underlying set of variables may change during a derivation, though \cref{prop:unit-free} will later allow us to work with some fixed set of variables throughout $\{\s, \m\}$ derivations.

\begin{defi}[Rewriting]
Let $S$ be a set of linear inferences.
We write $\to_S$ for the term rewrite system generated by $S$. I.e., $\to_S$ is the smallest relation containing each inference of $S$ and closed under substitution and contexts.

We write $\rightsquigarrow_S$ for the closure of $\to_S$ modulo $\ac$, i.e.\ $\phi \rightsquigarrow_S \psi$ if there are $\phi',\psi' $ such that $\phi \sim_\ac \phi' \to_S \psi' \sim_\ac \psi$. In this case we may say that the inference $\phi \to \psi$ is an \textbf{instance (without units)} of some inference in $S$.
We write $\rightsquigarrow_{S\un}$ for the closure of $\to_S$ modulo $\acu$, i.e.\ $\phi \rightsquigarrow_{S\un} \psi$ if there are $\phi',\psi' $ such that $\phi \sim_\acu \phi' \to_S \psi' \sim_\acu \psi$. Similarly in this case we may simply say the inference is an \textbf{instance (with units)} of some inference in $S$.

We write $\red_S$, or $\red_{S\un}$, for the smallest reflexive transitive relation containing $\rightsquigarrow_S$ and $\sim_\ac$, or $\rightsquigarrow_{S\un}$ and $\sim_\acu$, respectively, and say that a linear inference $\phi \to \psi $ is \textbf{$S$-derivable (without units)} (or \textbf{$S$-derivable with units}) if $\phi \red_S \psi$ (or $\phi \red_{X\un} \psi$, respectively).
\end{defi}

We shall typically write such sets $S$ by just listing their linear inferences, e.g.\ we write $\redms$ instead of $\red_{\{\m,\s\}}$ and $\redmsu$ instead of $\red_{\{\m,\s \}\un}$, in order to lighten notation.
Clearly, $\s$ and $\m$ are {valid}, so any derivation $\phi \redmsu \psi$ comprises a linear inference.

\begin{exa}
[Weakening and duality]
\label[exa]{example-weakening-duality}
The following derives \(\bot \redsu \chi\) for any \(\chi\):
\[\bot \sim_\acu \bot \land (\top \lor \chi) \to_\s (\bot \land \top) \lor \chi \sim_\acu \chi \]

Using this, we may $\{\s\}$-derive \textbf{weakening}, $\phi \to \phi \lor \chi$, with units as follows:
\[
\phi \sim_\acu \phi \lor \bot \redsu \phi \lor \phi \lor \chi
\]
Notice that $\redsu$ is closed under De Morgan duality:
If $\phi \redsu \chi$ and $\bar \phi$ and $\bar \chi$ are obtained from $\phi$ and $\chi$, respectively, by flipping each $\lor$ to a $\land$ and vice versa, then $\bar \chi\redsu \bar \phi$.
This follows by direct inspection of $\s$ and each clause of $\sim_\acu$; indeed the same property holds for $\reds$,$\redms$, and $\redmsu$ by the same reasoning.
As a result, we also have that \textbf{coweakening}, $\phi \land \chi \to \phi$, is $\{\s\}$-derivable with units.
\end{exa}

\begin{exa}
[`Mix']
\label[exa]{ex:mix}
Units can help us derive even constant-free linear inferences.
For instance, \textbf{mix}: \(\phi \land \psi \to \phi \lor \psi\) can be derived as a composite of coweakening and weakening:
\[\phi \land \chi \redsu \phi \redsu \phi \lor \chi \]
  Note that mix is not derivable without using instances of $\sim_\un$.
\end{exa}

The following is one of the main results of this paper:
\begin{thm}
  \label{thm:main}
  Suppose \(\phi\) is a linear formula over \(\V_1\) and \(\psi\) is a linear formula over \(\V_2\) and \(\mathsf r : \phi \to \psi\) is a linear inference. Then if \(|\V_1 \cap \V_2| \leq 7\) we have that \(\phi \redmsu \psi\).

  Furthermore, there is a valid linear inference \(\phi \to \psi\) on \(8\) variables with \(\phi \not \redmsu \psi\), so \(7\) is maximal with the property above.
\end{thm}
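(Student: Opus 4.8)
The plan is to establish the two halves of the statement by separate means. For the upper bound — every linear inference with at most $7$ shared variables is $\{\s,\m\}$-derivable with units — the strategy is to reduce the claim to a finite, exhaustively checkable problem and then to verify it by the computer search described in \cref{sec:search}. For the lower bound, the strategy is simply to take one of the explicit $8$-variable inferences displayed in the introduction, say \eqref{eq:counterexample-inference}, and to check that it is sound but not $\{\s,\m\}$-derivable with units; together with the upper bound this shows $7$ is maximal.

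For the upper bound I would argue by induction on $|\V_1 \cup \V_2|$. First normalise $\phi$ and $\psi$ using \cref{prop:unit-free}: if either side normalises to $\bot$ or $\top$ the inference is degenerate and done, since a routine induction (using weakening/coweakening from \cref{example-weakening-duality} and closure of $\redmsu$ under contexts) shows $\bot \redmsu \chi$ and $\chi \redmsu \top$ for every linear formula $\chi$. Otherwise both sides are constant-free; if some variable $x$ occurs on only one side — say positively in $\phi$, so $\phi = C[x]$ — then weakening gives $x \redsu x \lor \top \sim_\acu \top$, hence $x \redmsu \top$, hence $\phi = C[x] \redmsu C[\top]$ by context closure; its normal form $\phi'$ has strictly fewer variables, $\phi' \to \psi$ is still valid (as $x \notin \V_2$), and the inductive hypothesis gives $\phi \redmsu \phi' \redmsu \psi$. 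The cases where $x$ is negated, or occurs only in $\psi$, are symmetric via De Morgan duality and coweakening. This leaves exactly the case $\V_1 = \V_2 = \V$ with $n := |\V| \le 7$, and by the normalisation results recalled in \cref{sec:trivial} (see also \cref{acu-and-logical-equivalence}) I may further assume $\phi$ and $\psi$ are constant-free and negation-free.

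Now there are only finitely many constant-free negation-free linear formulae on $\le 7$ variables up to $\sim_\ac$, finitely many valid pairs $(\phi,\psi)$ among them, and for each such pair I must confirm $\phi \redmsu \psi$. This is the search: using the graphical representation of \cref{sec:webs} to (a) enumerate linear formulae on $\le 7$ variables modulo $\sim_\ac$, (b) decide validity of $\phi \to \psi$, and (c) compute the one-step relation $\rightsquigarrow_{\ms\un}$, one saturates the (finite, subset-of-$\V$) set of formulae reachable from $\phi$ and checks that $\psi$ lies in it; doing this for all $\le 7$-variable valid pairs establishes the claim. For the lower bound, with $\phi \to \psi$ taken to be \eqref{eq:counterexample-inference}, soundness is a finite check (all $2^8$ assignments, or the structured argument of \cref{sect:app:further-proofs-examples}), and non-derivability follows by running the same saturation from $\phi$ and confirming $\psi$ is not reached; the manual counterpart isolates a property of $\phi$ invariant under $\sim_\acu$, $\s$ and $\m$ but failing for $\psi$ — precisely the structural obstruction behind the refutation of Conjecture 7.9 of \cite{DasStr16:no-compl-lin-sys}.

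The main obstacle is making the saturation in the upper bound — and the negative check in the lower bound — actually feasible: the naive formula-based enumeration of \cite{Sip12:aut-search-lin-inf} already exhausts its resources at $n=7$, and even representing and comparing formulae modulo associativity and commutativity is costly if done directly. The crux is therefore the graph-theoretic machinery of \cref{sec:webs}, which provides canonical forms modulo $\sim_\ac$, a fast validity test, and an efficient implementation of $\rightsquigarrow_{\ms\un}$, together with the search design of \cref{sec:search}; the reduction steps above are routine but do require some care with units and with negated variables, which is why they lean on \cref{sec:trivial}.
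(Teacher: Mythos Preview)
Your high-level decomposition matches the paper's: reduce the upper bound to a finite check on constant-free negation-free same-variable inferences (essentially \cref{prop:diff-vars,prop:trivial,prop:const-free-neg-free}, which you invoke via \cref{sec:trivial}), then discharge that check computationally, and exhibit an explicit $8$-variable witness for the lower bound. But the search you describe is \emph{not} what the paper does. You propose forward saturation under $\rightsquigarrow_{\ms\un}$ from each LHS; the paper instead reduces further to \emph{logically minimal} non-trivial inferences (\cref{lem:minimality}) and then observes that such an inference is $\{\s,\m\}$-derivable iff it is a \emph{single} $\rightsquigarrow_\ms$ step, so Phase~7 of \cref{sec:search} merely tests each remaining inference for being an instance of switch or medial. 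Together with the reduction to ``least'' LHSs (\cref{lem:least}) this is what makes $n=8$ feasible; a naive saturation would not. Relatedly, your saturation is phrased over $\rightsquigarrow_{\ms\un}$, but computing that one-step relation is ill-posed without first eliminating units: the paper handles this via \cref{non-triv-const-free-neg-free-derivability-without-units}, so that after reducing to non-trivial inferences one may work with $\rightsquigarrow_\ms$ on webs, where the relation is genuinely finite and computable.

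For the lower bound, your saturation remark is fine in principle (once units are eliminated as above), but your ``manual counterpart'' is mistaken: there is no invariant of $\phi$ preserved by $\sim_\acu$, $\s$, $\m$ and failed by $\psi$ that the paper exhibits, and the refutation of Conjecture~7.9 of \cite{DasStr16:no-compl-lin-sys} is \emph{not} such an invariant (a derivation can perfectly well flip both $\land$- and $\lor$-lccs, by mixing switch and medial steps). The paper instead proves $\{\s,\m\}$-minimality of, e.g., \eqref{eq:counterexample-inference} by exhaustive case analysis of all single switch/medial steps at the LHS and RHS (\cref{sec:ind-min-counterexample-inference}), and then concludes non-derivability via \cref{non-triv-const-free-neg-free-derivability-without-units}.
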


This theorem was the original objective of the preliminary version of this paper~\cite{DasRice21}. Indeed, before this it was not known if the smallest linear inference independent of switch and medial used 7 or 8 variables. The rest of \cref{sec:preliminaries} and \cref{sec:webs,sec:algorithm} work towards a proof of this theorem.

\subsection{Trivial inferences}
\label{sec:trivial}
In order to state \cref{thm:main} above in its most general form, we have allowed linear formulae to include constants and negation, and linear inferences to be between formulae with different variable sets.
However
it turns out that we may proceed to prove \cref{thm:main}, without loss of generality, by working with derivations of only constant-free, negation-free formulae on some fixed set of variables, as was already shown in \cite{Das13:lin-inf-rew}. This is done by defining the notion of a \emph{trivial} inference, whose $\{\s,\m \}$-derivability, with units, may be reduced to that of a smaller non-trivial inference.

\begin{defi}
  An inference \(\phi \to \psi\) is \textbf{trivial at a variable \(x\)} if \(\phi[\top/x] \to \psi[\bot/x]\) is again a valid inference.
  An inference is \textbf{trivial} if it is trivial at one of its variables.
\end{defi}

\begin{exa}
  The mix inference from \cref{ex:mix}, \(x \land y \to x \lor y\), is trivial at \(x\) and trivial at \(y\).
  Note, however, that it is not trivial at $x$ and $y$ `at the same time', in the sense that the simultaneous substition of $\bot $ for $x$ and $y$ in the LHS and $\top$ for $x$ and $y$ in the RHS does not result in a valid implication.
	In contrast, the linear inference $w \land (x \lor y) \to w \lor (x \land y)$ from \cite{Das13:lin-inf-rew} is, indeed, trivial at $x$ and $y$ {`at the same time'}.

  Neither switch nor medial are trivial.
\end{exa}

\begin{rem}
[Global vs local triviality]
\label[rem]{trivial-composition}
Note that triviality is closed under composition by linear inferences: if $\phi \to \psi$ is trivial at $x$ and $\psi \to \chi$ is valid, then $\phi \to \chi$ is trivial at $x$.
Similarly for $\chi \to \psi$ if $\chi \to \phi$ is valid.
One pertinent feature is that the converse does not hold: there are `globally' trivial derivations that are nowhere `locally' trivial.
For instance consider the following derivation (from \cite[Remark~5.6]{DasStr16:no-compl-lin-sys}):
\[
w \land x \land (y \lor z)
\rightsquigarrow_\s
w \land ((x \land y) \lor z)
\rightsquigarrow_\s
(w\land z) \lor (x \land y)
\rightsquigarrow_\m
(w\lor x) \land (y \lor z)
\]
The derived inference is just an instance of mix, from \cref{ex:mix}, on the redex $w \land x$, which is trivial.
However, no local step is trivial.
\end{rem}

To prove (the first half of) \cref{thm:main}, in \cref{sec:algorithm} we will actually prove the following apparent weakening of that statement:

\begin{thm}
  \label{thm:main-reduced}
  Let \(n < 8\). Let \(\phi\) and \(\psi\) be constant-free negation-free linear formulae on \(n\) variables and suppose \(\phi \to \psi\) is a non-trivial linear inference. Then \(\phi \redms \psi\).
\end{thm}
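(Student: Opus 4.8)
The plan is to reduce Theorem~\ref{thm:main-reduced} to a finite, and computationally feasible, verification, and to carry that verification out by the (correctness-certified) search of \cref{sec:algorithm}. Fix $n \le 7$ and a set of variables $\V$ with $|\V| = n$. The first point is that $\redms$ is \emph{decidable} on constant-free linear formulae over $\V$: up to $\sim_\ac$ there are only finitely many such formulae (bounded by the number of cographs on $n$ labelled leaves), and a single $\rightsquigarrow_\s$ or $\rightsquigarrow_\m$ step sends a formula to one of finitely many others, so $\phi \redms \psi$ is exactly reachability of $[\psi]_{\sim_\ac}$ from $[\phi]_{\sim_\ac}$ in a finite directed graph — decidable by a reachability search. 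Units play no role here: the statement only demands $\redms$, and the unit-full version enters only later, via the triviality machinery of \cref{sec:trivial}, when deducing Theorem~\ref{thm:main} from this one.

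It therefore suffices to (a) enumerate, up to $\sim_\ac$ and renaming of variables, all non-trivial valid linear inferences on $n \le 7$ variables, and (b) check for each that its right-hand side is reachable from its left-hand side under $\rightsquigarrow_\s$ and $\rightsquigarrow_\m$. For (a) this is a finite enumeration on the fixed set $\V$: a non-trivial inference must have the same variable set on each side, since if a variable $x$ occurred on only one side then, evaluating the other side with $x$ set appropriately, the inference would already be trivial at $x$. The enumeration can be heavily pruned using closure of both triviality and $\{\s,\m\}$-derivability under composition (\cref{trivial-composition}) and De Morgan duality (\cref{example-weakening-duality}). The subcase $n \le 6$ recovers results already available in the literature; the genuinely new content is pushing the search through $n = 7$, where the earlier formula-based implementations stalled.

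The key to making (b) feasible at $n = 7$ — an engineering point rather than a new theorem — is to work with the \emph{web} (cograph) representations of \cref{sec:webs} instead of with formula trees: the modular decomposition of a cograph is a canonical representative of its $\sim_\ac$-class, so the quotient by $\sim_\ac$ comes for free, and a single switch or medial step becomes a localised edit of the edge set of the web, directly computable. Running the reachability search over webs — with strong canonical forms so that revisited states are recognised, and with the outer enumeration pruned as above — one finds that every non-trivial $\le 7$-variable inference does reach its right-hand side. This is precisely the algorithm of \cref{sec:search}, implemented in \cite{Ric21:implementation}.

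I expect the main obstacle to be computational rather than conceptual: the number of $\sim_\ac$-classes on $7$ labelled leaves, combined with the branching of switch and medial, makes a naive search intractable, so correctness of the argument as a whole leans on (i) an efficient web representation with fast single-step rewrites, (ii) canonicalisation strong enough that the reachable set is not re-explored, and (iii) enough pruning of the outer loop (non-trivial inferences only, modulo renaming and duality, with early exit once a derivation is found) for the whole verification to terminate within reasonable resources. The properly mathematical points needing care are the reduction in the first paragraph — that it suffices to consider constant-free, negation-free inferences on one fixed variable set, which follows from the triviality analysis of \cref{sec:trivial} (after \cite{Das13:lin-inf-rew}) together with \cref{prop:unit-free} to keep the variable set fixed along $\{\s,\m\}$ derivations — and the soundness and completeness of the reachability search as a decision procedure for $\redms$; I would establish both in \cref{sec:algorithm} before appealing to the implementation.
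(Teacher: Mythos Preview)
Your plan is correct and its overall shape matches the paper's: pass to webs to quotient by $\sim_\ac$, enumerate non-trivial inferences up to isomorphism, and verify $\{\s,\m\}$-derivability computationally. The substantive difference is in \emph{how} derivability is checked. You propose a reachability search in the (finite) $\rightsquigarrow_\ms$-graph; the paper instead first invokes \cref{lem:minimality} to reduce to \emph{logically minimal} non-trivial inferences, and then observes that a logically minimal inference is $\{\s,\m\}$-derivable iff it is a \emph{single} instance of $\s$ or $\m$ (any proper intermediate formula would contradict minimality). Phase~7 of \cref{sec:search} therefore performs no reachability exploration at all, only a bounded instance-check per inference. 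Concretely for $n=7$ this shrinks the $35110$ non-trivial inferences (from least webs) to $1352$ logically minimal ones, each dispatched by a single switch-or-medial test. Your reachability approach is sound in principle, but the minimality-to-single-step reduction is the paper's actual algorithm and the reason the search scales to $n=8$ and $n=9$; you should incorporate \cref{lem:minimality} explicitly rather than leaving it implicit under ``pruning''.
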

In fact this statement is no weaker at all, and we will now see how the consideration of triviality allows us to only deal with such special cases without loss of generality. We first show that is not necessary to consider linear inferences where the premise and conclusion have different variables.

\begin{prop}
\label[prop]{prop:diff-vars}
    Let $\phi$ and $\psi$ be linear formulae on $\V_1$ and $\V_2$, respectively, and let $\mathsf r : \phi \to \psi$ be a linear inference. Then there is $\mathsf r' : \phi' \to \psi'$ on $\V_1 \cap \V_2$ such that $\mathsf r$ is $\{\s,\m,\mathsf r'\}$-derivable with units.
\end{prop}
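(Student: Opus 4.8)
The plan is to choose $\mathsf r'$ by ``projecting'' $\phi$ and $\psi$ onto their common variables via constant substitutions, taking care to substitute in the one direction that simultaneously keeps $\mathsf r'$ valid and makes both halves of the resulting derivation run the right way. Concretely, let $\hat\phi$ be the result of replacing, in $\phi$, the unique occurrence of the literal on each $x \in \V_1 \setminus \V_2$ by $\top$, and let $\hat\psi$ be the result of replacing, in $\psi$, the literal on each $y \in \V_2 \setminus \V_1$ by $\bot$. Let $\phi'$ and $\psi'$ be their $\sim_\un$-normal forms (\cref{prop:unit-free}); these are linear formulae on a subset of $\V_1 \cap \V_2$ (or one of $\top,\bot$). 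Set $\mathsf r' : \phi' \to \psi'$.

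First I would check $\mathsf r'$ is a valid inference; since $\sim_\un$ preserves logical equivalence it suffices to show $\hat\phi \to \hat\psi$ is valid. Given an assignment $\alpha$ with $\alpha(\hat\phi)=1$, extend it to $\beta$ on $\V_1 \cup \V_2$ by choosing, for $x \in \V_1 \setminus \V_2$, the value $\beta(x)$ making the literal on $x$ in $\phi$ evaluate to $1$, and for $y \in \V_2 \setminus \V_1$ the value making the literal on $y$ in $\psi$ evaluate to $0$. Because each variable occurs at most once in a linear formula, these choices are unambiguous and mutually independent, and since $\hat\phi$ mentions only $\V_1 \cap \V_2$-variables (where $\beta = \alpha$) we get $\beta(\phi) = \beta(\hat\phi) = \alpha(\hat\phi) = 1$; validity of $\phi \to \psi$ then gives $\beta(\psi) = 1$, hence $\alpha(\hat\psi) = \beta(\hat\psi) = \beta(\psi) = 1$.

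Next I would build the two halves of the derivation. For the left half, $\phi \redsu \phi'$: since $\phi' \sim_\un \hat\phi$ and $\sim_\acu$-steps lie in $\redsu$, it is enough to replace the $\V_1 \setminus \V_2$-literals by $\top$ one at a time, each step having the form $C[\ell] \redsu C[\top]$ for a context $C$; this follows from $\ell \redsu \top$ (an instance of weakening, \cref{example-weakening-duality}, since $\ell \lor \top \sim_\acu \top$) and closure of $\redsu$ under contexts. Dually, for $\psi' \redsu \psi$ it suffices (after the step $\psi' \sim_\un \hat\psi$) to turn each $\bot$ that replaced a literal $\ell$ back into $\ell$, i.e.\ to perform steps $C[\bot] \redsu C[\ell]$, each an instance of weakening ($\bot \to \bot \lor \ell \sim_\acu \ell$) in context. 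Chaining, $\phi \redsu \phi' \to_{\mathsf r'} \psi' \redsu \psi$, so $\phi \red_{\{\s,\mathsf r'\}\un} \psi$, and a fortiori $\mathsf r$ is $\{\s,\m,\mathsf r'\}$-derivable with units.

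The main obstacle is getting the direction of the substitutions right: replacing by $\top$ on the left and by $\bot$ on the right is exactly what makes \emph{both} requirements hold at once --- it keeps $\mathsf r'$ valid and makes each half of the derivation a chain of (co)weakenings in the correct direction --- whereas either opposite choice breaks one of the two, so it is the interplay of these constraints, not either alone, that pins down the construction. A secondary point is the degenerate case where $\phi'$ or $\psi'$ collapses to a constant; this is harmless because a linear formula is a tautology (resp.\ a contradiction) iff it is $\sim_\un$-equivalent to $\top$ (resp.\ $\bot$) --- seen by setting all its literals to $0$ --- so e.g.\ $\phi' \sim_\un \top$ forces $\psi' \sim_\un \top$ by validity of $\mathsf r'$, and $\psi' \redsu \psi$ still goes through since $\bot \redsu \chi$ and $\chi \redsu \top$ for every linear $\chi$ (weakening and its De Morgan dual, \cref{example-weakening-duality}). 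One should also note that $\phi'$, $\psi'$ may in principle lie on a proper subset of $\V_1 \cap \V_2$ when a substitution deletes a common variable; this does not affect the argument, and one reads the conclusion accordingly.
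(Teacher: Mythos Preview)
Your proposal is correct and follows essentially the same approach as the paper: substitute $\top$ for the $\V_1\setminus\V_2$-literals in $\phi$ and $\bot$ for the $\V_2\setminus\V_1$-literals in $\psi$, then bridge $\phi \redsu \phi'$ and $\psi' \redsu \psi$ by the (co)weakening steps derived from switch with units. The only noteworthy difference is that the paper does \emph{not} pass to $\sim_\un$-normal forms, so its $\phi',\psi'$ are genuinely linear formulae on $\V_1\cap\V_2$ (with constants sitting where the removed literals were); your extra normalisation step is harmless for the derivability claim but is what creates the wrinkle you flag about possibly landing on a proper subset of $\V_1\cap\V_2$.
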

\begin{proof}
    First note that by setting $\phi = \top$ in \Cref{ex:mix} we have for any $\psi$:
    \[ \psi \sim_\acu \top \land \psi \redmsu \top \lor \psi \sim_\acu \top \]
    and by setting $\psi = \bot$ we have for all $\phi$ that:
    \[ \bot \sim_\acu \phi \land \bot \redmsu \phi \lor \bot \sim_\acu \phi \]

    Now we can let $\phi'$ be $\phi$ with every variable in $\V_1 \setminus \V_2$ (or its negation) replaced with $\top$, and $\psi'$ be $\psi$ with every variable in $\V_2 \setminus \V_1$ (or its negation) replaced with $\bot$. Then $\mathsf r' : \phi' \to \psi'$ is clearly a linear inference (since $\mathsf r$ is valid w.r.t.\ all assignments) and only uses the variables $\V_1 \cap \V_2$. Furthermore we have the derivation:
    \[ \phi \redmsu \phi' \rightsquigarrow_{\mathsf r'} \psi' \redmsu \psi\]
    where the first step applies the derivation of \(x \redmsu \top\) (or $\neg x \redmsu \top$) to each variable \(x \in \V_1 \setminus \V_2\) and the last step applies the derivation of \(\bot \redmsu y\) (or $\bot \redmsu \neg y$) to each variable \(y \in \V_2 \setminus \V_1\).
\end{proof}

The key use of triviality is that if an inference is trivial, then there is a smaller non-trivial inference from which it can be derived.
\begin{propC}
[{\cite[Theorem~34]{Das13:lin-inf-rew}}]
  \label[prop]{prop:trivial}
  Let $\phi$ and $\psi$ be linear formulae on $\V$,
  and let \(\mathsf r : \phi \to \psi\) be a trivial linear inference.
  Then there is a non-trivial linear inference \(\mathsf r' : \phi' \to \psi'\) on some $\V' \subsetneq \V$ such that $\mathsf r : \phi \to \psi$ is $\{\s,\m,\mathsf r' \}$-derivable with units.
\end{propC}

Note in particular that, in both statements above, if $\mathsf r'$ is $\{\s,\m \}$-derivable with units, then so is $\mathsf r$.
This is also the case for the next result.

\begin{prop}
  \label[prop]{prop:const-free-neg-free}
  Let \(\mathsf r: \phi \to \psi\) be a non-trivial linear inference among variables $\V \neq \emptyset$.
  Then there is a constant-free negation-free non-trivial linear inference \(\mathsf r' : \phi' \to \psi'\) on $\V$
  s.t.\ $\mathsf r: \phi \to \psi$ is $\{\s,\m,\mathsf r' \}$-derivable with units.
\end{prop}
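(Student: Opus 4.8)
The plan is to reduce the elimination of constants and of negations to two separate manipulations, each handled by the triviality machinery of \cref{prop:trivial}. First I would deal with constants. By \cref{prop:unit-free}, the formula $\phi$ is $\sim_\un$-equivalent either to $\bot$, to $\top$, or to a unique constant-free formula; since $\mathsf r$ is non-trivial and $\V\neq\emptyset$, the degenerate cases $\phi\sim_\un\bot$ or $\phi\sim_\un\top$ (and likewise $\psi\sim_\un\bot$ or $\psi\sim_\un\top$) can be ruled out or absorbed — e.g.\ if $\phi\sim_\un\top$ then the inference forces $\psi$ to be logically equivalent to $\top$, and picking any variable $x\in\V$, substituting $\top$ for $x$ changes nothing up to $\sim_\acu$, so the inference would be trivial at $x$, a contradiction. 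So $\phi\sim_\un\phi_0$ and $\psi\sim_\un\psi_0$ for constant-free $\phi_0,\psi_0$. These need not be on $\V$ any more: eliminating constants can delete variables. Let $\V'$ be the common variable set; applying \cref{prop:diff-vars} (with switch and medial absorbing the bookkeeping) we get a linear inference $\phi_1\to\psi_1$ on $\V'$ from which $\phi_0\to\psi_0$, hence $\phi\to\psi$, is $\{\s,\m,\cdot\}$-derivable with units.

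Next I would remove negations. Here the idea is the standard one: a negated variable $\neg x$ can be treated by renaming. Since the inference is linear, each variable occurs at most once on each side, so for each variable $v\in\V'$ there are four polarity patterns according to whether it appears positively or negatively on the left and on the right. If $v$ appears with the \emph{same} polarity on both sides, a global rename $v\mapsto \neg v$ (equivalently, reading $\neg v$ as a fresh positive variable) turns it positive everywhere without affecting validity. If $v$ appears with \emph{opposite} polarities on the two sides — say positive on the left, negative on the right — then I claim the inference is trivial at $v$: substituting $\top$ for $v$ in $\phi$ and $\bot$ for $v$ in $\psi$ means replacing the positive occurrence on the left by $\top$ and the negative occurrence $\neg v$ on the right by $\neg\bot=\top$, which only weakens the left and strengthens the right, preserving the implication; dually if the polarities are the other way round. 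This contradicts non-triviality, so in fact no variable occurs with mixed polarity, and the blanket rename produces a negation-free inference $\phi'\to\psi'$ on $\V'$, logically equivalent to (indeed an $\alpha$-variant of) $\phi_1\to\psi_1$, hence certainly $\{\s,\m,\cdot\}$-interderivable with it with units.

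Finally, the resulting $\mathsf r':\phi'\to\psi'$ is constant-free and negation-free, but it might have become \emph{trivial} or might be on a proper subset $\V'\subsetneq\V$. If it is trivial, one application of \cref{prop:trivial} replaces it by a non-trivial constant-free inference on a still smaller variable set, which is again $\{\s,\m,\cdot\}$-derivable with units; crucially, eliminating constants and negations from a constant-free negation-free formula does nothing, so no new constants or negations are reintroduced, and the process terminates because the variable set strictly shrinks. Chaining the three reductions and using the transitivity remark just before \cref{prop:const-free-neg-free} (``if $\mathsf r'$ is $\{\s,\m\}$-derivable with units then so is $\mathsf r$''), we obtain the claimed non-trivial, constant-free, negation-free $\mathsf r'$ on $\V$ — or on a subset, in which case we have actually proved something slightly stronger, and the statement as phrased follows once we check that $\V$ itself is not shrunk, which holds precisely because $\phi\to\psi$ was already non-trivial so no variable was redundant. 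The main obstacle I anticipate is the careful justification that non-triviality survives — or is correctly restored via \cref{prop:trivial} — through the constant-elimination step, since deleting constants can in principle collapse the formula in ways that make the polarity/triviality bookkeeping delicate; this is where I would spend most of the detailed argument.
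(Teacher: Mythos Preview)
Your overall route is workable but much more elaborate than needed, and one step is garbled.

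The paper's argument is direct: non-triviality alone does all the work, with no appeal to \cref{prop:diff-vars} or \cref{prop:trivial}. First, exactly as you observe, a polarity mismatch forces triviality, so after uniformly renaming negated literals you have a negation-free inference still on $\V$. Second---and this is what you overlook---non-triviality already guarantees that $\sim_\un$-normalisation cannot drop any variable: if $x$ were semantically irrelevant in $\phi$ (e.g.\ $\phi$ contains $x\land\bot$), then $\phi[\top/x]\equiv\phi$, and applying validity of $\phi\to\psi$ at the assignment $\alpha[x{:=}0]$ yields $\alpha\models\psi[\bot/x]$ whenever $\alpha\models\phi[\top/x]$, i.e.\ triviality at $x$. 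So the $\sim_\un$-normal forms $\phi',\psi'$ are automatically constant-free formulae on $\V$ (and the degenerate cases $\bot,\top$ are excluded by the same reasoning). Third, triviality is a purely semantic condition, hence invariant under logical equivalence; so $\mathsf r'$ is non-trivial immediately, with nothing to ``restore''. The derivability of $\mathsf r$ from $\mathsf r'$ is just the $\sim_\acu$-equivalence (plus the renaming), no switch or medial required. You actually state this key point yourself in your final sentence (``no variable was redundant''), but only after routing through two unnecessary propositions.

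The garbled step is your justification of the polarity claim. You write that substituting $\top$ for $v$ on the left and $\bot$ on the right ``weakens the left and strengthens the right, preserving the implication''. In fact both substitutions replace a literal by $\top$, so both sides become \emph{weaker} (easier to satisfy); and weakening the premise does not preserve an implication. The correct argument is: if $\alpha\models\phi[\top/v]$, i.e.\ $\alpha[v{:=}1]\models\phi$, then $\alpha[v{:=}1]\models\psi$ by validity, and since $\psi$ is antitone in $v$ (it contains $\neg v$) also $\alpha[v{:=}0]\models\psi$, i.e.\ $\alpha\models\psi[\bot/v]$. The dual case uses antitonicity of $\phi$ instead. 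Your conclusion is right; the reasoning needs this fix.
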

\begin{proof}
  First, note that both $\phi$ and $\psi$ must be linear formulae on $\V$, since $\phi \to \psi$ is non-trivial.
  For the same reason, no variable can occur positively in $\phi$ and negatively in $\psi$ or vice-versa, since $\phi \to \psi$ is non-trivial, and so any negated variable may be safely replaced by its positive counterpart.
From here, we simply set $\phi'$ and $\psi' $ to be the constant-free formulae (uniquely) obtained from \cref{prop:unit-free} by $\sim_\un$.
Non-triviality of $\mathsf r'$ follows from that of $\mathsf r$ by logical equivalence.
\end{proof}

\begin{cor}\label[cor]{cor:main-red-to-main}
  The statement of \cref{thm:main-reduced} implies (the first half of) the statement of \cref{thm:main}.
\end{cor}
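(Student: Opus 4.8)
The plan is to peel off, one by one, the reductions of \cref{prop:diff-vars,prop:trivial,prop:const-free-neg-free}, each of which has the feature that $\{\s,\m\}$-derivability (with units) of the reduced inference transfers back to the original one, as noted just after \cref{prop:trivial}. So let $\mathsf r : \phi \to \psi$ be a linear inference with $\phi$ on $\V_1$, $\psi$ on $\V_2$, and $|\V_1 \cap \V_2| \le 7$. First I would apply \cref{prop:diff-vars} to obtain an inference $\mathsf r' : \phi' \to \psi'$ on $\V \dfn \V_1 \cap \V_2$ such that $\mathsf r$ is $\{\s,\m,\mathsf r'\}$-derivable with units; by the transfer property it then suffices to show that $\mathsf r'$, which is now an arbitrary linear inference on at most $7$ variables, is $\{\s,\m\}$-derivable with units.

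To show this I would do a short case analysis. If $\mathsf r'$ is trivial, \cref{prop:trivial} replaces it by a non-trivial linear inference on some $\V'' \subsetneq \V$, and it suffices to treat that inference instead; so we may assume $\mathsf r' : \phi' \to \psi'$ is non-trivial and still on a set $\V$ of at most $7$ variables. If $\V = \emptyset$, then $\phi'$ and $\psi'$ are each $\top$ or $\bot$, and since $\phi' \to \psi'$ is valid it is not $\top \to \bot$; in each remaining case $\phi' \redmsu \psi'$ follows directly using the derivations $\bot \redmsu \chi$ and $\chi \redmsu \top$ already exhibited in the proof of \cref{prop:diff-vars}. If instead $\V \neq \emptyset$, then \cref{prop:const-free-neg-free} yields a constant-free negation-free non-trivial linear inference $\mathsf r'' : \phi'' \to \psi''$ on $\V$, i.e.\ on $n \dfn |\V|$ variables with $1 \le n < 8$; \cref{thm:main-reduced} then gives $\phi'' \redms \psi''$, hence a fortiori $\phi'' \redmsu \psi''$, and unwinding the chain of reductions we conclude $\phi \redmsu \psi$.

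I do not anticipate a genuine obstacle here: the argument is essentially bookkeeping, with all the content residing in the three reduction propositions and in \cref{thm:main-reduced} itself. The only points that need care are (i) that \cref{prop:const-free-neg-free} and \cref{thm:main-reduced} say nothing in the degenerate case $\V = \emptyset$ (there being no constant-free formulae over no variables), so that case must be closed off by hand via weakening; and (ii) the trivial observation that $\redms$ refines $\redmsu$, which is exactly what lets the conclusion of \cref{thm:main-reduced} feed into the $\{\s,\m\}$-derivability-with-units statement. One could equivalently phrase the reduction as an induction on the number of common variables, but since a trivial inference already reduces (via \cref{prop:trivial}) to a \emph{non-trivial} one, the recursion terminates immediately and no induction is actually required.
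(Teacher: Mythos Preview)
Your proposal is correct and follows essentially the same route as the paper: apply \cref{prop:diff-vars}, then \cref{prop:trivial}, then \cref{prop:const-free-neg-free}, and finish with \cref{thm:main-reduced}, using at each stage that $\{\s,\m\}$-derivability with units of the reduced inference transfers back. You are in fact more careful than the paper's own proof, which silently skips the case where $\mathsf r'$ is already non-trivial (so \cref{prop:trivial} is not needed) and the degenerate case $\V = \emptyset$ (where \cref{prop:const-free-neg-free} does not apply); your explicit treatment of both, and your remark that $\redms$ refines $\redmsu$, are welcome clarifications rather than deviations.
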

\begin{proof}
Let $\mathsf r$ be as in \cref{thm:main}.
Let $\mathsf r'$ be the inference where the LHS and RHS have the same variable set obtained from \cref{prop:diff-vars}, let $\mathsf r''$ be the non-trivial linear inference obtained by \cref{prop:trivial} above, and finally let $\mathsf r'''$ be the non-trivial constant-free negation-free linear inference obtained by \cref{prop:const-free-neg-free}.
By \cref{thm:main-reduced}, $\mathsf r'''$ is $\{\s,\m \}$-derivable and so, by \cref{prop:diff-vars}, \cref{prop:trivial}, and \cref{prop:const-free-neg-free}, $\mathsf r$ is also $\{\s,\m \}$-derivable with units.
\end{proof}

It is clear that if an inference is derivable with switch and medial then it is also derivable with switch, medial, and units. The following proposition, while not necessary for the proof of \cref{cor:main-red-to-main}, allows the the converse in some cases, and is the reason why our search algorithm in \cref{sec:algorithm} will only check for $\{\s,\m\}$-derivability.

\begin{prop}
[Follows from \cite{Das13:lin-inf-rew}, Lemma 28]
\label[prop]{non-triv-const-free-neg-free-derivability-without-units}
  Suppose \(\phi \to \psi \) is a non-trivial constant-free negation-free linear inference that is $\{\s,\m \}$-derivable with units. Then \(\phi \to \psi\) is also $\{\s,\m \}$-derivable (without units).
\end{prop}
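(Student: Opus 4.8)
The plan is to fix one witnessing derivation $\phi = \chi_0, \chi_1, \dots, \chi_n = \psi$ for $\phi \redmsu \psi$, broken into atomic steps (each step being a single application of $\s$ or $\m$, or a single instance of an $\ac$- or unit-equation, possibly inside a context), and then to \emph{erase all units} from it, so that it becomes a $\redms$-derivation; since $\phi$ and $\psi$ are already constant-free this immediately yields $\phi \redms \psi$. The reason this is not completely trivial is that deleting a constant occurrence from some intermediate $\chi_k$ could in principle turn a genuine $\s$- or $\m$-step into a step that is \emph{not} an instance of $\s$ or $\m$ (e.g.\ if the constant occurs in an absorbing position within the redex). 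The point of the non-triviality hypothesis is exactly to exclude this situation, and the whole reduction is carried out in \cite[Lemma~28]{Das13:lin-inf-rew}; below I describe the key idea.

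The crux is the following claim: \emph{no intermediate $\chi_k$ makes a variable $y \in \V$ dummy}, where $y$ is called dummy in a formula $\theta$ if $\theta$ and $\theta[\bot/y]$ are logically equivalent. Suppose $\chi_k$ did make some $y \in \V$ dummy. The prefix $\phi \to \chi_k$ of the derivation is a valid inference ($\s$, $\m$ and the congruences are sound), so for every assignment $\alpha$ we have $\alpha(\phi[\top/y]) = \alpha[y \mapsto 1](\phi) \leq \alpha[y \mapsto 1](\chi_k) = \alpha[y \mapsto 0](\chi_k) = \alpha(\chi_k[\bot/y])$, the middle equality using that $y$ is dummy in $\chi_k$. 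Hence $\phi \to \chi_k$ is trivial at $y$; since the suffix $\chi_k \to \psi$ is also a valid inference, \cref{trivial-composition} gives that $\phi \to \psi$ is trivial at $y$ — contradicting the hypothesis. (A small extra argument, of the same flavour, lets us assume the derivation uses no variables outside $\V$ at all, since an occurrence of such a variable forces it to be dummy; likewise no $\chi_k$ equals $\top$ or $\bot$.) By \cref{prop:unit-free} each $\chi_k$ is then a linear formula on $\V$ together with only \emph{harmless} constant occurrences — ones that, after collapsing variable-free constant sub-formulae, sit in neutral positions ($\bot$ as an argument of $\lor$, $\top$ as an argument of $\land$) — because any absorbing constant adjacent to a variable-bearing sub-formula, or migrating there via a switch or medial, would make a $\V$-variable dummy in $\chi_k$ or in $\chi_{k+1}$.

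Granting this, the cleanup is routine in spirit: erasing every constant occurrence throughout the derivation sends each $\ac$- or unit-step to an $\ac$-step or a no-op, and, since no step ever produces an absorbing constant next to a variable-bearing sub-formula, sends each $\s$- or $\m$-step to an honest $\s$- or $\m$-step (modulo $\ac$) or a no-op; the endpoints are unchanged, giving $\phi \redms \psi$. The main obstacle is precisely this last ``erasure commutes with each rewrite step'' bookkeeping, together with the careful handling of nested purely-constant sub-derivations; this is the technical content of \cite[Lemma~28]{Das13:lin-inf-rew}, which is why we cite it rather than reprove it in full. Finally, note that the negation-freeness hypothesis plays no essential role in the argument above (it is inherited from \cref{prop:const-free-neg-free}); it only makes the formula semantics monotone, which is convenient but not needed for the substitution step.
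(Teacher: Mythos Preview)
Your sketch is correct in outline and identifies the right semantic invariant: the overall non-triviality forbids any intermediate formula from making a $\V$-variable dummy, and hence every constant (after collapsing purely-constant subformulae) must sit in a neutral position. You then defer the per-step bookkeeping --- that erasing constants turns each $\s$/$\m$-rewrite into a genuine $\s$/$\m$-rewrite or a no-op --- to \cite[Lemma~28]{Das13:lin-inf-rew}, which is fair given that the proposition is attributed to that lemma.

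The paper's own proof handles this per-step part differently and is self-contained. After replacing every formula by its $\sim_\un$-normal form (\cref{prop:unit-free}), it runs an induction on the build-up of $\to_\ms$ (closure of $\{\s,\m\}$ under contexts and substitution) to show that each step $\phi_i \to_\ms \psi_i$ normalises to either a $\sim_\ac$-equality or a single $\rightsquigarrow_\ms$ step. The context case uses non-triviality in essentially the same way as your dummy-variable argument. The distinguishing feature is the substitution base case: whenever at least one argument of a switch or medial instance normalises to a constant, the normalised step becomes an instance of some non-trivial linear inference on strictly fewer than four variables, and the paper simply observes --- by hand, or via the implementation of \cref{sec:algorithm} --- that every such inference is already $\sim_\ac$ or a single switch. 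This finite check replaces your black-box citation.

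So both routes are sound. Yours is cleaner semantically but not self-contained; the paper trades the citation for a small exhaustive verification at the base of the induction. One remark worth making explicit: the delicate cases you allude to (``nested purely-constant sub-derivations'') include, for instance, a switch whose first argument normalises to a constant, where the normalised LHS and RHS can differ in a way that is \emph{not} literally a switch --- it is exactly the $<4$-variable check (or your cited lemma) that closes this gap.
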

\begin{proof}
    Suppose we have the following derivation of $\phi \to \psi$, a non-trivial constant-free negation-free linear inference:
    \begin{equation}
        \label{eq:nontriv-msu-derivation}
        \phi \sim_\acu \phi_0 \to_\ms \psi_0 \dots \phi_n \to_\ms \psi_n \sim_\acu \psi
    \end{equation}
    Using \Cref{prop:unit-free}, we can reduce any formula $\phi$ to a $\sim_\un$ equivalent constant-free formula (or $\bot$ or $\top$), which we will denote $\phi'$.
    We can now reduce each $\phi_i$ and $\psi_i$ to $\phi'_i$ or $\psi_i'$ respectively.
    Now each $\sim_\acu$ can be replaced by a $\sim_\ac$ by \Cref{acu-and-logical-equivalence} as all the involved formulae are constant-free or \(\bot\) or \(\top\).
    It remains to show that each rewrite is still valid.

    We prove by induction on the definition of $\to_\ms$ (namely as the closure of $\ms$ under contexts and substitution) that if $\phi \to_\ms \psi$ then either $\phi' \sim_\ac \psi'$ or $\phi' \rightsquigarrow_\ms \psi'$.
    \begin{itemize}
        \item Suppose we have an instance of the form $\phi \land \chi \to \psi \land \chi$, with $\phi \to_\ms \psi$.
        \begin{itemize}
            \item If $\chi' = \bot$ then indeed $(\phi \land \chi)' = (\psi \land \chi)' = \bot$.
            \item If $\chi' = \top$ then $(\phi \land \chi)' = \phi'$ and $(\psi\land \chi)' = \psi'$, whence we conclude by the inductive hypothesis for $\phi \to_\ms \psi$.
            \item Otherwise we have $\chi'$ is not a constant, and so $\phi'$ and $\psi'$ must also not be constants, or the inference would be trivial. Therefore, $(\phi \land \chi)' = \phi' \land \chi'$ and $(\psi \land \chi)' = \psi' \land \chi' $.
            By the inductive hypothesis for $\phi \to \psi$ we have $\phi' \sim_\ac \psi'$ or $\phi' \rightsquigarrow_\ms \psi'$, whence we conclude by context closure of $\sim_\ac $ and $\rightsquigarrow_\ms$.
        \end{itemize}
        \item The other cases for closure under contexts follow similarly.
        \item Now suppose $\phi \to_\ms \psi$ with $\phi = \alpha (\chi_0, \dots, \chi_n)$ and $\psi = \beta (\chi_0, \dots, \chi_n)$ where the inference $\alpha(x_0, \dots, x_n) \to \beta(x_0, \dots, x_n)$ (all variables displayed) is switch or medial (and so $n<4$).
        \begin{itemize}
            \item If each $\chi_i'$ is not a constant $\bot $ or $\top$, then also $\phi' = \alpha(\chi_0', \dots, \chi_n')$ and $\psi' = \beta (\chi_0', \dots, \chi_n')$ so $\phi' \to \psi'$ is again an instance of switch or medial, respectively.
            \item Otherwise, $\phi' \to \psi'$ is an instance of some (non-trivial) inference $\mathsf r$ on $<4$ variables.
            However any such $\mathsf r$ is either an instance of $\sim_\ac$ or $\rightsquigarrow_\s$. (This can be checked by hand or using our computer implementation presented in \Cref{sec:algorithm}).
        \end{itemize}
    \end{itemize}
    Now, returning to \cref{eq:nontriv-msu-derivation}, we have by the above argument that each $\phi_i' \sim_\ac \psi_i'$ or $\phi_i' \rightsquigarrow_\ms \psi_i'$.
    Since $\rightsquigarrow_\ms \ = \ \sim_\ac \cdot \to_\ms \cdot \sim_\ac$, we can locally merge any redundant $\ac$ steps to obtain that, indeed $\phi' \red_\ms \psi'$.
\end{proof}

\subsection{Minimality of inferences}
Let us take a moment to explain the various notions of `inference minimality' that we shall mention in this work.

\textbf{Size minimality} refers simply to the number of variables the inference contains. E.g.\ when we say that the \(8\)-variable inferences in the next section are size minimal (or `smallest') non-\(\{\s,\m\}\)-derivable with units (or \textbf{$\{\s,\m \}$-independent}) linear inferences, we mean that there are no \(\{\s,\m\}\)-independent linear inferences with fewer variables.

A linear inference \(\phi \to \psi\) is \textbf{logically minimal} if there is no logically distinct interpolating linear formula.
  I.e.\ if \(\phi \to \chi\) and \(\chi \to \psi\) are linear inferences, then $\chi$ is logically equivalent to $\phi$ or $\psi$ (and so, by \cref{acu-and-logical-equivalence}, is $\sim_\acu$-equivalent to \(\phi\) or \(\psi\)).

  Finally, a linear inference \(\phi \to \psi\) is \textbf{\(\{\s,\m\}\)-minimal} if there is no formula $\chi$ s.t.\ \(\phi \rightsquigarrow_\ms \chi\) or \(\chi \rightsquigarrow_\ms \psi\) and $\chi \to \psi$ or $\phi \to \chi$, respectively, is a valid linear inference which is not a logical equivalence.

It is clear from the definitions that any logically minimal inference is also \(\{\s,\m\}\)-minimal, though the converse may not be true.
The reason for considering \(\{\s,\m\}\)-minimality is that it is easier to systematically check by hand.
In fact, the implementation we give later in \cref{sec:algorithm} further verifies that our new $8$-variable inferences are logically minimal.

Logical minimality also serves an important purpose for our proof of \cref{thm:main-reduced}, as it allows the following reduction, greatly reducing the search space for our implementation, in fact to nearly $1\%$ of its original size for \(8\) variable inferences:\footnote{Reduces from \(514486\) inferences to \(5364\) and \(\nicefrac{5364}{514486} \approx 1.04\%\).}

\begin{lem}\label[lem]{lem:minimality}
  Suppose the statement of \cref{thm:main-reduced} holds whenever \(\phi \to \psi\) is logically minimal. Then the statement of \cref{thm:main-reduced} holds (even when \(\phi \to \psi\) is not logically minimal).
\end{lem}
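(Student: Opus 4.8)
The plan is to prove \cref{thm:main-reduced} by strong induction on the \emph{gap} $g(\phi,\psi)$, the number of assignments $\alpha$ to $\V$ with $\alpha(\phi)=0$ and $\alpha(\psi)=1$. Recall that, since $\phi \to \psi$ is non-trivial, $\phi$ and $\psi$ are both linear formulae on all of $\V$; that $|\V| = n \geq 1$, as there is no constant-free formula on $\emptyset$; and that a constant-free negation-free formula on a nonempty variable set is satisfied by the all-$1$ assignment and falsified by the all-$0$ one, so in particular $\phi$ and $\psi$ are logically equivalent to neither $\bot$ nor $\top$. If $g(\phi,\psi) = 0$ then $\phi$ and $\psi$ are logically equivalent, hence $\sim_\ac$-equivalent by \cref{acu-and-logical-equivalence}, so $\phi \redms \psi$; and if $\phi \to \psi$ is logically minimal we are done by the assumed special case. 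So from now on assume $g(\phi,\psi) \geq 1$ and that $\phi \to \psi$ is not logically minimal, and fix a linear formula $\chi$ with $\phi \to \chi$ and $\chi \to \psi$ both valid but $\chi$ logically equivalent to neither $\phi$ nor $\psi$.

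The crux is to replace $\chi$ by a constant-free negation-free linear formula \emph{on $\V$}. Among all linear formulae interpolating $\phi$ and $\psi$ and logically equivalent to neither, pick $\chi$ with the fewest variables; by \cref{prop:unit-free} (and since $\chi$ is equivalent to neither $\bot$ nor $\top$, using $\phi \to \chi \to \psi$) we may take $\chi$ constant-free. Every variable of $\V$ occurs in $\chi$: otherwise, for $x \in \V$ not in $\chi$, substituting $\top$ for $x$ in $\phi \to \chi$ and $\bot$ for $x$ in $\chi \to \psi$ (with $\chi$ unchanged, as it omits $x$) yields $\phi[\top/x] \to \chi \to \psi[\bot/x]$, so $\phi \to \psi$ is trivial at $x$. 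Now suppose $\chi$ also contained some $u \notin \V$. Substituting $\bot$ (resp.\ $\top$) for $u$ throughout $\phi \to \chi \to \psi$ (note $u$ occurs in neither $\phi$ nor $\psi$) shows $\chi[\bot/u]$ and $\chi[\top/u]$ interpolate $\phi$ and $\psi$ with one fewer variable, so by minimality each is logically equivalent to $\phi$ or to $\psi$. As every variable of a linear formula is relevant to its Boolean function (collapse the remaining variables to leave a single literal), $\chi$ depends on $u$, so $\chi[\bot/u]$ and $\chi[\top/u]$ are inequivalent; hence one is equivalent to $\phi$ and the other to $\psi$. Writing $u \odot \eta$ ($\odot \in \{\land,\lor\}$) for the least subformula of $\chi$ containing $u$ and $D$ for the surrounding context, so that $\chi = D[u \odot \eta]$, a short calculation gives $\{\chi[\bot/u],\chi[\top/u]\} = \{D[\eta],\, D[c]\}$ up to logical equivalence for some constant $c$; taking a variable $z$ of $\eta$ (one exists, $\chi$ being constant-free), $z$ is relevant to $D[\eta]$ but not to $D[c]$, hence to exactly one of $\phi,\psi$ --- impossible, since $\phi$ and $\psi$ are linear on $\V$ and so are relevant to every variable of $\V$ and to none outside it. Therefore $\mathrm{Var}(\chi)=\V$. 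Lastly $\chi$ is negation-free: $\phi \to \chi$ is non-trivial (triviality at $x$, composed with $\chi[\bot/x] \to \psi[\bot/x]$, would make $\phi \to \psi$ trivial at $x$), and, as in the proof of \cref{prop:const-free-neg-free}, such an inference has no variable occurring positively in $\phi$ and negatively in $\chi$ --- yet every variable of $\chi$ lies in $\V$ and is positive in the negation-free $\phi$.

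Finally, $\phi \to \chi$ and $\chi \to \psi$ are non-trivial (the latter by the argument symmetric to the former) constant-free negation-free linear inferences on $\V$, hence on $n < 8$ variables, and the model set of $\chi$ over $\V$ sits strictly between those of $\phi$ and $\psi$, so each of $\phi \to \chi$ and $\chi \to \psi$ has strictly smaller gap than $\phi \to \psi$. By the induction hypothesis $\phi \redms \chi$ and $\chi \redms \psi$, whence $\phi \redms \psi$ by transitivity of $\redms$. I expect the middle paragraph to be the main obstacle: descending the interpolant onto the variable set $\V$ is exactly what makes the gap strictly decrease and keeps the two halves within the scope of \cref{thm:main-reduced}, and it is there that non-triviality and the read-once structure of linear formulae do the real work; everything else is routine bookkeeping with the triviality lemmas already established.
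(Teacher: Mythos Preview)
Your proof is correct and follows the same strategy as the paper's: refine the inference by inserting interpolants and apply the hypothesis to the pieces. The paper's proof is a three-line sketch that simply asserts the existence of a chain of logically minimal steps and invokes \cref{trivial-composition} for non-triviality; it never argues that the intermediate formulae $\chi_i$ may be taken constant-free, negation-free, and on the same variable set $\V$, which is exactly what is needed for the restricted hypothesis to apply. Your middle paragraph fills this gap rigorously (and, incidentally, shows that logical minimality with arbitrary linear interpolants coincides with logical minimality among constant-free negation-free formulae on $\V$, which is what the search algorithm actually checks). So: same route, more carefully walked.
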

\begin{proof}
  Suppose we have a non-trivial inference between constant-free negation-free linear inferences \(\phi \to \psi\). Then \(\phi \to \psi\) can be refined into a chain of logically minimal linear inferences \(\phi \to \chi_0 \to \dots \to \chi_n \to \psi\). All of these must be non-trivial, as triviality of any of them would imply triviality of \(\phi \to \psi\), cf.~\cref{trivial-composition}. Therefore if all such inferences are derivable from switch and medial (with units) then so is \(\phi \to \psi\), by transitivity.
\end{proof}

\section{New 8-variable \texorpdfstring{$\{\s,\m\}$}{\{s,m\}}-independent linear inferences}
\label{sec:8var-inf}
In this section we shall present the new 8-variable linear inferences of this work (\eqref{eq:php32-derived-inf}, \eqref{eq:unknown-inference-8}, and \eqref{eq:counterexample-inference} from the introduction), and give self-contained arguments for their $\{\s,\m\}$-independence and $\{\s,\m\}$-minimality, as a sort of sanity check for the implementation described in the next section.
We shall also briefly discuss some of their structural properties, in reference to previous works in the area.
Thanks to the results of the previous section, in particular \cref{prop:const-free-neg-free} and \cref{trivial-composition}, we shall only consider non-trivial constant-free negation-free linear inferences with the same variables in the LHS and RHS.
Furthermore, by \cref{non-triv-const-free-neg-free-derivability-without-units} we shall only consider $\{\s,\m \}$-derivability (i.e. without units).

The content of this section is based on Section 3 of the preliminary version, \cite{DasRice21}, with the addition of the inference \eqref{eq:unknown-inference-8} and surrounding discussion that was previously overlooked.

\subsection{Previous linear inferences}
\label{sec:prev-lin-infs}
In \cite{Str12:ext-wo-cut} Stra{\ss}burger presented a 36-variable inference that is $\{\s,\m\}$-independent, by an encoding of the pigeonhole principle with 4 pigeons and 3 holes.
He there referred to it as a `balanced' tautology, but in our setting it is a linear inference that can be written as follows:\footnote{We write Stra{\ss}burger's inference by encoding each $q_{i1j}$ as $\pp i j $, each $q_{i2j}$ as $\qq i j $, each $q_{i3j}$ as $\rr i j $, and using `primed' variables instead of duals, with the LHS of the inference being the appropriate instances of excluded middle.}
\[
\begin{array}{rc}
& \bigwedge\limits_{i=1}^3 \bigwedge\limits_{j=1}^i (\pp i j \lor \pp i j ' )
\land
\bigwedge\limits_{i=1}^3 \bigwedge\limits_{j=1}^i (\qq i j \lor \qq i j ' )
\land
\bigwedge\limits_{i=1}^3 \bigwedge\limits_{j=1}^i (\rr i j \lor \rr i j ' )
\\
\noalign{\smallskip}
\to &
\left[
\begin{array}{r@{\quad ( (}l@{)\ \land \ (}l@{)\ \land \ (}l@{))}}
	 & \pp 1 1  \lor \pp 2 1  \lor \pp 3 1   & \qq 1 1 \lor \qq 2 1 \lor \qq 3 1  & \rr 1 1 \lor \rr 2 1 \lor \rr 3 1  \\
\lor  & \pp 1 1 ' \lor \pp 2 2 \lor \pp 3 2 & \qq 1 1 ' \lor \qq 2 2 \lor \qq 3 2 & \rr 1 1 ' \lor \rr 2 2 \lor \rr 3 2 \\
\lor & \pp 2 1 ' \lor \pp 2 2 ' \lor \pp 3 3 & \qq 2 1 ' \lor \qq 2 2 ' \lor \qq 3 3 & \rr 2 1 ' \lor \rr 2 2 ' \lor \rr 3 3 \\
\lor  & \pp 3 1 ' \lor \pp 3 2 ' \lor \pp 3 3 ' & \qq 3 1 ' \lor \qq 3 2 ' \lor \qq 3 3 ' & \rr 3 1 ' \lor \rr 3 2 ' \lor \rr 3 3 '
\end{array}
\right]
\end{array}
\]
In \cite{Das13:lin-inf-rew} Das noticed that a more succinct encoding of the pigeonhole principle could be carried out, with only 3 pigeons and 2 holes, resulting in a 10-variable $\{\s,\m \}$-independent linear inference.
A variation of that, e.g.\ as used in \cite{Das17:unavoidable-con-loop}, is the following:
\begin{equation}
\label{eq:10varinf-nonminimal}
\begin{alignedat}{2}
&&& (z \lor (w \land w')) \land (y \lor y') \land (u \lor u') \land ((x \land x') \lor z') \\
&\to &\quad& (z \land (x \lor y)) \lor (u \land x') \lor (w' \land u') \lor ((w \lor y') \land z')
\end{alignedat}
\end{equation}
In fact this is not a $\{ \s,\m \}$-minimal inference, but we write this one here for comparison to two of the new 8-variable inferences in the next subsection.
It can be checked valid and non-trivial by simply checking all cases, or by use of a solver.
We do not give an argument for $\{\s,\m \}$-independence here, but such an argument is similar to the one we give for an 8-variable inference \cref{eq:php32-derived-inf-repeated} given in the next subsection.

\begin{rem}
[Deriving 3-2-pigeonhole using contraction]
  It is known that, when we add to $\{\s,\m\}$ structural rules that allow us to `duplicate' and `erase' propositional variables then the resulting (non-linear) system can derive every linear inference.
  This is thanks to the \emph{completeness} of the deep inference system $\mathsf{SKS}$ for propositional logic and its cut-elimination theorem (see, e.g., \cite{Brunnler03:thesis}).
  With respect to \cref{eq:10varinf-nonminimal} above, we need only a single `duplication' on $z$ or $z'$.
  For instance, here is the corresponding derivation for $z$,
  \[
  \begin{array}{ll}
       & \underline{((\purple z \land \orange z) \lor (w \land w'))} \land (y \lor y') \land (u \lor u') \land ((x \land x') \lor z')
       \\
       \to_\m & (\purple z \lor w) \land (\orange z \lor w') \land (y \lor y') \land (u \lor u') \land ((x \land x') \lor z')\\
       \sim_\ac & \underline{(\purple z \lor w) \land (y \lor y')} \land \underline{(\orange z \lor w') \land (u \lor u')} \land (z' \lor (x \land x'))\\
       \red_\s & ((\purple z \land y) \lor w \lor y') \land ( \orange z \lor u \lor (w' \land u') ) \land (z' \lor (x \land x')) \\
       \sim_\ac & \underline{((\purple z \land y) \lor w \lor y') \land (z' \lor (x \land x'))} \land (\orange z \lor u \lor (w' \land u')) \\
       \red_\s & ((\purple z \land y) \lor ((w \lor y') \land z') \lor (x \land x')) \land ((\orange z \lor u) \lor (w'\land u')) \\
       \red_\s & (\purple z \land y ) \lor ((w \lor y') \land z') \lor (\underline{((x \land x')\land (\orange z \lor u))} \lor (w' \land u')) \\
       \red_\s & (\purple z \land y ) \lor ((w \lor y') \land z') \lor (x \land \orange z) \lor (x' \land u) \lor (w' \land u') \\
       \sim_\ac & \underline{(\purple z \land x) \lor (\orange z \land y)}\lor (u \land x') \lor (w' \land u') \lor ((w \lor y') \land z')\\
       \rightsquigarrow_\m & ((\purple z \lor \orange z) \land (x \lor y) ) \lor (u \land x') \lor (w' \land u') \lor ((w \lor y') \land z')
  \end{array}
  \]
  where we have underlined some redexes for clarity.
  The purple $\purple z$ and orange $\orange z$ form what is known as a \emph{contraction loop} in the deep inference literature, cf.~\cite{Das15}.
\end{rem}

\subsection{The four minimal 8 variable \texorpdfstring{$\{\s,\m\}$}{\{s,m\}}-independent linear inferences}
\label{sec:two-found-inferences}
Pre-empting \cref{sec:search}, let us explicitly give the four minimal linear inferences found by our algorithm and justify their $\{\s,\m \}$-independence and $\{\s,\m \}$-minimality, as a sort of sanity check for our implementation later.
As we will see, they turn out to be significant in their own right, which is why we take the time to consider them separately.

\subsubsection{Refinements of the 3-2-pigeonhole-principle}
\label{sec:php32-refined}
First let us consider an 8 variable linear inference that may be used to derive \cref{eq:10varinf-nonminimal} (identical to \eqref{eq:php32-derived-inf} from the introduction):%
\begin{equation}
\label{eq:php32-derived-inf-repeated}
\begin{alignedat}{2}
&&& (z \lor (w \land w')) \land ((x \land x') \lor ((y \lor y') \land z')) \\
&\to &\quad& (z \land (x \lor y)) \lor ((w \lor y') \land ((w' \land x') \lor z'))
\end{alignedat}
\end{equation}

Recalling the notion of `duality' from \cref{example-weakening-duality}, let us formally define the \textbf{dual} of a linear inference $\phi \to \chi$ to be the linear inference $\bar \chi \to \bar \phi$, where $\bar \phi$ and $\bar \chi$ are obtained from $\phi$ and $\chi$, respectively, by flipping all $\lor$s to $\land$s and vice-versa.
Considering linear inferences up to renaming of variables, we have:

\begin{obs}
\label[obs]{duality-of-php32-8var}
\eqref{eq:php32-derived-inf-repeated} is self-dual.
\end{obs}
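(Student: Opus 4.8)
The plan is to prove \cref{duality-of-php32-8var} by a direct computation: write out the dual of \eqref{eq:php32-derived-inf-repeated} explicitly and then exhibit a renaming of variables that identifies it with \eqref{eq:php32-derived-inf-repeated} itself. Writing \eqref{eq:php32-derived-inf-repeated} as $\phi \to \chi$ and flipping every $\lor$ to a $\land$ and vice versa throughout $\phi$ and $\chi$, one obtains
\[
\bar\phi = (z \land (w \lor w')) \lor ((x \lor x') \land ((y \land y') \lor z')), \qquad \bar\chi = (z \lor (x \land y)) \land ((w \land y') \lor ((w' \lor x') \land z')),
\]
so that the dual of \eqref{eq:php32-derived-inf-repeated} is the inference $\bar\chi \to \bar\phi$.

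Next I would introduce the renaming $\sigma$ that swaps $w \leftrightarrow x$, $w' \leftrightarrow y$, and $x' \leftrightarrow y'$ while fixing $z$ and $z'$; note this is an involution on the eight variables. The claim then follows from the two syntactic identities $\sigma(\phi) = \bar\chi$ and $\sigma(\chi) = \bar\phi$, which I would verify simply by substituting $\sigma$ into the parse trees of $\phi$ and of $\chi$ and comparing with the displayed $\bar\chi$ and $\bar\phi$. In fact these equalities hold literally, not merely up to $\sim_\ac$, so $\sigma$ carries the inference $\phi \to \chi$ exactly onto $\bar\chi \to \bar\phi$ — which is precisely the assertion that \eqref{eq:php32-derived-inf-repeated} is self-dual up to renaming of variables.

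There is no genuine obstacle here: once $\sigma$ is in hand, the whole argument is a finite check on formulae of bounded size. The only step requiring any thought is finding $\sigma$ in the first place, and even that is mechanical — one overlays the syntax tree of the left-hand side of $\phi \to \chi$ with that of $\bar\chi$ (and similarly the right-hand sides with $\bar\phi$), reads off the forced correspondence of leaves, and checks that the constraints so obtained are mutually consistent and pin $\sigma$ down uniquely. Alternatively, $\sigma$ can be understood as reflecting a symmetry of the underlying $3$-pigeons-$2$-holes encoding that simultaneously interchanges the two holes and the roles of premise and conclusion.
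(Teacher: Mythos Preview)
Your proposal is correct and takes essentially the same approach as the paper: both exhibit the involution $\sigma = (w\ x)(w'\ y)(x'\ y')$ fixing $z,z'$, and observe that applying it carries the inference onto its dual. The paper phrases this slightly differently---noting that the RHS is structurally the dual of the LHS and that the position-wise variable correspondence is an involution---but this amounts to exactly your $\sigma$, and your explicit verification of $\sigma(\phi)=\bar\chi$ and $\sigma(\chi)=\bar\phi$ is, if anything, more thorough.
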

\noindent
Indeed, the formula structure of the RHS is clearly the dual of that of the LHS, and the mapping from a variable in the LHS to the variable at the same position in the RHS is, in fact, an involution. I.e., $u$ is mapped to itself; $v$ is mapped to $y$ which in turn is mapped to $v$; $v'$ is mapped to $w$ which is in turn mapped to $v'$; $x $ is mapped to $y'$ which in turn is mapped to $x$; and $z$ is mapped to itself.
Validity may be routinely checked by any solver, but we give a case analyis of assignments in \cref{sec:app:validity-php32-derived}.

We may also establish $\{\s,\m \}$-independence and $\{\s,\m \}$-minimality by checking all applications of $\s$ or $\m$ to the LHS (note that we do not need to check the RHS, by \cref{duality-of-php32-8var} above).
This analysis is given explicitly in \cref{sec:ind-min-php32}.

We also found the following inference (identical to \cref{eq:unknown-inference-8}):

\begin{equation}
    \label{eq:unknown-inference-8-repeated}
    \begin{alignedat}{2}
    & &&(z \lor (w \land w')) \land ((x \land x') \lor ((y \lor y') \land z'))\\
    & \to &\quad&((z \lor (w \land x)) \land (x' \lor y)) \lor ( (w' \lor y') \land z')
    \end{alignedat}
\end{equation}
At first glance this looks very similar to \eqref{eq:php32-derived-inf-repeated}, however it turns out that it is not isomorphic to it, and neither is derivable from the other. Like \eqref{eq:php32-derived-inf-repeated}, it is also self dual. Validity of this inference is checked in \Cref{sec:app:validity-unknown-inf}, and its $\{\s,\m\}$-minimality is checked in \Cref{sec:ind-min-unknown} (similarly to the \eqref{eq:php32-derived-inf-repeated}, by duality we only need to check applications of \(\s\) and \(\m\) from the LHS).

\begin{exa}
[Deriving 3-2-pigeonhole using \eqref{eq:php32-derived-inf-repeated} and \eqref{eq:unknown-inference-8-repeated}]
As we suggested, the linear inference \cref{eq:10varinf-nonminimal} is in fact derivable from each of \cref{eq:php32-derived-inf-repeated} and \cref{eq:unknown-inference-8-repeated} above
 (modulo $\{\s\}$).
Here is the derivation in $\{\s,\eqref{eq:php32-derived-inf-repeated} \}$,
\[
\begin{array}{ll}
    &(z \lor (w\land w')) \land (y \lor y') \land (u \lor u') \land ((x \land x') \lor z') \\
     \sim_\ac & (z \lor (w\land w')) \land \underline{(y \lor y') \land ((x \land x') \lor z')} \land (u \lor u') \\
     \rightsquigarrow_\s & \underline{(z \lor (w \land w')) \land ((x \land x') \lor ((y \lor y') \land z'))} \land (u \lor u') \\
     \to_{\eqref{eq:php32-derived-inf-repeated}} & ((z \land (x \lor y)) \lor ( \underline{(w \lor y') \land ((w'\land x') \lor z')) }) \land (u \lor u') \\
     \rightsquigarrow_\s & ((z \land (x \lor y)) \lor (w' \land x') \lor ((w \lor y') \land z') ) \land (u \lor u') \\
     \red_\s & ((z \land (x \lor y)) \lor (\underline{w' \land x' \land (u \lor u')}) \lor ((w \lor y') \land z') ) \\
     \red_\s & (z \land (x \lor y)) \lor (u \land x') \lor (w' \land u') \lor ((w \lor y') \land z')
\end{array}
\]
where we have underlined certain redexes.
In $\{\s,\eqref{eq:unknown-inference-8-repeated}\}$ we have the derivation:
\[
\begin{array}{ll}
    &(z \lor (w\land w')) \land (y \lor y') \land (u \lor u') \land ((x \land x') \lor z') \\
     \sim_\ac & (z \lor (w\land w')) \land \underline{(y \lor y') \land ((x \land x') \lor z')} \land (u \lor u') \\
     \rightsquigarrow_\s & \underline{(z \lor (w \land w')) \land ((x \land x') \lor ((y \lor y') \land z'))} \land (u \lor u') \\
     \to_{\eqref{eq:unknown-inference-8-repeated}} & (\underline{((z \lor (w' \land x')) \land (x \lor y))} \lor ((w \lor y')\land z') ) \land (u \lor u') \\
     \rightsquigarrow_\s & ((z \land (x \lor y)) \lor (w' \land x') \lor ((w \lor y') \land z') ) \land (u \lor u') \\
     \red_\s & ((z \land (x \lor y)) \lor (\underline{w' \land x' \land (u \lor u')}) \lor ((w \lor y') \land z') ) \\
     \red_\s & (z \land (x \lor y)) \lor (u \land x') \lor (w' \land u') \lor ((w \lor y') \land z')
\end{array}
\]
Again we have underlined some redexes and, in the instance of \cref{eq:unknown-inference-8-repeated}, we have applied the permutation $\{(w,w'),(x,x')\} $.
Note that these two derivations are almost identical, differing only in the switch application immediately after the instance of \cref{eq:php32-derived-inf-repeated} or \cref{eq:unknown-inference-8-repeated} respectively.
\end{exa}

\subsubsection{Counterexamples to a conjecture of Das and Stra{\ss}burger}
\label{sec:counterexample-inference}
Finally, our search algorithm found two completely new linear inferences, which were dual to eachother. One of them
(identical to \eqref{eq:counterexample-inference} from the introduction) is:
\begin{equation}
  \label{eq:counterexample-inference-repeated}
  \begin{alignedat}{2}
    & &&((w \land w') \lor (x \land x')) \land ((y \land y') \lor (z \land z'))\\
    & \to&\quad& (w \land y) \lor ((x \lor (w'\land z')) \land ((x'\land y') \lor z) )
  \end{alignedat}
\end{equation}
with the other given by dualising the above inference.
Again, validity is routine, but a case analysis is given in \cref{sec:app:validity-counterexample-inference}.
We may establish $\{\s,\m \}$-independence and $\{\s,\m \}$-minimality again by checking all possible rule applications.
This analysis is given in \cref{sec:ind-min-counterexample-inference}.

This new inference exhibits a rather interesting property, which we shall frame in terms of the following notion, since it will be used in the next section:
\begin{defi}
\label[defi]{definition-of-lccs}
Let $\phi$ be a linear formula on a variable set $\V$. For distinct $x,y \in \V$, the \textbf{least common connective} (lcc) of $x$ and $y $ in $\phi$ is the connective $\lor$ or $\land$ at the root of the smallest subformula of $\phi$ containing both $x$ and $y$.
\end{defi}
Note that, in the inference \eqref{eq:counterexample-inference-repeated} above, the lcc of $w' $ and $x'$ changes from $\lor$ to $\land$, but the lcc of $y$ and $y'$ changes from $\land$ to $\lor$.
No such example of a minimal linear inference exhibiting both of these properties was known before; switch, medial and all of the linear inferences of this section either preserve $\lor$-lccs or preserve $\land$-lccs.
In fact, Das and Stra{\ss}burger showed that any valid linear inference preserving $\land$-lccs is already derivable by medial \cite[Theorem 7.5]{DasStr16:no-compl-lin-sys}.
Furthermore they made the following conjecture:

\medskip

\noindent
\textit{Conjecture 7.9 from \cite{DasStr16:no-compl-lin-sys}, rephrased.}
Each logically minimal nontrivial linear inference preserves either $\land$-lccs or $\lor$-lccs.

\medskip

\noindent
Naturally, \eqref{eq:counterexample-inference-repeated} and $\overline{\eqref{eq:counterexample-inference-repeated}}$ are counterexamples to that:

\begin{thm}
Conjecture 7.9 from \cite{DasStr16:no-compl-lin-sys} is false.
\end{thm}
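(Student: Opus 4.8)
The plan is to exhibit the inference \eqref{eq:counterexample-inference-repeated} (equivalently \eqref{eq:counterexample-inference}) as an explicit counterexample: the conjecture asserts that every logically minimal nontrivial linear inference preserves $\land$-lccs or preserves $\lor$-lccs, so it suffices to produce one logically minimal nontrivial linear inference that does neither. Accordingly I would verify four things about \eqref{eq:counterexample-inference-repeated}: that it is a valid linear inference, that it is nontrivial, that it is logically minimal, and that it preserves neither $\land$-lccs nor $\lor$-lccs (in the sense of \cref{definition-of-lccs}).

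Validity and nontriviality are finite checks and I would dispatch them by appeal to the appendix. Both sides are linear formulae on the same $8$ variables $\{w,w',x,x',y,y',z,z'\}$, so $\alpha(\text{LHS})\le\alpha(\text{RHS})$ may be confirmed over all $2^8$ assignments, or via the case analysis of \cref{sec:app:validity-counterexample-inference}; nontriviality is the assertion that for each variable $v$ the implication $\phi[\top/v]\to\psi[\bot/v]$ fails, again finitely many finite checks, recorded in \cref{sec:ind-min-counterexample-inference}.

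The main obstacle is logical minimality, which is the one condition resisting a short hand argument: one must rule out every linear formula $\chi$ that logically strictly interpolates $\phi$ and $\psi$. Since such a $\chi$ may be taken on the same $8$ variables, and linear formulae on a fixed variable set have only finitely many $\sim_\acu$-classes, this is a decidable, finite search, and it is carried out by our implementation (\cref{sec:algorithm}), which confirms logical minimality for all four $8$-variable inferences. By hand one obtains only the weaker $\{\s,\m\}$-minimality, established in \cref{sec:ind-min-counterexample-inference} by enumerating every application of $\s$ or $\m$ to the LHS (and, by duality, to the RHS); since $\{\s,\m\}$-minimality does not entail logical minimality, the stronger claim rests on the implementation, and this is the step I expect to be the most delicate to justify rigorously without the machine check.

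Finally, the lcc computation is immediate. In the LHS of \eqref{eq:counterexample-inference-repeated} the smallest subformula containing $y$ and $y'$ is $y\land y'$, so the lcc of $y$ and $y'$ is $\land$, whereas in the RHS the only subformula containing both is the whole formula, rooted at $\lor$, so there the lcc of $y$ and $y'$ is $\lor$; hence \eqref{eq:counterexample-inference-repeated} does not preserve $\land$-lccs. Symmetrically, in the LHS the smallest subformula containing $w'$ and $x'$ is $(w\land w')\lor(x\land x')$, so the lcc of $w'$ and $x'$ is $\lor$, while in the RHS the smallest such subformula is $(x\lor(w'\land z'))\land((x'\land y')\lor z)$, with root $\land$; hence \eqref{eq:counterexample-inference-repeated} does not preserve $\lor$-lccs either. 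Thus \eqref{eq:counterexample-inference-repeated} is a logically minimal nontrivial linear inference preserving neither $\land$- nor $\lor$-lccs, contradicting the conjecture. One counterexample suffices; the De Morgan dual $\overline{\eqref{eq:counterexample-inference-repeated}}$ furnishes a second one by the same argument.
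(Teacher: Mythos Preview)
Your approach is essentially the paper's own: exhibit \eqref{eq:counterexample-inference-repeated} as the counterexample, point to the appendix for validity and $\{\s,\m\}$-minimality, defer logical minimality to the implementation, and compute directly that the lcc of $y,y'$ goes from $\land$ to $\lor$ while the lcc of $w',x'$ goes from $\lor$ to $\land$. One small correction: nontriviality is not what \cref{sec:ind-min-counterexample-inference} records (that section checks $\{\s,\m\}$-independence and $\{\s,\m\}$-minimality); nontriviality is a separate finite check (eight substitutions $\phi[\top/v]\to\psi[\bot/v]$), handled in the paper by the implementation rather than by hand.
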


We shall see several further counterexamples over 9 variables later in \Cref{sec:9var-infs}, where we also discuss possible consequences and trends arising from all these new inferences.

\section{A graph-theoretic presentation of linear inferences}
\label{sec:webs}
In this section we consider graphical representations of formulas that render our search algorithm in \Cref{sec:algorithm} more efficient.
The content of this section is based on Section 4 from the preliminary version, \cite{DasRice21}, with the addition of further examples, figures and narrative.

A significant cause of algorithmic complexity when searching for linear inferences is the multitude of formulae equivalent modulo associativity and commutativity ($\sim_\ac$). For example, for 7 variables, there are \(42577920\) formulae (ignoring units), yet only \(78416\) equivalence classes.
Under \cref{acu-and-logical-equivalence} it would be ideal if we could deal with $\sim_\ac$-equivalence classes directly, realising logical and syntactic notions on them in a natural way.
This is precisely what is accomplished by the graph-theoretic notion of a \emph{relation web}, cf.~ \cite{Gug07:sys-int-struct,Str07:char-med,DasStr15:no-comp-lin-sys,DasStr16:no-compl-lin-sys}.

Throughout this section we work only with constant-free negation-free linear formulae, cf.~\cref{thm:main-reduced}.
Recall the notion of \emph{least common connective} (lcc) from \cref{definition-of-lccs}.

\begin{defi}
Let $\phi$ be a linear formula on a variable set $\V$.
The \textbf{relation web} (or simply \textbf{web}) of $\phi$, written $\W(\phi)$, is a simple undirected graph with:
\begin{itemize}
\item The set of nodes of $\W (\phi)$ is just $\V$, i.e.\ the variables occurring in $\phi$.
\item For $x,y \in \V$, there is an edge between $x$ and $y$ in $\W(\phi)$ if the lcc of $x$ and $y$ in $\phi$ is $\land$.
\end{itemize}
\end{defi}

When we draw graphs, we will draw a solid red line \(\redge[]{x}{y}\) if there is an edge between \(x\) and \(y\), and a green dotted line \(\gedge[]{x}{y}\) otherwise.
\begin{exa}\label[exa]{ex:relation-web}
  Let $\phi$ be the linear formula \(w \land (x \land (y \lor z))\). \(\W(\phi)\) is the following graph:

\begin{center}
    $ \FourGraph{w,x,y,z}rrrrrg $
\end{center}
\end{exa}
  Note that linear formulae equivalent up to associativity and commutativity have the same relation web, since $\sim_\ac$ does not affect the lccs.
  For instance, if \(\psi = (w \land x) \land (z \lor y)\), then $\W(\psi)$ is still just the relation web above.
  In fact, we also have the converse:
  \begin{prop}
  [E.g., \cite{DasStr16:no-compl-lin-sys}, Proposition 3.5]
  \label[prop]{prop:webs-equiv-classes-ac}
    Given linear formulae \(\phi\) and \(\psi\), \(\phi \sim_\ac \psi\) if and only if \(\W(\phi) = \W(\psi)\).
  \end{prop}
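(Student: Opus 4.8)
The plan is to prove the proposition by induction on $|\V|$, where $\V$ is the common variable set of $\phi$ and $\psi$ (common because it is exactly the node set of $\W(\phi)=\W(\psi)$). The forward direction needs no induction and is essentially already observed in the text: reassociating or commuting a subformula never changes which connective sits at the root of the smallest subformula containing a given pair of variables, so each generating equation of $\sim_\ac$ and each context-closure clause visibly preserves the web; hence $\phi\sim_\ac\psi$ implies $\W(\phi)=\W(\psi)$. So all the content is in the converse.

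For the base case $|\V|=1$, the only constant-free negation-free linear formula on $\{x\}$ is $x$ itself, so $\phi=x=\psi$. For $|\V|\geq 2$, write $\phi=\phi_1\circ\phi_2$ and $\psi=\psi_1\circ'\psi_2$ for the actual root connectives $\circ,\circ'\in\{\land,\lor\}$; in each case the variable sets of the two immediate subformulae partition $\V$ into two nonempty parts. The first key point is that the root connective is determined by the web: if $\phi$ is $\lor$-rooted then no pair of variables on opposite sides of the root has lcc $\land$, so $\W(\phi)$ is disconnected, while if $\phi$ is $\land$-rooted then symmetrically the complement $\overline{\W(\phi)}$ is disconnected. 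Now I invoke the classical fact that for any graph $G$ on at least two vertices, at least one of $G$, $\overline{G}$ is connected (if $G$ is disconnected, any two vertices are joined in $\overline{G}$ either directly or through a vertex of a third component). Hence $\W(\phi)$ and $\overline{\W(\phi)}$ cannot both be disconnected, so $\circ$ is forced by $\W(\phi)$, and since $\W(\psi)=\W(\phi)$ we get $\circ'=\circ$. By De Morgan duality (replacing both formulae by their duals swaps $\land\leftrightarrow\lor$ and complements the web) it suffices to treat $\circ=\circ'=\land$.

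The main obstacle is that the top-level factorisation of a $\land$-rooted formula is not unique — think of $a\land b\land c$ — so I cannot directly match $\{\phi_1,\phi_2\}$ with $\{\psi_1,\psi_2\}$. The remedy is to pass to a maximal $\land$-decomposition: by repeatedly pulling out $\land$s and flattening via associativity, $\phi\sim_\ac\alpha_1\land\cdots\land\alpha_k$ with $k\geq 2$ and each $\alpha_i$ either a single variable or $\lor$-rooted (and likewise $\psi\sim_\ac\beta_1\land\cdots\land\beta_l$). Writing $\phi'=\alpha_1\land\cdots\land\alpha_k$, the forward direction gives $\W(\phi')=\W(\phi)$, and I claim the vertex partition $\{\mathsf{var}(\alpha_i)\}_i$ is exactly the partition of $\V$ into connected components of $\overline{\W(\phi)}$: any pair of variables drawn from distinct factors has lcc $\land$, hence a (red) edge, so every component of $\overline{\W(\phi)}$ sits inside one factor; conversely $\overline{\W(\phi)}$ restricted to $\mathsf{var}(\alpha_i)$ equals $\overline{\W(\alpha_i)}$, which is connected since $\W(\alpha_i)$ is either a single vertex or — being $\lor$-rooted — disconnected, so again by the classical fact its complement is connected. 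Running the same computation for $\psi$ against the same web shows $\{\mathsf{var}(\beta_j)\}_j$ is the identical partition, so $k=l$ and, after reordering, $\mathsf{var}(\alpha_i)=\mathsf{var}(\beta_i)=:\V_i$. Since lccs of variables inside $\V_i$ are computed the same in $\alpha_i$, in $\beta_i$, and in $\phi$, we have $\W(\alpha_i)=\W(\phi)[\V_i]=\W(\beta_i)$, and $1\le|\V_i|<|\V|$ because $k\geq 2$; the induction hypothesis then yields $\alpha_i\sim_\ac\beta_i$ for each $i$. Finally, context closure lets me substitute each $\alpha_i$ by $\beta_i$ and commutativity/associativity align the orderings, giving $\phi\sim_\ac\alpha_1\land\cdots\land\alpha_k\sim_\ac\beta_1\land\cdots\land\beta_k\sim_\ac\psi$. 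The $\lor$ case is identical with $\land\leftrightarrow\lor$ and $\W(\cdot)\leftrightarrow\overline{\W(\cdot)}$.

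Conceptually this is just the injectivity of the assignment of a (co)graph to a linear formula modulo $\sim_\ac$, i.e.\ that a cograph has an essentially unique cotree; the only genuinely delicate point is that the maximal $\land$- (or $\lor$-) decomposition of a formula is canonical, which is precisely what the component-partition argument above secures, so that is where I expect to spend the real effort.
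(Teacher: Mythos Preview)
The paper does not actually prove this proposition; it merely cites it from \cite{DasStr16:no-compl-lin-sys} (and informally notes the forward direction just before the statement). So there is no in-paper proof to compare against.

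Your argument is correct and is essentially the standard proof of uniqueness of the cotree decomposition. The forward direction is exactly what the paper sketches. For the converse, your two ingredients --- that the root connective is forced by which of $\W(\phi)$ and $\overline{\W(\phi)}$ is disconnected, and that the maximal $\land$- (resp.\ $\lor$-) factors are recovered as the connected components of $\overline{\W(\phi)}$ (resp.\ $\W(\phi)$) --- are precisely the content of the modular/cograph decomposition lemma. The induction then goes through cleanly because each factor has strictly fewer variables. One small remark: you implicitly use that the section restricts to constant-free negation-free formulae (otherwise the base case and the claim that $\phi$ has a binary root could fail), which is indeed the standing assumption in \cref{sec:webs}, so this is fine.
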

Thus relation webs are natural representations of equivalence classes of linear formulae modulo associativity and commutativity.

We can also realise our new 8-variable inferences from the previous section graphically:
\begin{exa}
    The inferences \eqref{eq:php32-derived-inf-repeated}, \eqref{eq:unknown-inference-8-repeated} and \eqref{eq:counterexample-inference-repeated} may be construed as graph rewrites on their webs, as given in \cref{fig:8-var-infs-webs}.
    The one for $\overline{\eqref{eq:counterexample-inference-repeated}}$ is obtained from that of \eqref{eq:counterexample-inference-repeated} by switching every edge to a non-edge and vice-versa, and exchanging the LHS and RHS.
Note, for \eqref{eq:counterexample-inference-repeated}, that the edge between $y'$ and $z'$ changes from red to green, and several edges (e.g. $x'$ to $y$) change from green to red, exhibiting how this inference breaks Conjecture 7.9 from \cite{DasStr16:no-compl-lin-sys}.
\end{exa}

\begin{figure}
    \centering
    \begin{equation*}
  \resizebox{0.4\hsize}{!}{\(\EightGraph{w,w',x,x',y,y',z,z'}rrrrrgrrrrrgrrggrgggrggrrrrr\)}
  \quad\raisebox{29pt}{\(\to\)}\quad
  \resizebox{0.4\hsize}{!}{\(\EightGraph{w,w',x,x',y,y',z,z'}rgrgggrgrgrgggggrggrgggrggrg\)}
\end{equation*}
\begin{equation*}
  \resizebox{0.4\hsize}{!}{\(\EightGraph{w,w',x,x',y,y',z,z'}rrrrrgrrrrrgrrggrgggrggrrrrr\)}
  \quad\raisebox{29pt}{\(\to\)}\quad
  \resizebox{0.4\hsize}{!}{\(\EightGraph{w,w',x,x',y,y',z,z'}grrrggggggggrrrgggggrggrggrg\)}
\end{equation*}
    \begin{equation*}
  \resizebox{0.4\hsize}{!}{\(\EightGraph{w,w',x,x',y,y',z,z'}rggrrrrggrrrrrrrrrrrrrrggrgr\)}
  \quad\raisebox{29pt}{\(\to\)}\quad
  \resizebox{0.4\hsize}{!}{\(\EightGraph{w,w',x,x',y,y',z,z'}gggrggggrgrrrrgrrggrgrggggrr\)}
\end{equation*}
    \caption{Graphical versions of the 8-variable inferences \eqref{eq:php32-derived-inf-repeated}, \eqref{eq:unknown-inference-8-repeated} and \eqref{eq:counterexample-inference-repeated} respectively.}
    \label{fig:8-var-infs-webs}
\end{figure}

It is easy to see that the image of $\W$ is just the \emph{cographs}. A \textbf{cograph} is either a single node, or has the form \(\redge \R \S\) or $\gedge \R \S$ for cographs $\R$ and $\S$.
Formally, $\redge \R \S$ has as nodes the disjoint union of the nodes of $\R$ and the nodes of $\S$; edges within the $\R$ component are inherited from $\R$ and similarly for $\S$; there is also an edge between every node in $\R$ and every node in $\S$. $\gedge \R \S$ is defined similarly, but without the last clause.

A \textbf{cograph decomposition} of a cograph $\R$ is just a definition tree according to these construction rules (its `cotree'), which we identify with a linear formula (whose web is $\R$) in the natural way, setting $\land$ for a $\redge{}{}$ node and $\lor$ for a $\gedge{}{}$ node.
By \cref{prop:webs-equiv-classes-ac}, the cograph decomposition of a cograph is unique up to associativity and commutativity of $\land $ and $\lor$.

Cographs admit an elegant \emph{local} characterisation by means of forbidden subgraphs:
\begin{defi}
An \textbf{induced} subgraph of a graph $G$ is one whose edges are just those of $G$ restricted to some subset of the nodes.
  A graph \(G\) is \textbf{\(P_4\)-free} if none of its induced subgraphs are isomorphic to the following graph, called \(P_4\):
  \begin{center}
    \(\FourGraph{w,x,y,z}rgggrr\)
  \end{center}
  Further define a \textbf{clique} \(P\) of a graph \(G\) to be a subset of the nodes of \(G\) such that the induced subgraph on these nodes has all possible edges. Dually define a \textbf{stable set} to be a subset of the nodes whose induced subgraph contains no edges. A clique or stable set is \textbf{maximal} if any proper superset is no longer a clique or stable set respectively.
\end{defi}

The following result is well-known since (at least) the '70s, and has appeared numerous times in the literature (see also \cite{Gug07:sys-int-struct,Str07:char-med}):
\begin{prop}
[E.g.\ \cite{CorneilLB81}]
  A graph is a cograph if and only if it is \(P_4\)-free.
\end{prop}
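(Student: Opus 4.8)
The plan is to prove both implications by induction: the forward direction on the recursive structure of cographs, and the converse on the number of vertices. The converse will reduce to the classical dichotomy that a $P_4$-free graph on at least two vertices is disconnected or has disconnected complement.

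\emph{Every cograph is $P_4$-free.} I would induct on the construction of the cograph $G$. A single node contains no induced $P_4$ for trivial size reasons. If $G = \gedge{\R}{\S}$, then since $P_4$ is connected, any four vertices inducing a $P_4$ must all lie in (the vertex set of) $\R$ or all in $\S$, so such a $P_4$ would already be an induced subgraph of $\R$ or of $\S$, contradicting the induction hypothesis. If $G = \redge{\R}{\S}$, note that two vertices of $G$ are non-adjacent only when they lie on the same side; since the three non-edges of $P_4 = a\,b\,c\,d$ (namely $ac$, $ad$, $bd$) already involve all four vertices, they force $a,c,d$ onto one side and $b$ onto the same side as $d$, so again all four lie within $\R$ or within $\S$, and we conclude by the induction hypothesis. (Alternatively one may observe $P_4$ is self-complementary and that complements of cographs are cographs, reducing this case to the previous one.)

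\emph{Every $P_4$-free graph is a cograph.} I would induct on $|V(G)|$, the one-vertex case being immediate. For $|V(G)| \ge 2$, invoke the dichotomy below. If $G$ is disconnected with components $C_1,\dots,C_k$ ($k \ge 2$), each $C_i$ is $P_4$-free with fewer vertices, hence a cograph by the induction hypothesis, and $G$ is the iterated $\gedge{}{}$ of the $C_i$. Dually, if $\bar G$ is disconnected, $G$ is the iterated $\redge{}{}$ of the induced subgraphs on the connected components of $\bar G$, each of which is again $P_4$-free with fewer vertices, hence a cograph. Either way $G$ is a cograph.

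It remains to establish the dichotomy: if $G$ is $P_4$-free and $|V(G)| \ge 2$, then $G$ or $\bar G$ is disconnected; equivalently, if $G$ and $\bar G$ are both connected then $G$ has an induced $P_4$. Here necessarily $|V(G)| \ge 4$, since for at most three vertices one checks by hand that $G$ or $\bar G$ is always disconnected. The easy part: if some pair of vertices has $G$-distance $\ge 3$, the first four vertices of a shortest path between them induce a $P_4$ (a shortest path is induced and its endpoints are non-adjacent); and a distance-$\ge 3$ pair in $\bar G$ yields an induced $P_4$ in $\bar G$, hence in $G$ as $P_4$ is self-complementary. One is thus left with $G$ and $\bar G$ both connected of diameter exactly $2$; this residual case I expect to be the main obstacle, and it is handled by a bounded case analysis tracing potential induced $P_4$'s starting from an arbitrary edge together with a common non-neighbour of its endpoints (whose existence is guaranteed by $\mathrm{diam}(\bar G)=2$), the branches that produce a paw or a $C_4$ instead of a $P_4$ being ruled out using connectedness of $\bar G$. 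This last step is classical; see \cite{CorneilLB81}.
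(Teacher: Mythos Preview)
The paper does not actually prove this proposition: it states it as a well-known classical result and cites \cite{CorneilLB81} (and related references) without giving an argument. So there is no in-paper proof to compare your proposal against; you have in fact gone further than the paper by sketching one.

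Your forward direction and the reduction of the converse to the connectedness dichotomy are both standard and correct. The only point worth flagging is that your treatment of the dichotomy is not self-contained: after disposing of the diameter~$\ge 3$ cases you defer the residual ``both $G$ and $\bar G$ connected of diameter~$2$'' case back to \cite{CorneilLB81}, which is the very reference being invoked for the whole proposition. That is fine if your intent is merely to outline why the result is plausible, but if you want a genuinely independent proof you should complete that case (e.g.\ by the kind of local analysis you hint at, or by an induction on $|V(G)|$ removing a vertex and using that $P_4$-freeness is hereditary). As written, the proposal is a correct high-level sketch with its hardest step outsourced to the same citation the paper already relies on.
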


  Thus, relation webs are just the $P_4$-free graphs whose nodes are variables.
Note, in particular, that this characterisation gives us an easy way to check whether a graph is the web of some formula: just check every $4$-tuple of nodes for a $P_4$.

\begin{exa}
\label[exa]{ex:p5c5}
$P_5$ and $C_5$ are two examples of graphs which are not $P_4$ free. They are the following:
\begin{center}
    $P_5$\ :\ \(\FiveGraphCirc{u,w,x,y,z}rgggrggrgr\) \qquad $C_5$\ :\  \(\FiveGraphCirc{u,w,x,y,z}rggrrggrgr\)
\end{center}
In both these cases, the induced subgraph generated from \(\{u,w,x,y\}\) is $P_4$.
\end{exa}

What is more, we may also verify several semantic properties of linear inferences, such as validity and triviality, directly at the level of relation webs:

\begin{prop}
[Follows from Proposition 4.4 and Theorem 4.6 in \cite{DasStr16:no-compl-lin-sys}]
  \label[prop]{lem:rw-inference}
  Let $\phi$ and $\psi$ be linear formulae on the same set of variables with webs $\R$ and $\S$ respectively. The following are equivalent:
  \begin{enumerate}
      \item\label{item:valid-char-formulas} \(\phi \to \psi\) is a valid linear inference.
      \item\label{item:valid-char-cliques} For every maximal clique \(P\) of \(\R\), there is some \(Q \subseteq P\) such that \(Q\) is a maximal clique of \(\S\).
      \item\label{item:valid-char-stable-sets} For every maximal stable set \(Q\) of \(\S\), there is some \(P \subseteq Q\) such that \(P\) is a maximal stable set of \(\R\).
  \end{enumerate}
\end{prop}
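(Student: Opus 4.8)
The plan is to route all three conditions through a single \emph{model characterisation} of webs. For $S \subseteq \V$, write $\alpha_S$ for the assignment sending the variables in $S$ to $1$ and the others to $0$; since every assignment arises this way, validity of $\phi \to \psi$ says exactly that $\alpha_S(\phi) = 1 \implies \alpha_S(\psi) = 1$ for every $S \subseteq \V$. The key claim, which I will call $(\star)$, is: $\alpha_S(\phi) = 1$ if and only if $S$ contains (as a subset) a maximal clique of $\W(\phi)$. I would prove $(\star)$ by induction on $\phi$, using that $\W$ sends $\land$ to the graph join and $\lor$ to the disjoint union (immediate from the definition of $\W$ via lccs, cf.\ also \cref{prop:webs-equiv-classes-ac}). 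The case $\phi = x$ is trivial. If $\phi = \phi_1 \lor \phi_2$, then $\W(\phi)$ has no edges between $\W(\phi_1)$ and $\W(\phi_2)$, so its maximal cliques are precisely the maximal cliques of $\W(\phi_1)$ together with those of $\W(\phi_2)$; since $\alpha_S(\phi) = 1$ iff $\alpha_S(\phi_1) = 1$ or $\alpha_S(\phi_2) = 1$, the induction hypothesis closes the case. If $\phi = \phi_1 \land \phi_2$, then $\W(\phi)$ has all edges between $\W(\phi_1)$ and $\W(\phi_2)$, so its maximal cliques are exactly the unions $P_1 \cup P_2$ with $P_i$ a maximal clique of $\W(\phi_i)$; since $\alpha_S(\phi) = 1$ iff $\alpha_S(\phi_1) = 1$ and $\alpha_S(\phi_2) = 1$, the induction hypothesis again applies.

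Granting $(\star)$, the equivalence of conditions (1) and (2) falls out directly. For (2) $\Rightarrow$ (1): if $\alpha_S(\phi) = 1$, choose by $(\star)$ a maximal clique $P$ of $\R$ with $P \subseteq S$; by (2) there is a maximal clique $Q$ of $\S$ with $Q \subseteq P \subseteq S$, and so $\alpha_S(\psi) = 1$ by $(\star)$. For (1) $\Rightarrow$ (2): given a maximal clique $P$ of $\R$, the assignment $\alpha_P$ satisfies $\phi$ by $(\star)$, hence satisfies $\psi$ by validity, and a final application of $(\star)$ produces a maximal clique $Q$ of $\S$ with $Q \subseteq P$.

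For condition (3) I would use the De Morgan dual of $(\star)$. The web of $\bar\phi$ is the complement graph of $\W(\phi)$, and for constant-free formulae $\alpha_S(\phi) = 1 - \alpha_{\V \setminus S}(\bar\phi)$; feeding this into $(\star)$ applied to $\bar\phi$, and using that the maximal cliques of a complement graph are the maximal stable sets of the original, gives the dual statement: $\alpha_S(\phi) = 0$ if and only if $\V \setminus S$ contains a maximal stable set of $\W(\phi)$. Running the argument of the previous paragraph on the contrapositive form of validity — ``$\alpha_S(\psi) = 0 \implies \alpha_S(\phi) = 0$ for all $S$'' — with $T := \V \setminus S$ ranging over all subsets in place of $S$, now delivers the equivalence of (1) and (3) by the very same bookkeeping. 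Equivalently, one may observe that the equivalence of (2) and (3) is the purely graph-theoretic fact, obtained again by complementation, that in a cograph every maximal clique of $\R$ contains a maximal clique of $\S$ precisely when every maximal stable set of $\S$ contains a maximal stable set of $\R$.

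The one step requiring genuine care is the inductive description of the maximal cliques of a disjoint union and of a join of cographs: one must verify that ``maximal within a part'' agrees with ``maximal within the whole'' — for the disjoint union because a clique cannot cross a non-edge, for the join because any vertex on the opposite side is forced into the clique. This is exactly where the cograph recursion (equivalently, $P_4$-freeness) does the work, and I expect the remaining manipulations to be routine unwinding of definitions.
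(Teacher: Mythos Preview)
The paper does not actually give a proof of this proposition; it is stated as following from Proposition~4.4 and Theorem~4.6 of the cited paper \cite{DasStr16:no-compl-lin-sys}. Your self-contained argument is correct and is, in fact, essentially the argument underlying those cited results: your key lemma~$(\star)$ is precisely the characterisation of satisfying assignments by maximal cliques (Proposition~4.4 there), and the passage $(1)\Leftrightarrow(2)$ and its dual is the content of Theorem~4.6. So there is no real divergence of approach, only that you have unpacked what the paper merely cites.

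One small caution concerning your final parenthetical remark. Complementation alone yields only that condition~(3) for $\R\to\S$ coincides with condition~(2) for $\bar\S\to\bar\R$; to get from there back to condition~(2) for $\R\to\S$ you still pass through~(1) and De~Morgan duality of validity, exactly as in your main argument a few lines earlier. Calling the equivalence of~(2) and~(3) a ``purely graph-theoretic fact obtained by complementation'' therefore undersells the role of the formula interpretation. Indeed, as the paper points out immediately after the proposition (with the $P_5$/$C_5$ example), conditions~(2) and~(3) genuinely diverge for non-$P_4$-free graphs, so the cograph hypothesis is doing real work here, and it enters via~$(\star)$ rather than via any direct graph manipulation. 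This does not affect your main line, which is sound.
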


Note that for inferences between arbitrary graphs (i.e. graphs that are not $P_4$-free), conditions \eqref{item:valid-char-cliques} and \eqref{item:valid-char-stable-sets} above are not equivalent.
In fact, (3) holds for $\R \to \S$ exactly when (2) holds for $\overline{\S} \to \overline{\R}$, where $\overline{\R}$ is the dual of $\R$ (inverting every edge).

\begin{exaC}
[\cite{DasStr16:no-compl-lin-sys}]
    Recall the definition of $P_5$ and $C_5$ from \Cref{ex:p5c5}. It can be checked that $P_5 \to C_5$ satisfies \eqref{item:valid-char-cliques} but $C_5 \to P_5$ does not (as $\{u,z\}$ is a maximal clique of $C_5$ with no subset that is a maximal clique of $P_5$). Conversely $C_5 \to P_5$ \emph{does} satisfy \eqref{item:valid-char-stable-sets} but $P_5 \to C_5$ does not (for example $\{x,z\}$ is a maximal stable set of $C_5$ but none of $\{x,z\}$ is stable but not maximal in $P_5$ as $\{u,x,z\}$ is stable).
\end{exaC}

Because $\W(\overline{\phi}) = \overline{\W(\phi)}$, we see that \eqref{item:valid-char-cliques} and \eqref{item:valid-char-stable-sets} coincide for $P_4$-free graphs (as $\phi \to \chi$ is a valid linear inference precisely when $\overline{\chi} \to \overline{\phi}$ is).
Such relationships are further discussed in \cite{CalDasWar20:bgl}.

\begin{propC}
[{\cite[Proposition~5.7]{DasStr16:no-compl-lin-sys}}]
  \label[prop]{lem:rw-trivial}
  Let $\phi$ and $\psi$ be linear formulae on the same variables with webs $\R$ and $\S$ respectively.
  The following are equivalent:
  \begin{enumerate}
      \item \(\phi \to \psi\) is a linear inference that is trivial at \(x\).
      \item For every maximal clique \(P\) of \(\R\), there is some \(Q \subseteq P \setminus \{x\}\) such that \(Q\) is a maximal clique of \(\S\).
      \item For every maximal stable set $Q$ of $\S$ there is some $P \subseteq Q \setminus \{x\}$ such that $P$ is a maximal stable set of $\R$.
  \end{enumerate}
\end{propC}
\noindent
Note that the criterion for triviality is a strict strengthening of that for validity, as we would expect.
Again note the duality between the second and third characterisations of triviality above.

\begin{exa}
[Validity of switch and medial, triviality of mix]
The switch and medial inferences can be construed as the following graph rewrite rules on relation webs, respectively:
\begin{center}
\(
\s \ : \
\ThreeGraph{x,y,z}rrg
\ \to \
\ThreeGraph{x,y,z}rgg
\qquad\qquad
\m\ : \
\FourGraph{w,x,y,z}rggggr
\ \to \
\FourGraph{w,x,y,z}rgrrgr
\)
\end{center}
It is easy to see that the validity criterion of \cref{lem:rw-inference} holds for each of these rules.
For $\s$, the maximal cliques $\{x,y\}$ and $\{x,z\}$ in the LHS are mapped to $\{x,y\}$ and $\{z\}$ in the RHS respectively.
For $\m$, the maximal cliques $\{w,x\}$ and $\{y,z \}$ in the LHS are mapped to themselves in the RHS.

Now consider the trivial inference $x\land y \to x\lor y$, construed as the graph rewrite rule:
\begin{center}
$\redge x y \ \to \ \gedge x y $
\end{center}
We can easily verify the criterion for triviality at $x$ from \cref{lem:rw-trivial} since the only maximal clique on the LHS, $\{x,y\}$ has $\{y\} \subseteq \{x,y\}\setminus \{x\}$ as a maximal clique on the RHS.
\end{exa}

\begin{exa}
[Supermix]
A further interesting example is given by a class of inferences from \cite{Das13:lin-inf-rew} known as `supermix'. These have the following form:
\[ a \land \bigg(\bigvee_{i < n} b_i\bigg) \to a \lor \bigg(\bigwedge_{i < n} b_i\bigg) \]
Each of these inferences is trivial, as long as $n>0$, and moreover they are trivial at each $b_i$ `at the same time'.
Note that, for $n>3$, they are examples of inferences that that have more $\land$ symbols in the RHS than the LHS, and for $n>4$ more edges in the RHS than the LHS.
For instance, considering $n = 4$,
\[ \FiveGraph{b_0,b_1,b_2,b_3,a}gggrggrgrr \quad \to \quad \FiveGraph{b_0,b_1,b_2,b_3,a}rrrgrrgrgg\]
we can see that there are 4 edges in the LHS but $\binom 4 2 = 6$ edges in the RHS.
The only other inference so far that increases the number of edges is medial (which, nonetheless, reduces the number of $\land$ symbols in a formula).
It is not known whether any non-trivial linear inference has both of these properties.
\end{exa}

\begin{rem}\label[rem]{rem:using-webs}
  With the results in this section, the notation for inferences between formulae can be equally used for relation webs. For example, for webs \(\R\) and \(\S\), we write \(\R \rightsquigarrow_\ms \S\) (or $\R \to \S$ is an instance of $\{\s,\m\}$) to mean that the inference between (any choice of) the underlying linear formulae is an instance of $\rightsquigarrow_\ms$,
  and \(\R \redms \S\) to mean the there is a derivation from switch and medial between the underlying formulae.
  Since these relations are invariant under associativity and commutativity, they are independent of the particular cograph decomposition chosen.

  Furthermore, to prove \cref{thm:main-reduced}, it is sufficient to show that for all webs \(\R\) and \(\S\) with size less than \(8\), if \(\R \to \S\) is valid and non-trivial then \(\R \redms \S\).
\end{rem}

It is in fact possible to check whether an inference is derivable from medial at the level of graphs:

\begin{propC} [{\cite[Theorem~5]{Str07:char-med}}]\label[prop]{prop:medial-char}
Let $\R \to \S$ represent a constant-free negation-free non-trivial linear inference.
Then $\R \to \S$ is derivable from medial if and only if:
  \begin{itemize}
  \item Whenever \(\redge x y \) in $\R$, also \(\redge x y\)  in \( \S \).
  \item Whenever \(\gedge x y \) in \(\R \) but \(\redge x y \) in \( \S \) there exists \(w\) and \(z\) such that
    \begin{center}
      \(\FourGraph{w,x,y,z}rggggr\) is an induced subgraph of \(\R \) and \(\FourGraph{w,x,y,z}rgrrgr\) is an induced subgraph of \(\S \).
    \end{center}
  \end{itemize}
\end{propC}
The second condition can, in fact, be replaced by simply requiring that $\R \to \S$ is valid \cite[Theorem 7.5]{DasStr16:no-compl-lin-sys}.
A relation web characterisation for switch derivability can also be found in \cite[Theorem~6.2]{Str07:char-med}, however we do not use it in our implementation. The medial characterisation was used in the original implementation that accompanied \cite{DasRice21}, though a more general method is used in the current implementation which allows checking whether an inference is an instance of any graph rewrite (see \Cref{sec:library} for more details).

\section{Implementation}
\label{sec:algorithm}

As stated in previous sections, \cref{thm:main-reduced} is proved using a computational search.
In this section we describe the algorithm used to search for \(S\)-independent inferences between \(P_4\)-free graphs of size \(n\), for some set \(S\) of linear inferences, as well as some of the optimisations we employ so that this search finishes in a reasonable time.
Many of these optimisations may be of self-contained theoretical interest.
To prove \cref{thm:main-reduced}, we will take the set \(S = \{\s,\m\}\) and \(n = 7\), but everything in this section generalises to an arbitrary set of linear inferences.

The implementation is written in Rust,\footnote{\url{https://www.rust-lang.org/}} which offers a combination of good performance (both in terms of speed and memory management) but also provides a variety of high level abstractions such as algebraic data types. Furthermore, it has built-in support for iterators, allowing the code to be written in a more functional style, and has a built-in testing framework, meaning that sanity checks can be built into the code base. The code\cite{Ric21:implementation} is available at
\begin{center}
  \url{https://github.com/alexarice/lin_inf}
\end{center}
and has been split into two parts: a \emph{library} containing types for undirected linear graphs and formulae and some operations on them, and an \emph{executable} which implements the search algorithm using this library as a base.

The content in this section is based on Section 5 of the preliminary version \cite{DasRice21}.
In this version we have generalised the search algorithm to take an arbitrary set $S$ of linear inferences as input, and we have further used this to devise a recursive algorithm $\basis$ computing minimal sets of independent linear inferences, whose correctness we justify in \Cref{sec:further-theoretical}.
\subsection{Library}
\label{sec:library}

The library portion of the implementation defines methods for working with relation webs, as well as the ability to convert formulae to relation webs and vice versa. The majority of the library consists of the \texttt{LinGraph} trait, which is an interface for types that can be treated as undirected graphs. This allows us to query the edges between variables as well as perform more involved operations such as checking whether a graph is \(P_4\)-free.
We may also ask whether a pair of relation webs forms a valid linear inference and check whether the inference is trivial using \cref{lem:rw-inference,lem:rw-trivial}.

\smallskip

\textbf{Storing graphs and relation webs.}
The library was designed with the intention of storing graphs as compactly as possible.
 Therefore there are implementations of \texttt{LinGraph} which pack the data (a series of bits for whether there exists an edge between each pair of nodes) into various integer types. The implementation is given for unsigned 8 bit, 16 bit, 32 bit (which can store up to 8 variable graphs), 64 bit, and 128 bit integers. Furthermore there is an implementation using vectors (variable length arrays) of Boolean values, which is less memory efficient but can store relation webs of arbitrary size. A further improvement could be to use an external library implementing bit arrays to make a memory efficient, yet infinitely scalable implementation.

\smallskip

\textbf{Checking an inference between graphs.}
In order to implement linear inference checking, we use a data type representing maximal cliques of a relation web, which we represent as the trait \texttt{MClique}. It is possible to use Rust's inbuilt \texttt{HashSet}\footnote{\url{https://doc.rust-lang.org/std/collections/struct.HashSet.html}} to do this but, as above, a more memory efficient solution is provided where we store the data in a single integer, with each bit determining whether a node is contained in the clique. For example a maximal clique on an 8 node graph can be encoded into a single byte. While checking for linear inferences and triviality, the main operation on maximal cliques is asking whether one is a subset of the other. This operation can be carried out very quickly using bitwise operations. Lastly we also need a way to generate the maximal cliques of a relation web. This is done using the Bron-Kerbosch algorithm \cite{BroKer73:finding-all-maxcliques}, which is fast enough for our purposes (as we are only finding the maximal cliques of relatively small graphs).

As mentioned after \cref{lem:rw-inference}, for arbitrary (non-$P_4$-free graphs) the maximal clique condition and maximal stable set condition for checking an inference do not have to agree, and so we would like to have both options in our library. To avoid having to calculate stable sets as well as cliques, we instead simply calculate whether an inference $\R \to \S$ between graphs satisfies the stable set condition by checking whether $\overline{\S} \to \overline{\R}$ satisfies the clique condition. To this end, the library contains functions for dualising a graph.

\smallskip

\textbf{Working with isomorphism.}
There is also code for working with isomorphisms of graphs, which is used in the search algorithm to shrink the search space further. This is implemented as a module where permutations and operations on these permutations are defined, as well as having the ability to apply a permutation to the nodes of a graph, to get a new but isomorphic graph.

\smallskip

\textbf{Generating all $P_4$-free graphs.}
The library also has a function that allows all \(P_4\)-free graphs of a certain size to be generated. The naive algorithm for doing this which simply generates all graphs and checks each one for being \(P_4\)-free is computationally infeasible for graphs with more than a few variables, as the number of graphs scales superexponentially with the number of variables (for instance there are \(2^{21}\) 7-variable undirected graphs).
Instead, we use a recursive algorithm that generates all \(P_4\)-free graphs of size \(n\) by first generating the \(P_4\)-free graphs of size \(n-1\) and then checking all possible extensions of these graphs to see if they are \(P_4\)-free. Correctness of this procedure is due to the fact that induced subgraphs of a \(P_4\)-free are themselves \(P_4\)-free.
In fact, a further optimisation is also added: when we check whether the extensions are \(P_4\)-free, it is sufficient to only check if subsets of the nodes containing the added node are not isomorphic to \(P_4\), instead of checking every subset.

\smallskip

\textbf{Checking if an inference is an instance of another}.
The library contains code for checking whether an inference $\mathsf r : \R \to \R'$ is an instance of another inference $\mathsf p : \S \to \S'$.
This is implemented as in \cite{CalDasWar20:bgl}, and so allows us to also work with non-$P_4$-free graphs.
Let us first recall the definition of a module:
\begin{defi}
Given a graph $\R$ on variables $\V$, a subset $M$ of $\V$ is a \emph{module} if every element of $M$ has the same neighbourhood outside of $M$, so if $x, y\in M$ and $z \in \V \setminus M$ then there is an edge between $x$ and $z$ if and only if there is an edge between $y$ and $z$.
\end{defi}
For a formula $\phi$, the modules of the graph $\W(\phi)$ correspond exactly to the subformulae of $\phi$.
The check is done as follows:
\begin{itemize}
    \item For each subset $M$ of the variable set, check if $M$ is both a module of $\R$ and $\R'$. This is the generalisation of finding a suitable context for the rewrite to all graphs. Each such module \(M\) generates the induced subgraphs $\mathcal{M} \subseteq \R$ and $\mathcal{M'} \subseteq \R'$.
    \item If $\mathsf r$ is an instance of $\mathsf p$, then for some $M$ found we have there is a substitution of $\mathsf p$ which produces $\mathcal{M} \to \mathcal{M'}$. We check this by going through all partitions of $M$ into $|\S|$ non-empty parts $P_x$ for every node $x$ of $\S$ and seeing if this substitution produces $\mathcal{M} \to \mathcal{M'}$. This is done by checking if $\mathcal{M}$ has an edge between $a \in P_x$ and $b \in P_y$ (with $x \neq y$) precisely when then there is an edge between $x$ and $y$ in $\S$, checking the similar condition for $\mathcal{M}'$, and finally checking that for each $x$, the induced subgraph of $\mathcal{M}$ and $\mathcal{M}'$ generated from $P_x$ are the same.
\end{itemize}

When specialised to $P_4$-free graphs, this procedure agrees with the usual notion of rewrite step.

\smallskip

\textbf{Sanity checks.}
Finally, the library also contains some automated tests used as sanity checks on the code, which may be used to check various implementations against each other.

\subsection{Search algorithm}
\label{sec:search}

The main part of the implementation is a search algorithm to find logically minimal non-trivial inferences of a specified size \(n\) between relation webs that are not derivable from a specified set of inferences \(S\). The search algorithm functions in multiple phases. After each phase the results are serialised and saved to disk so that the algorithm can be restarted from this point.

\smallskip

\textbf{Phase 1: generating $P_4$-free graphs on $n$ nodes.}
Suppose we are searching for \(S\)-independent linear inferences between webs on \(n\) variables. The first phase, as described in the previous section, is to gather all \(P_4\)-free graphs with \(n\) nodes.

\smallskip

\textbf{Phase 2: identifying isomorphism classes and canonical representatives.}
To describe the second phase we need to introduce some new notions. Without loss of generality, we will assume henceforth that the variable set is given by \(\V = \{0,\dots,n-1\}\).

\begin{defi}\label[defi]{def:num-rep}
Note that the function $\iota : \{(x,y) \in \mathbb{N} \times \mathbb{N}\ |\ x < y\} \to \mathbb{N}$ given by \(\iota(x,y) = x + \sum_{i < y} i\) is a bijection.
  Define the \textbf{numerical representation} of a linear graph \(\R\), written \(N(\R)\), to be the natural number whose \(\iota(x,y)\)\textsuperscript{th} least significant bit is \(1\) if and only if \((x,y) \in \R\).
\end{defi}
\noindent
This is the encoding used to store graph in integers as described in the \Cref{sec:library}.

\begin{defi}
Given a bijection $\rho: \V\to \V$, we write $\rho(\R)$ for the graph on $\V$ with edges $(\rho(x),\rho(y))$ for each edge $(x,y) \in \R$.
$\R$ and $\S$ are \textbf{isomorphic} if $\S = \rho(\R)$, for some bijection $\rho: \V\to \V$, in which case $\rho$ is called an \textbf{isomorphism} from $\R $ to $\S$.
\end{defi}

As isomorphism is an equivalence relation, we can partition the set of \(P_4\)-free graphs into isomorphism classes. It can readily be checked that \(N\) (from \cref{def:num-rep}) is injective and can therefore be used to induce a strict total ordering on graphs. Say that a relation web is \textbf{least} if it is the smallest element in its isomorphism class (with respect to this ordering induced from \(N\)).

The second phase of the algorithm is to identify these least relation webs, as well as identify the isomorphism between every relation web and its isomorphic least relation web. It will become clear why this data is needed later on in the section. To obtain this, first the relation webs are sorted (by numerical representation) and then, taking each graph \(\R\) in turn, applying every possible permutation to its nodes, and seeing if any result in a smaller web (with respect to $N$). If none do then we record it as a least relation web (with the identity isomorphism). Otherwise suppose it is isomorphic to \(\R'\) with isomorphism \(\rho\) where \(N(\R') < N(\R)\). As we are checking graphs in order, we must already know that \(\R'\) is isomorphic to least graph \(\R''\) with isomorphism \(\pi\). Then we can record \(\R\) as being isomorphic to \(\R''\) with isomorphism \(\pi \circ \rho\). This allows us to use the following lemma.

\begin{lem}
  \label[lem]{lem:least}
  Take some set \(S\) of inferences and number \(n\). Then to show that any valid non-trivial logically minimal inference \(\R \to \S\) on \(n\) variables is derivable from \(S\), that is \(\R \red_{S} \S\), it is sufficient to only check those inferences where \(\R\) is least. Hence to prove the statement of \cref{thm:main-reduced}, it is sufficient to prove that for all valid non-trivial logically minimal inferences \(\R \to \S\) on \(n < 8\) variables where \(\R\) is least, that \(\R \redms \S\).
\end{lem}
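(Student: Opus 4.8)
The plan is to reduce the statement to a single observation: all the notions in play --- validity, non-triviality, logical minimality, and $S$-derivability --- are invariant under renaming of variables, i.e.\ under graph isomorphism. First I would record this invariance precisely. Fix a bijection $\rho : \V \to \V$. On the semantic side, $\R \to \S$ is valid (resp.\ non-trivial, resp.\ logically minimal) if and only if $\rho(\R) \to \rho(\S)$ is, since the Boolean semantics of \cref{sec:preliminaries} --- and hence the clique/stable-set criteria of \cref{lem:rw-inference} and \cref{lem:rw-trivial} --- do not see the names of variables, and logical minimality is phrased purely in terms of validity of interpolants. On the syntactic side, $\R \red_S \S$ if and only if $\rho(\R) \red_S \rho(\S)$: each inference of $S$ (in particular $\s$ and $\m$) is closed under substitution, hence under the bijective renaming $\rho$ in particular, and each clause of $\sim_\ac$ and $\sim_\acu$ is likewise stable under $\rho$; so a derivation may be transported step by step along $\rho$, and symmetrically along $\rho^{-1}$.

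Granting this, the lemma would follow at once. Given an arbitrary valid non-trivial logically minimal inference $\R \to \S$ on $n$ variables, let $\R''$ be the least web isomorphic to $\R$ and $\rho$ an isomorphism with $\rho(\R) = \R''$ --- exactly the data computed in Phase 2. By the semantic invariance, $\R'' \to \rho(\S)$ is again valid, non-trivial and logically minimal, and $\R''$ is least by construction, so the hypothesis applies and yields $\R'' \red_S \rho(\S)$. Applying $\rho^{-1}$ and using the syntactic invariance once more gives $\R = \rho^{-1}(\R'') \red_S \rho^{-1}(\rho(\S)) = \S$, as required. The closing sentence of the lemma is then just the instance $S = \{\s,\m\}$, $n < 8$, which --- combined with \cref{rem:using-webs} and \cref{lem:minimality} --- reduces \cref{thm:main-reduced} to checking derivability of logically minimal inferences whose left-hand side is least.

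The one step requiring care, and the main obstacle, is the syntactic invariance of $\red_S$ under $\rho$: one must verify that pushing an entire $S$-derivation through the renaming --- every $\to_S$ step together with its chosen context and substitution, and every intervening $\sim_\ac$ or $\sim_\acu$ step --- again produces a legal derivation. This is routine, since term rewrite systems are closed under substitution and contexts and $\rho$ commutes with the formation of contexts (equivalently, with the modules of a web, cf.\ \cref{sec:library}), but it is where the content of the argument lies.
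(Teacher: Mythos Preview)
Your argument is correct and is essentially the same as the paper's: transport the given inference along an isomorphism to one whose LHS is least, apply the hypothesis there, and transport the resulting derivation back via the inverse isomorphism, with the second sentence following from \cref{rem:using-webs} and \cref{lem:minimality}. You are simply more explicit than the paper about spelling out the invariance of validity, non-triviality, logical minimality and $\red_S$ under renaming, which the paper compresses into the phrases ``\(\R' \to \S'\) is valid, non-trivial, minimal'' and ``applying the isomorphism to the derivation''.
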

\begin{proof}
  Suppose \(\R \to \S\) is a valid non-trivial logically minimal inference on \(n\) variables. Then let \(\rho\) be an isomorphism from \(\R\) to \(\R'\) which is least, and let \(\S' = \rho(S)\). Then we have that \(\R' \to \S'\) is valid, non-trivial, minimal, and \(\R'\) is least, and so by assumption we have \(\R' \red_{S} \S'\). By applying the isomorphism to the derivation we get \(\R \red_{S} \S\) as required. The second part of the lemma follows straight from the first, along with \cref{rem:using-webs} to restrict to non-trivial linear inferences and \cref{lem:minimality} to restrict to minimal inferences.%
\end{proof}

The above lemma allows us to search for independent linear inferences by only searching inferences from least webs to arbitrary webs. This increases the speed of the search greatly as it turns out there are relatively few isomorphism classes (and so least webs).
For example, specialising to the case \(n = 7\), there are \(78416\) \(P_4\)-free graphs with 7 variables with only \(180\) of them being least (so that there are 180 isomorphism classes).
Note that we may not similarly restrict the RHS of inferences to least webs. This means we need to know the maximal cliques of every \(P_4\)-free graph to determine whether there are inferences between them.

\smallskip

\textbf{Phase 3: generating all maximal cliques.}
In phase three we generate all the maximal cliques of the graphs found in phase one and store them so that they do not need to be recomputed every time we check a linear inference. Each maximal clique of a graph of size $n$ can be stored using only $n$ bits, so storing all this data is feasible.

\smallskip

\textbf{Phase 4: generating `least' linear inferences.}
With the maximal clique data, phase four of generating a list of all valid linear inferences (from a least web to an arbitrary web) can be easily done by iterating through all possible combinations and checking them using \cref{lem:rw-inference}.

\smallskip

\textbf{Phase 5: checking for non-triviality.}
Similarly phase five of checking which of these inferences are non-trivial is also simple using \cref{lem:rw-trivial}. This data is stored in a \texttt{HashMap} of sets for quick indexing.

\smallskip

\textbf{Phase 6: restricting to logically minimal inferences.}
Phase six is now to restrict our inferences to only those that are logically minimal. Write \(\Phi_\R \) be the set of webs distinct from $\R$ that \(\R \) (non-trivially) implies.
We calculate, for a least web $\R$, the set $M_\R$ of webs $\S$ with $\R \to \S$ a logically minimal linear inference using the identity:
\begin{equation*}
  M_\R = \Phi_\R \setminus \bigcup_{\R' \in \Phi_{\R}} \Phi_{\R'}
\end{equation*}
Note that to calculate this, we need to be able to generate \(\Phi_\R\) for arbitrary (i.e.\ not necessarily least) webs. This is where the isomorphism data stored in phase two becomes useful, as if \(\rho \) is an isomorphism from $\R$ to $\R'$, with \(\R'\) least, we can use,
\begin{equation*}
  \Phi_\R = \{ \rho^{-1}(\S) \ |\  \S \in \Phi_{\R'}\}
\end{equation*}
to generate \(\Phi_\R\), where we already have \(\Phi_{\R'}\). In the implementation, we generate each \(\Phi_\R\) on the fly (from \(\Phi_{\R'}\)), though we could have pre-generated all of these, which might provide further speedup for this phase.

\smallskip

\textbf{Phase 7: checking for \(S\) derivability.}
The last phase is to check the remaining inferences, of which there are now few enough to feasibly do so. Logically minimal inferences have one further benefit: a logically minimal inference (and in fact any \(S\)-minimal inference) \(\phi \to \psi\) is derivable from \(S\) if and only if it is derivable from a \emph{single} \(S\) step.

To check whether each remaining inference is derivable from \(S\), we use the library function for determining whether a given inference is an instance of another. For the specific case of \(S = \{\s,\m\}\), It would have been possible to use the criterion for switch derivability from \cite{Str07:char-med} (mentioned at the end of \cref{sec:webs}) and the criterion for medial derivability given in \Cref{prop:medial-char}, but running through possible partitions of the nodes of \(\R\) was fast enough and easier to implement.

\smallskip

\textbf{Evaluation and main results.}
The main results of the paper are obtained by running the above phases for \(n = 7\) and \(S = \{\s,\m\}\). We found that there were 78416 \(P_4\)-free graphs of which \(180\) were least. There were \(35110\) non-trivial inferences from a least web to an arbitrary web of which \(1352\) were minimal. Of these minimal inferences, \(968\) were an instance of switch, \(384\) were an instance of medial, and there were no other inferences.

\begin{proof}[Proof of \cref{thm:main-reduced}]
  By \cref{lem:least}, it suffices to show that all valid non-trivial logically minimal inferences from a least graph \(\mathcal{R}\) to any graph \(\mathcal{S}\) is derivable from switch and medial. Our algorithm computes the following.
  \begin{itemize}
  \item Phase 1 computes a set of \(n\)-variable \(P_4\)-free graphs \(X\).
  \item Phase 2 records the set \(X' = \{\mathcal{R} \in X | \mathcal{R}\text{ is least}\}\), which is computed by applying all possible permutations, checking if any produce a graph that is less than the original graph with respect to the numeric ordering on graphs.
  \item Phase 3 computes the maximal cliques of all graph in \(X\).
  \item Phase 4 uses the results of phase 3 to find all inferences from \(\mathcal{R} \in X'\) to \(\mathcal{S}\in X\). This is done by checking every combination of graphs against the criterion in \cref{lem:rw-inference}.
  \item Phase 5 computes \(\Phi_{\mathcal{R}}\) for each \(\mathcal{R} \in X'\), the set of webs non-trivially implied by \(\mathcal{R}\) that are distinct from \(\mathcal{R}\). The applies the criterion of \cref{lem:rw-trivial} against the results of the previous phase.
  \item Phase 6 filters the inferences of phase 5 to those that are logically minimal by removing any inferences that have an interpolating formula by exhaustive search.
  \item Phase 7 takes the inferences produced by phase 6 and removes any which are an instance of switch or medial.
  \end{itemize}
  For \(n = 7\), no inferences were produced by phase 7, meaning that all non-trivial logically minimal inferences from a least graph were derivable with switch and medial, which completes the proof.
\end{proof}

The algorithm can also run on 8 variables, again with \(S = \{\s,\m\}\), where there were 1320064 \(P_4\)-free graphs of which \(522\) were least. There were \(514486\) non-trivial inferences from a least web to an arbitrary web of which \(5364\) were minimal, Of these, \(3506\) were an instance of switch, \(1770\) were an instance of medial, and there were \(88\) other inferences. After quotienting out by isomorphism (as restricting to inferences from least graphs does not rule out self isomorphisms on the LHS of the inference), we were left with \(4\) inferences, of which two were dual to each other leaving the logically minimal $\{\s,\m\}$-independent inferences given in \cref{sec:8var-inf}. This completes the proof of our main result, \cref{thm:main}.

Furthermore, the algorithm is also fast enough to run on 9 variables, with \(S\) containing switch, medial, and the four 8-variable inferences given in \cref{sec:8var-inf}. In this case, there were 25637824 \(P_4\)-free graphs of which \(1532\) were least. There were \(8374668\) non-trivial inferences found of which \(20553\) were minimal. Of the remaining:
\begin{itemize}
    \item \(12333\) were an instance of switch,
    \item \(7212\) were an instance of medial,
    \item \(168\) were an instance of Inference \eqref{eq:php32-derived-inf-repeated},
    \item \(168\) were an instance of Inference \eqref{eq:unknown-inference-8-repeated},
    \item \(384\) were an instance of Inference \eqref{eq:counterexample-inference-repeated},
    \item \(104\) were an instance of the dual of Inference \eqref{eq:counterexample-inference-repeated},
    \item and there were \(184\) remaining inferences.
\end{itemize}
After quotienting by isomorphism, there were \(10\) remaining inferences, in \(5\) dual pairs. These inferences are given in \Cref{sec:9var-infs}, and a theoretical justification for this search is given in \cref{sec:further-theoretical}.
\smallskip

\textbf{Other features of the implementation.}
The implementation executable has a couple of extra features. One of these is the ability to run the search algorithm recursively on \(n\) to get a set of minimal inferences \(\basis(n)\). This works in the following way: We set \(\basis(0) = \emptyset\). To calculate \(\basis(n + 1)\), we first recursively run the algorithm on \(n\) to get a set of inferences \(\basis(n)\). We can then run the above phases on size \(n+1\) inferences checking against the set \(\basis(n)\) of inferences. This will give us a set \(S\) of \(n+1\) variable inferences. We then return \(\basis(n+1) = \basis(n) \cup S\). The use of this algorithm and it's name are explained in \Cref{sec:further-theoretical}.

Further we also include a flag that toggles whether the search algorithm searches for only \(P_4\)-free graphs, or if it searches for any graph. Turning off the checking for \(P_4\)-free graphs changes phase 1 of the algorithm to include all graphs instead of just those that are \(P_4\)-free, but the rest of the algorithm functions the same. We lastly also include a flag for using maximal stable set entailment (condition \eqref{item:valid-char-stable-sets}), instead of the maximal clique entailment (condition \eqref{item:valid-char-cliques}). Since both of these coincide for \(P_4\)-free graphs, this only needs to be considered when the ``\(P_4\)'' setting is turned off.

These extra features of the implementation are not needed for the proof of \cref{thm:main-reduced}, but are used in the following sections.

\section{Enumerating a `basis' of linear inferences}
\label{sec:further-theoretical}

\Cref{thm:main} was concerned with minimal linear inferences independent of switch and medial, and so its proof requires the special case of $S = \{\s,\m\}$ in the search algorithm of the previous section.
This section is devoted to proving that the recursive algorithm $\basis(n)$ described at the end of the last section in fact computes a `canonical' basis of linear inferences independent of all smaller ones.

The content of this section is new and does not appear in the preliminary version \cite{DasRice21}.

\subsection{Defining a `basis' of linear inferences}
\label{sec:basis}
First, we must make precise what the `canonical' linear inferences of a certain size are.

\begin{defi}
Let $L_{\leq n}$ denote the set of all linear inferences on $\leq n$ variables, and write $L_{<n} = L_{\leq(n-1)}$.
For $n>4$, we call a set $X$ of linear inferences \textbf{$n$-minimal} if it is a smallest set of linear inferences such that $X \cup L_{<n}$ derives (with units) every $\mathsf r \in L_n$.
\end{defi}

For instance, we already have from our previous results:

\begin{exa}
For $n=5,6,7$ we have $\emptyset$ is $n$-minimal by \Cref{thm:main}.
By the outputs of our search algorithm up to 8 variables, cf.~\Cref{sec:8var-inf}, we have that $ \{ \eqref{eq:php32-derived-inf-repeated}, \eqref{eq:unknown-inference-8-repeated}, \eqref{eq:counterexample-inference-repeated},\overline{\eqref{eq:counterexample-inference-repeated}}\} $ is $n$-minimal for $n = 8$.
\end{exa}

\begin{prop}
\label[prop]{lem:n-min-set-char}
Let $X$ be a $n$-minimal set, for some $n>4$. We have:
\begin{enumerate}
\item If $X \ni \mathsf r : \phi \to \psi$ then $\phi$ and $\psi$ contain the same variables.
\item  Any $\mathsf r \in X$ is non-trivial.
    \item Any $\mathsf r \in X$ is logically minimal.
    \item Any non-trivial logically minimal $\mathsf r \in L_{\leq n}$ that is not $L_{<n}$-derivable (with units) is an instance of some $\mathsf r' \in X$, modulo $\acu$.
\end{enumerate}
\end{prop}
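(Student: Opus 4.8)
The plan is to characterise $n$-minimal sets explicitly. Let $\mathcal C_n$ denote the collection of non-trivial, logically minimal linear inferences on exactly $n$ variables that are not $L_{<n}$-derivable with units, taken up to mutual instance-with-units (representatives may be taken constant-free and negation-free, cf.\ \cref{prop:const-free-neg-free}). I would prove that an $n$-minimal set is exactly a choice of one representative from each class of $\mathcal C_n$; items (1)--(4) are then immediate. I use throughout that refining an inference into a chain of logically minimal links, as in the proof of \cref{lem:minimality}, keeps each link non-trivial (\cref{trivial-composition}), hence on the variables of its endpoints, and that a logically minimal inference that is $S$-derivable with units is derivable by a single rewrite step (as noted in \cref{sec:search}).

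Item (4) is the easiest. If $\mathsf r \in L_{\leq n}$ is non-trivial, logically minimal and not $L_{<n}$-derivable with units, then $\mathsf r \notin L_{<n}$ (it would derive itself), so $\mathsf r \in L_n$, so $X \cup L_{<n}$ derives it with units, so by the single-step fact $\mathsf r$ is an instance with units --- that is, modulo $\acu$ --- of some $\mathsf r' \in X \cup L_{<n}$; were $\mathsf r' \in L_{<n}$, then $\mathsf r$ would be $L_{<n}$-derivable with units, so $\mathsf r' \in X$, as required.

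For (1)--(3) I would use that in a smallest set every member is indispensable. Fix $\mathsf q \in X$. If $\mathsf q$ uses fewer than $n$ variables then $\mathsf q \in L_{<n}$, so $X \setminus \{\mathsf q\} \cup L_{<n}$ still spans, contradicting minimality; so $\mathsf q$ uses at least $n$ variables. Since $X \setminus \{\mathsf q\} \cup L_{<n}$ does not derive all of $L_n$, pick $\mathsf s \in L_n$ not derivable from it, which may be taken non-trivial and variable-consistent (otherwise \cref{prop:trivial} or \cref{prop:diff-vars} would make $\mathsf s$ $L_{<n}$-derivable), refine $\mathsf s$ into logically minimal links, each non-trivial and on exactly $n$ variables, and since they cannot all be derivable from $X \setminus \{\mathsf q\} \cup L_{<n}$, extract a link $\mathsf r_\mathsf q \in \mathcal C_n$ that is derivable from $X \cup L_{<n}$ with units but is an instance with units of no member of $X \setminus \{\mathsf q\} \cup L_{<n}$; by the single-step fact $\mathsf r_\mathsf q$ is then an instance with units of $\mathsf q$. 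Now analyse this instance: expressing $\mathsf r_\mathsf q$ as a substitution instance of $\mathsf q$ modulo $\acu$, linearity forces the substituted formulas to have pairwise disjoint variables, and since the outcome is constant-free on exactly $n$ variables, composing with unit-elimination makes the substitution replace some variables of $\mathsf q$ by constants and rename the rest. If $\mathsf q$ already uses exactly $n$ variables and is variable-consistent, no variable can be replaced by a constant --- doing so would exhibit $\mathsf r_\mathsf q$ as an instance with units of an inference in $L_{<n}$, contradicting that $\mathsf r_\mathsf q$ is not $L_{<n}$-derivable --- so the substitution is a renaming, $\mathsf q$ is a renaming of $\mathsf r_\mathsf q$, and hence $\mathsf q \in \mathcal C_n$; this gives (1), (2), (3) in that case.

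The step I expect to be the main obstacle is handling $\mathsf q$ that genuinely uses more than $n$ variables (this also subsumes the variable-inconsistent and trivial sub-cases, since \cref{prop:diff-vars} and \cref{prop:trivial} only reduce those to smaller inferences, which need not land in $L_{<n}$). The difficulty is that one inference on $m > n$ variables can serve as an instance-with-units witness for several pairwise-incomparable $n$-variable primitives at once --- by cleanly deleting different variables --- so a priori such a $\mathsf q$ could take the place of two members of a candidate basis and shrink it. I would close this using the constant-freeness results flagged in \cref{sec:further-theoretical}: every such $\mathsf q$ is interchangeable, for derivability purposes, with the finite set of its clean $n$-variable reductions, so replacing the oversized members of a smallest $X$ by their clean reductions and discarding redundancies produces an $n$-minimal set contained in $L_n$; since all $n$-minimal sets have the same size (by definition of smallest) and the covering relation among $n$-variable primitives is rigid --- each $n$-variable inference covers at most one class of $\mathcal C_n$, by the variable-counting analysis above --- a counting argument then forces every $n$-minimal set, $X$ included, to consist of exactly one representative per class of $\mathcal C_n$, which completes (1)--(3) and reproves (4).
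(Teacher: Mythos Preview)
Your argument for (4) is correct and essentially identical to the paper's.

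For (1)--(3) the paper takes a much more direct route than you do. Rather than trying to identify $X$ with $\mathcal C_n$ globally via witnesses $\mathsf r_{\mathsf q}$, it argues each item separately: assume some $\mathsf r \in X$ fails the property and show that $(X \setminus \{\mathsf r\}) \cup L_{<n}$ still derives all of $L_n$, contradicting minimality of $|X|$. For (1) and (2) this is a one-line appeal to \cref{prop:diff-vars} and \cref{prop:trivial} respectively: the offending $\mathsf r$ is derivable from $\s$, $\m$ and a strictly smaller inference, hence from $L_{<n}$, hence redundant. For (3) the paper interpolates $\phi \to \chi \to \psi$ and observes, by \emph{counting satisfying assignments}, that no step of an $(X \cup L_{<n})$-derivation of $\phi \to \chi$ or of $\chi \to \psi$ can be a non-degenerate instance of $\mathsf r$ itself (its LHS would have too few models in the second case, its RHS too many in the first), while any constant-substituted instance of $\mathsf r$ already lies in $L_{<n}$; so $\mathsf r$ is again redundant. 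This model-counting trick is the key idea your proposal is missing, and it replaces your entire detour through the $\mathsf r_{\mathsf q}$.

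Your worry about elements $\mathsf q \in X$ with more than $n$ variables is legitimate, but your proposed resolution does not close it: the final ``counting argument'' presupposes exactly what it is meant to prove, namely that one oversized inference cannot simultaneously cover two classes of $\mathcal C_n$, and the appeal to ``constant-freeness results flagged in \cref{sec:further-theoretical}'' is circular since those results are for the sets $M_n$ built \emph{after} this proposition. Note that the paper's own proof, read literally, also leans on elements of $X$ having at most $n$ variables (e.g.\ the step ``$\mathsf r'$ \dots\ must have fewer variables than $\mathsf r$ \dots\ meaning $\mathsf r$ is already $L_{<n}$-derivable'' in (1), or ``reduces to an $L_{<n}$ inference'' in (3)); so this is arguably a shared lacuna rather than a defect peculiar to your approach. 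If you want to patch it, the cleanest fix is to argue up front that one may take $X \subseteq L_{\leq n}$ without loss of generality, and then run the paper's direct per-item arguments; your structural characterisation via $\mathcal C_n$ then becomes a corollary rather than the engine of the proof.
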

\begin{proof}
For (1), suppose for contradiction that $\phi$ has variables $\V_1$ and $\psi$ has variables $\V_2$ where $\V_1 \neq \V_2$. However by \cref{prop:diff-vars}, we then have that there is some $\mathsf r'$ on $\V_1 \cap \V_2$, which must have fewer variables than $\mathsf r$, and that $\mathsf r$ is derivable from $\mathsf r'$, switch and medial, meaning $\mathsf r$ is already $L_{<n}$-derivable.

For (2), note that by \Cref{prop:trivial} any trivial $n$-variable linear inference is already $L_{<n}$-derivable.

For (3), suppose $X \ni \mathsf r: \phi \to \psi$ is not logically minimal.
Let $\chi$ have $n$ variables such that $\phi \to \chi\to\psi$ are valid but $\chi$ is not equivalent to $\phi$ or $\psi$.
Note in particular that $\phi,\chi,\psi$ are satisfied by $k,l,m$ assignments respectively, with $k<l<m$.
By assumption $X\cup L_{<n}$ must derive $\mathsf r_0: \phi \to \chi$ and $\mathsf r_1: \chi \to \psi$; we may assume that any such derivation does not use $\mathsf r$:
\begin{itemize}
    \item if an instance of $\mathsf r$ has a constant substituted for one of the variable inputs, then it reduces to a $L_{<n}$ inference, modulo $\acu$ by \Cref{prop:unit-free}.
    \item otherwise an instance of $\mathsf r$ would have $k$ assignments satisfying its LHS (which is too few in the case of $\mathsf r_1$) and $m$ assignments satisfying its RHS (which is too many in the case of $\mathsf r_0$).
\end{itemize}
Thus $(X\setminus \{\mathsf r\}) \cup L_{<n}$ derives $\mathsf r_0$ and $\mathsf r_1$ and hence it also derives $\mathsf r$ and by extension every $\mathsf r' \in L_{\leq n}$, contradicting the minimality of $X$.

For (4) let $L_{\leq n} \ni \mathsf r:\phi \to \psi$ be logically minimal and non-trivial and not $L_{<n}$-derivable.
$X \cup L_{<n}$ must derive $\mathsf r$, since it derives all of $L_{\leq n}$.
However, any such derivation cannot have any formula not already logically equivalent to $\phi $ or $\psi$, by logical minimality. Thus all formulae are $\acu$-equivalent to $\phi$ or $\psi$ by \Cref{acu-and-logical-equivalence}, and there must be a single step whose LHS and RHS are thus $\acu$-equivalent to $\phi $ and $\psi$ respectively.
\end{proof}

Note that the above lemma actually characterises $n$-minimal sets: a set $X$ is $n$-minimal iff it contains just (a renaming, modulo $\ac$, of) each non-trivial logically minimal $\mathsf r \in L_{\leq n}$ that is not $L_{<n}$-derivable.
\begin{cor}
For $n>4$, $n$-minimal sets are unique up to $\acu$ and renaming of variables.
\end{cor}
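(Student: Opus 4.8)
The corollary is essentially a repackaging of the characterisation remarked on immediately above it, so the plan is to verify that characterisation carefully and then read off uniqueness. Call a linear inference \emph{critical} (for $n$) if it is non-trivial, logically minimal, on exactly $n$ variables, and not $L_{<n}$-derivable with units; by \Cref{lem:n-min-set-char}(1) the two sides of any critical inference share their variable set, and I write $[\mathsf r]$ for the class of a critical $\mathsf r$ under renaming of variables composed with $\sim_\acu$. The first step is to show that every $n$-minimal $X$ consists of exactly one representative of each class $[\mathsf r]$. That each $\mathsf r \in X$ is non-trivial and logically minimal is \Cref{lem:n-min-set-char}(2),(3). That each is not $L_{<n}$-derivable follows from minimality of $X$: otherwise every occurrence of $\mathsf r$ in a derivation could be replaced by an $L_{<n}$-derivation, so $X \setminus \{\mathsf r\}$ would still suffice. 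That each $\mathsf r \in X$ is on $n$ variables is similar bookkeeping (an element on fewer variables already lies in $L_{<n}$; and, using \Cref{prop:unit-free} together with linearity, any instance of an element on more than $n$ variables that occurs in a derivation of an $L_n$ inference reduces modulo $\sim_\acu$ to an inference on at most $n$ variables, making such an element redundant). Finally $X$ meets each class in \emph{at most} one element (two distinct elements of one class would make one redundant) and, by \Cref{lem:n-min-set-char}(4), in \emph{at least} one --- here using that an instance modulo $\sim_\acu$ of an $n$-variable inference by an $n$-variable inference must, by linearity, be a mere renaming modulo $\sim_\acu$.

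For the converse --- that a set $X_0$ containing exactly one representative per class is $n$-minimal --- I would check both halves of the definition. First, $X_0 \cup L_{<n}$ derives every $\mathsf r \in L_n$: refine $\mathsf r$ into a chain of logically minimal inferences exactly as in the proof of \Cref{lem:minimality}; each link is non-trivial by \Cref{trivial-composition}, hence either already $L_{<n}$-derivable or critical, and in the latter case it is obtained from the corresponding element of $X_0$ by a single step (up to $\sim_\acu$ and renaming). Second, $X_0$ is smallest: given any $Y$ with $Y \cup L_{<n}$ deriving $L_n$, for each critical $\mathsf r$ any derivation of $\mathsf r$ from $Y \cup L_{<n}$ must --- since $\mathsf r$ is logically minimal and not $L_{<n}$-derivable, so that every formula appearing in the derivation is $\sim_\acu$-equivalent to one of its two sides, cf.\ the proof of \Cref{lem:n-min-set-char}(4) --- contain a single step coming from $Y$ whose two sides are $\sim_\acu$-equivalent to those of $\mathsf r$; the $Y$-element used therefore lies in $[\mathsf r]$. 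This assigns to each class an element of $Y$, injectively since distinct classes are disjoint, so $|Y| \geq |X_0|$.

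With the characterisation established, the corollary is immediate: if $X$ and $X'$ are both $n$-minimal, then sending $\mathsf r \in X$ to the unique $\mathsf r' \in X'$ with $[\mathsf r'] = [\mathsf r]$ is a bijection (well-defined and injective because $X'$ meets each class exactly once; surjective because $X$ does), and by construction $\mathsf r$ and $\mathsf r'$ agree up to $\sim_\acu$ and renaming of variables. I expect the main obstacle to be the \emph{smallestness} clause of the converse: converting ``$Y \cup L_{<n}$ derives the critical inference $\mathsf r$'' into ``$Y$ contains an element of $[\mathsf r]$'' is the one place where genuine work is required, and it rests on the rigidity of derivations of logically minimal, non-$L_{<n}$-derivable inferences. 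The variable-count bookkeeping in the first paragraph is routine but fiddly and would also need to be spelled out with some care.
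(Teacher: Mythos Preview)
Your approach matches the paper's intent: the paper itself derives the corollary from the characterisation remarked immediately before it, with no further argument, so you have gone considerably further by trying to prove that characterisation. You also correctly single out the \emph{smallestness} clause as the crux.

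There is, however, a genuine gap at exactly that point, and a related one earlier. In the lower-bound argument you write that the single $Y$-step whose sides are $\sim_\acu$-equivalent to those of $\mathsf r$ means ``the $Y$-element used therefore lies in $[\mathsf r]$''. This conflates the \emph{instance} of the rule with the rule itself: if some $y\in Y$ has more than $n$ variables, then $y\notin[\mathsf r]$ even though a unit-substituted instance of it is. Worse, one such $y$ may specialise to several distinct critical classes --- the paper itself records in \Cref{sec:9var-infs} that, e.g., \eqref{eq:nine_one} has both \eqref{eq:unknown-inference-8-repeated} and $\overline{\eqref{eq:counterexample-inference-repeated}}$ as $\acu$-instances, and \eqref{eq:nine_two} has both \eqref{eq:php32-derived-inf-repeated} and $\overline{\eqref{eq:counterexample-inference-repeated}}$ --- so two $9$-variable rules already cover all four classes in $M_8$, your map from critical classes to $Y$ need not be injective, and $|Y|\geq |X_0|$ does not follow. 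The same circularity undermines your earlier claim that an $n$-minimal $X$ cannot contain elements on more than $n$ variables: the reduced $\leq n$-variable instance lies in $L_{\leq n}$, but to discard the original you would need that instance derivable from $(X\setminus\{\mathsf r\})\cup L_{<n}$, which is precisely what is in question. The natural repair is to read the definition of $n$-minimal as implicitly restricted to $X\subseteq L_{\leq n}$ (consistent with how the paper actually uses $M_n$, cf.\ \Cref{cor:m-n-classification}); under that reading your argument goes through, since an $n$-variable linear inference can be an $\acu$-instance of another $n$-variable linear inference only via a renaming.
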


Due to \Cref{prop:const-free-neg-free}, and since every element of an $n$-minimal set is non-trivial, by \Cref{lem:n-min-set-char}, we can always choose the elements of an $n$-minimal set to be constant-free and negation-free. This motivates the following definition.

\begin{defi}
We define the sets $M_n$ for each $n$ as follows: For $n = 0,1,2$ we have $M_n = \emptyset$, then $M_3 = \{\s\}$, and $M_4 = \{\m\}$. For $n>4$, we write $M_n$ for the (unique up to $\ac$ and renaming) constant-free negation-free $n$-minimal set, and we write $M_{\leq n} := \bigcup_{m\leq n} M_m$.
\end{defi}

By definition of $n$-minimality and completeness of $\{\s,\m\}$ for $L_{\leq 4}$ we have:
\begin{fact}
\label[fact]{mn-compl-with-units}
For $n>4$, $M_{\leq n}$ is complete for $L_{\leq n}$, i.e.\ any $\mathsf r \in L_n$ is $M_{\leq n}$-derivable with units.
\end{fact}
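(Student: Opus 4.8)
The plan is to argue by induction on $n$, where the base case is $n=4$: here $M_{\leq 4} = \{\s,\m\}$ (by the definitions $M_3=\{\s\}$, $M_4=\{\m\}$, $M_0=M_1=M_2=\emptyset$), and completeness of $\{\s,\m\}$ for $L_{\leq 4}$ is exactly the cited fact appearing in the lead-in to the statement. So fix $n>4$ and assume inductively that $M_{\leq n-1}$ is complete for $L_{\leq n-1}$, i.e.\ every $\mathsf r : \phi \to \psi$ in $L_{\leq n-1}$ satisfies $\phi \red_{M_{\leq n-1}\un} \psi$. Let $\mathsf r : \phi \to \psi$ lie in $L_{\leq n}$. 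If $\mathsf r$ in fact has fewer than $n$ variables, we are done by the inductive hypothesis since $M_{\leq n-1} \subseteq M_{\leq n}$; so assume $\mathsf r \in L_n$. By the very definition of $n$-minimality of $M_n$, the set $M_n \cup L_{<n}$ derives $\mathsf r$ with units, i.e.\ there is a derivation $\phi \red_{(M_n \cup L_{<n})\un} \psi$.

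The heart of the argument is to rewrite this derivation so that it mentions only inferences from $M_{\leq n} = M_n \cup M_{\leq n-1}$. Each step in the derivation is either a $\sim_\acu$-step, or a step applying — under some substitution, inside some context — an inference of $M_n$, or a step applying — under some substitution $\sigma$, inside some context $C[-]$ — an inference $\mathsf p : \alpha \to \beta$ of $L_{<n} = L_{\leq n-1}$. In the last case the inductive hypothesis gives $\alpha \red_{M_{\leq n-1}\un} \beta$. Since $\red_{S\un}$ is closed under substitution and under contexts — this follows directly from the corresponding closure of $\to_S$ (by \cref{sec:preliminaries}) together with the fact that $\sim_\acu$ is a congruence that is moreover stable under substitution, the $\ac$- and unit-equations being schematic — we obtain $C[\sigma\alpha] \red_{M_{\leq n-1}\un} C[\sigma\beta]$, a derivation of exactly that step but using only inferences of $M_{\leq n-1}$. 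Splicing these sub-derivations in place of each $L_{<n}$-step, and leaving the $M_n$-steps and the intervening $\sim_\acu$-steps untouched, yields a derivation $\phi \red_{M_{\leq n}\un} \psi$. Hence $\mathsf r$ is $M_{\leq n}$-derivable with units, completing the induction.

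I do not expect a genuine obstacle here: the content is essentially bookkeeping about the shape of $\un$-derivations. The one point deserving care is the explicit verification that $\red_{S\un}$ is closed under substitution and contexts — in particular that substituting arbitrary linear formulae (possibly containing the constants $\top$ and $\bot$) into a $\un$-derivation again produces a legitimate $\un$-derivation — which is immediate from the definitions of $\to_S$, $\rightsquigarrow_{S\un}$ and $\sim_\acu$ in \cref{sec:preliminaries} but is where I would be most careful to spell out the details.
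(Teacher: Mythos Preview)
Your proof is correct and follows essentially the same reasoning the paper intends. The paper does not give a formal proof of this statement at all: it is stated as a \emph{Fact} with only the one-line justification ``By definition of $n$-minimality and completeness of $\{\s,\m\}$ for $L_{\leq 4}$''. Your induction on $n$ is precisely the unpacking of that sentence --- the definition of $n$-minimality gives that $M_n \cup L_{<n}$ derives $L_n$, and the inductive hypothesis lets you replace each $L_{<n}$-step by an $M_{\leq n-1}$-derivation --- so your argument and the paper's implicit one coincide; you have simply spelled out the details the paper leaves to the reader.
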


In fact we can say more than this: $M_{\leq n}$ is the minimal set w.r.t.\ set inclusion (unique up to $\ac$ and renaming) such that, for every $k$, its restriction to $\leq k$-variable inferences derives every $\mathsf r \in L_{\leq k}$. It also has the property that for every $\mathsf r \in M_n$, we cannot derive $\mathsf r$ in $M_{\leq n} \setminus \{\mathsf r\}$. This can be viewed as a `local' version of independence that allows the unit-elimination results of the next subsection to go through.
In this way we may say that it forms a `basis' for $L$.

\begin{rem}
    [On independence]
    \label{rem:on-independence-of-Mn}
    Note that our sets $M_{\leq n}$ are not fully independent. For instance $\s$ is an $\acu$-instance of $\eqref{eq:php32-derived-inf-repeated}$.
In fact we believe that there cannot exist a minimal set (w.r.t.\ set inclusion) that is complete, i.e.\ derives all of $L$, as infinite descending sequences (w.r.t.\ set inclusion) may exist.
Indeed the inferences $\mathit{QHQ}_n$ from \cite{Str12:ext-wo-cut} seem to suggest this, since each $\mathit{QHQ}_n$ derives all $\mathit{QHQ}_{<n}$.
However these inferences are not logically minimal, and a comprehensive reduction of this family to a logically minimal one is beyond the scope of this work.

\end{rem}

\subsection{Computing \texorpdfstring{$M_n$}{Mn} recursively without units}
With a view to generalising our search algorithm, we would like to, for arbitrary $n$, identify $M_n$, i.e.\ the `minimal' $n$-variable inferences that are independent of $L_{<n}$.
However,
in order to apply our implementation, based on graph theoretic representations that do not account for constants or negation, we need the following generalisation of  \Cref{non-triv-const-free-neg-free-derivability-without-units}.

\begin{thm}
\label[thm]{thm:mn-withunits-to-withoutunits}
For $n\geq 4$,
if $M_{\leq n}$ derives a constant-free negation-free nontrivial linear inference $\mathsf r$, then it also derives $\mathsf r$ without units.
\end{thm}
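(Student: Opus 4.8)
The plan is to generalise the proof of \Cref{non-triv-const-free-neg-free-derivability-without-units} --- which is precisely the case $M_{\leq n}=\{\s,\m\}$, i.e.\ $n=4$ --- by induction on $n$, taking \Cref{non-triv-const-free-neg-free-derivability-without-units} as the base case and, for $n\geq 5$, assuming the statement for $n-1$. So suppose $M_{\leq n}$ derives a constant-free negation-free nontrivial $\mathsf r:\phi\to\psi$ with units, via a derivation put in alternating form
\[
\phi\sim_\acu\phi_0\to_{M_{\leq n}}\psi_0\sim_\acu\phi_1\to_{M_{\leq n}}\cdots\to_{M_{\leq n}}\psi_N\sim_\acu\psi .
\]
Substituting $\top$ for every variable not occurring in $\phi$ or $\psi$ (this changes neither endpoint and preserves each $\to_{M_{\leq n}}$ step and each $\sim_\acu$ step, both being closed under substitution) we may assume the whole derivation uses only the variables $\V$ of $\mathsf r$. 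As in \Cref{non-triv-const-free-neg-free-derivability-without-units} we then record that every formula $\chi$ occurring in the derivation has $\sim_\un$-normal form $\chi'$ (from \Cref{prop:unit-free}) a genuine constant-free formula on $\V$: indeed $\phi\to\chi$ and $\chi\to\psi$ are valid (every member of $M_{\leq n}$ being a valid linear inference), while $\phi$ is satisfiable and $\psi$ is falsifiable (witnessed by the all-ones and all-zeros assignments, $\phi,\psi$ being constant-free negation-free), so $\chi$ is neither $\bot$- nor $\top$-equivalent. Replacing each $\chi$ by $\chi'$ turns every $\sim_\acu$ step into a $\sim_\ac$ step by \Cref{acu-and-logical-equivalence}, and it remains to replace each rule step $\hat A\to_{M_{\leq n}}\hat B$ by a unit-free derivation $\hat A'\red_{M_{\leq n}}\hat B'$.

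This last point I would prove by induction on the context $\hat C$ of the redex, where $\hat A=\hat C[\sigma(\alpha)]$ and $\hat B=\hat C[\sigma(\beta)]$ for a rule $(\alpha\to\beta)\in M_{\leq n}$ and substitution $\sigma$. The context cases reduce to the base case $\hat C=\Box$ using that $\hat A',\hat B'$ are non-constant, together with the observation --- the systematic version of the ``or the inference would be trivial'' step of \Cref{non-triv-const-free-neg-free-derivability-without-units} --- that a valid step applied inside a context cannot carry a non-tautologous subformula to a tautologous one (equivalently, a non-refutable one to a refutable one) without $\mathsf r$ being trivial, since one could compose with the valid prefix and suffix of the derivation and invoke \Cref{trivial-composition}; this closes the awkward sub-cases in which a sibling of the redex collapses to a constant. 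For $\hat C=\Box$: if $\sigma$ sends every input variable of $\alpha$ to a formula with non-constant normal form, then $\hat A'$ is still a genuine instance of $\alpha\to\beta$, so $\hat A'\to_{M_{\leq n}}\hat B'$ directly. Otherwise $\hat A'\to\hat B'$ is an instance of the inference $\alpha_1\to\beta_1$ obtained from $\alpha\to\beta$ by substituting $\bot/\top$ for the offending inputs and normalising; $\alpha_1\to\beta_1$ is automatically constant-free negation-free, has fewer than $n$ variables, is valid, and is nontrivial (a triviality of $\alpha_1\to\beta_1$ would, by monotonicity of negation-free formulas, give a triviality of the instance $\hat A'\to\hat B'$ at one of the $\V$-variables occurring in it, hence --- chaining along the derivation --- a triviality of $\mathsf r$). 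By \Cref{mn-compl-with-units} (or, when $n=5$, completeness of $\{\s,\m\}$ for $L_{\leq 4}$), $\alpha_1\to\beta_1$ is $M_{<n}$-derivable with units, hence by the inductive hypothesis $M_{<n}$-derivable without units; closing under substitution and contexts gives $\hat A'\red_{M_{\leq n}}\hat B'$, and splicing these subderivations into the normalised derivation yields $\phi\red_{M_{\leq n}}\psi$.

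The main obstacle is the triviality bookkeeping concentrated in the second paragraph: where \Cref{non-triv-const-free-neg-free-derivability-without-units} could simply tabulate the finitely many $\leq 3$-variable linear inferences and observe each is an $\sim_\ac$- or $\s$-instance, here one must argue abstractly that every constant-reduced rule instance actually occurring in a derivation of a nontrivial $\mathsf r$ is itself nontrivial (or an identity, absorbed by $\sim_\ac$). I expect this to need a short auxiliary lemma bundling together: (i) substituting $\bot$ or $\top$ for a single input of a valid linear inference again yields a valid linear inference; (ii) substituting a nonconstant formula into a trivial argument position preserves triviality, via monotonicity of negation-free formulas; and (iii) \Cref{trivial-composition}, that pre- or post-composing a triviality at $x$ with a valid inference again gives a triviality at $x$. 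The reduction-to-$\V$-variables step at the start of the argument is exactly what makes clause (ii) usable, as it guarantees that the trivialising variable of any reduct lies in $\V$ and so a triviality genuinely transfers to $\mathsf r$.
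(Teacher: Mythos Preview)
Your proposal is correct and follows essentially the same architecture as the paper's proof: outer induction on $n$ with \Cref{non-triv-const-free-neg-free-derivability-without-units} as base case, normalisation of every formula in a given $M_{\leq n}$-derivation via \Cref{prop:unit-free}, replacement of each $\sim_\acu$ by $\sim_\ac$ via \Cref{acu-and-logical-equivalence}, and an inner structural induction on each rewrite step to produce a unit-free $\red_{M_{\leq n}}$ subderivation, with the key sub-case being when some substituend normalises to a constant and one falls back on the $<n$-variable inductive hypothesis via \Cref{mn-compl-with-units}.

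Where you diverge from the paper is in rigour rather than strategy. The paper, at the crucial point, simply writes that the constant-reduced rule $\gamma\to\delta$ lies in $L_{<n}$ and is therefore $M_{<n}$-derivable with units, ``and so applying the main inductive hypothesis'' is $M_{<n}$-derivable without units; it does not pause to check that $\gamma\to\delta$ is nontrivial, which is a hypothesis of that inductive statement. You correctly flag this as the main obstacle and supply the missing ingredients: the preprocessing step substituting $\top$ for variables outside $\V$ (absent from the paper) guarantees that any triviality arising in an intermediate step is at a $\V$-variable, and your monotonicity argument propagates a putative triviality of $\alpha_1\to\beta_1$ first to the concrete instance $\hat A'\to\hat B'$ and then, via \Cref{trivial-composition}, to $\mathsf r$ itself, yielding the desired contradiction. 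This argument is sound (the case split on whether $\chi_v[\top/x]$ evaluates to $1$ uses triviality of $\alpha_1\to\beta_1$ at $v$, and the other branch uses only validity of $\alpha_1\to\beta_1$ together with monotonicity of $\beta_1$), so your sketch genuinely closes a gap that the paper's presentation leaves open.
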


\begin{proof}
We structure the whole proof by induction on $n$. The base case is given by \Cref{non-triv-const-free-neg-free-derivability-without-units}, and we follow a variation of the argument therein for the inductive step too.

Again, for any formula $\phi$, write $\phi'$ for the unique constant-free formula (or $\bot $ or $\top$) $\sim_\un$-equivalent to $\phi$, cf.~\cref{prop:unit-free}.
Suppose $\phi \to \psi$ is a nontrivial constant-free negation-free linear inference with an $M_{\leq n}$-derivation (with units) of the form:
\begin{equation}
    \label{eq:nontriv-mn-derivation}
    \phi \sim_\acu \phi_0 \to_{M_{\leq n}} \psi_0 \sim_\acu \cdots \sim_\acu \phi_n \to_{M_{\leq n}} \psi_n \sim_\acu \psi
\end{equation}
Like in \Cref{non-triv-const-free-neg-free-derivability-without-units}, we proceed by replacing each $\phi_i$ and $\psi_i$ by $\phi_i'$ and $\psi_i'$ respectively.
As before, we indeed have that $\psi_i' \sim_\ac \phi_{i+1}'$, thanks to \Cref{acu-and-logical-equivalence}.

For the inferences $\phi_i' \to \psi_i'$, we shall show by induction on the definition of $\to_{M_{\leq n}}$ (namely as the closure of $M_{\leq n}$ under contexts and substitution) that, whenever $\phi\to_{M_{\leq n}} \psi$, we have $\phi' \red_{M_{\leq n}} \psi'$.
(Note the difference here with \Cref{non-triv-const-free-neg-free-derivability-without-units} where we showed the stronger statement that $\phi' \sim_\ac \psi'$ or $\phi' \rightsquigarrow_\ms \psi'$, instead of $\phi' \red_\ms \psi'$).
\begin{itemize}
    \item In the case of context closure, the argument is the same as for \Cref{non-triv-const-free-neg-free-derivability-without-units}, relying on context closure of $\red_{X}$ in place of $\rightsquigarrow_X$.
    \item Suppose $\phi \to_{M_{\leq n}} \psi $ with $\phi = \alpha (\chi_0, \dots, \chi_k)$ and $\psi = \beta(\chi_0, \dots, \chi_k)$ where $\alpha (x_0, \dots, x_k) \to \beta(x_0, \dots, x_k)$ (all variables displayed) is in $M_{\leq n}$ (and so $k<n$).
    \begin{itemize}
        \item If each $\chi_i'$ is not a constant $\bot$ or $\top$ then, just as in \Cref{non-triv-const-free-neg-free-derivability-without-units}, we have $\phi' = \alpha (\chi_0', \dots, \chi_k') $ and $\psi' = \beta (\chi_0' , \dots, \chi_k')$ and so $\phi' \to \psi'$ is also an instance of $\alpha (x_0, \dots, x_n) \to \beta(x_0, \dots, x_n)$ in $M_{\leq n}$.
        \item Otherwise, $\phi' \to \psi'$ is an instance of some inference $\mathsf r: \gamma \to \delta$ on $<n$ variables, so that $\mathsf r \in L_{<n}$.
        Now we appeal to \cref{mn-compl-with-units} so that $\mathsf r$ is $M_{<n}$-derivable with units and so applying the main inductive hypothesis we have that $\mathsf r$ is $M_{<n}$-derivable \emph{without} units. Finally, since derivations are closed under substitution, we have that $\phi' \to \psi'$ is $M_{<n}$ derivable (without units), i.e.\ $\phi' \red_{M_{<n}} \psi'$ as required.
    \end{itemize}
\end{itemize}

Now, returning to \cref{eq:nontriv-mn-derivation}, we have by the above argument that each $\phi_i' \red_{M_{\leq n}} \psi_i'$, whence we conclude since $\red_{M_{\leq n}}$ is reflexive and transitive.
\end{proof}

Putting together \Cref{lem:n-min-set-char}, \Cref{mn-compl-with-units}, and \Cref{thm:mn-withunits-to-withoutunits} above we finally have:
\begin{cor}
    \label[cor]{cor:m-n-classification}
  For all $n$, $M_n$ is precisely the set containing (a renaming, modulo $\ac$, of) each logically minimal constant-free negation-free non-trivial $n$-variable linear inference that is not an instance without units of some inference in $M_{<n}$.
\end{cor}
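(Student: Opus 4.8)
The plan is to prove the two inclusions of the claimed set equality (up to renaming modulo $\ac$) by assembling the three cited ingredients. First I would dispose of the cases $n\leq 4$ directly: here $M_n$ is defined explicitly ($\emptyset$ for $n\leq 2$, $\{\s\}$ for $n=3$, $\{\m\}$ for $n=4$), and on $\leq 2$ variables every non-trivial constant-free negation-free linear inference with the same variables on both sides is a logical equivalence (hence $\sim_\acu$-derivable, i.e.\ derivable with units from the empty set, and so legitimately excluded from every $M_n$), while for $n=3,4$ one checks by hand (or from the known completeness of $\{\s,\m\}$ for $L_{\leq 4}$) that $\s$, resp.\ $\m$, is the unique relevant inference up to renaming. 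From now on assume $n\geq 5$.

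For $n\geq 5$ I would first recall the characterisation noted after \Cref{lem:n-min-set-char}: a set is $n$-minimal exactly when it consists of (a renaming modulo $\ac$ of) each non-trivial logically minimal $\mathsf r\in L_{\leq n}$ that is not $L_{<n}$-derivable with units; and by \Cref{prop:const-free-neg-free} these representatives may be chosen constant-free and negation-free, so that $M_n$ is precisely the set of such representatives. Moreover, by \Cref{prop:diff-vars}, any $\mathsf r\in L_{\leq n}$ that is not $L_{<n}$-derivable must have the same variable set, necessarily of size $n$, on both sides (otherwise it is derivable from a strictly smaller inference together with switch and medial, hence from $L_{<n}$). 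Hence it suffices to prove, for a non-trivial logically minimal constant-free negation-free $n$-variable linear inference $\mathsf r:\phi\to\psi$ with $\phi\not\equiv\psi$, the equivalence: $\mathsf r$ is $L_{<n}$-derivable with units $\iff$ $\mathsf r$ is an instance without units of some inference in $M_{<n}$. Given this, the condition ``not $L_{<n}$-derivable with units'' in the characterisation of $M_n$ becomes exactly ``not an instance without units of some inference in $M_{<n}$'', which is the content of the corollary.

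The reverse direction of the equivalence is immediate: if $\phi\rightsquigarrow_{M_{<n}}\psi$ then, since $M_{<n}\subseteq L_{<n}$, we get $\phi\red_{L_{<n}}\psi$ without (hence with) units. For the forward direction, suppose $\mathsf r$ is $L_{<n}$-derivable with units. Since $M_{<n}=M_{\leq n-1}$ is complete for $L_{\leq n-1}=L_{<n}$ (by completeness of $\{\s,\m\}$ for $L_{\leq 4}$ when $n=5$, and by \Cref{mn-compl-with-units} applied at $n-1\geq 5$ when $n\geq 6$), and since $\red$ is closed under substitution and contexts, we may replace each $L_{<n}$-step in the given derivation by an $M_{<n}$-with-units derivation of that step, obtaining that $\mathsf r$ is $M_{<n}$-derivable with units. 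Now $\mathsf r$ is constant-free, negation-free and non-trivial, so \Cref{thm:mn-withunits-to-withoutunits} applied at $n-1\geq 4$ yields that $\mathsf r$ is $M_{<n}$-derivable \emph{without} units. Finally, since $\mathsf r$ is logically minimal and not a logical equivalence, the single-step property of logically minimal inferences (cf.\ the discussion of Phase~7 in \Cref{sec:search}) forces that derivation to consist of a single $M_{<n}$-step, i.e.\ $\phi\rightsquigarrow_{M_{<n}}\psi$, as required.

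I expect the real substance to be entirely carried by \Cref{thm:mn-withunits-to-withoutunits}, the difficult unit-elimination result; the remaining work is bookkeeping. The points needing care will be: (i) the two splicing arguments — replacing $L_{<n}$-steps by $M_{<n}$-with-units derivations, and, for the $\subseteq$-inclusion, observing that if some $\mathsf r\in M_n$ were an instance without units of an inference in $M_{<n}$ then $(M_n\setminus\{\mathsf r\})\cup L_{<n}$ would still derive all of $L_n$, contradicting minimality of $M_n$ — both relying on closure of $\red$ under substitution and contexts; (ii) invoking the single-step property for logically minimal inferences so that ``$M_{<n}$-derivable without units'' upgrades to ``instance without units''; and (iii) handling the degenerate cases (logical equivalences and $n\leq 4$), where one must check such inferences are derivable from $\emptyset$ with units and hence correctly excluded from every $M_n$. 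None of these is deep, but they are where a careless write-up could slip.
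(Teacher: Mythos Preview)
Your proposal is correct and follows essentially the same approach as the paper: reduce to the characterisation of $M_n$ via \Cref{lem:n-min-set-char}, use completeness of $M_{<n}$ for $L_{<n}$ (\Cref{mn-compl-with-units}) to pass from $L_{<n}$-derivability to $M_{<n}$-derivability with units, apply \Cref{thm:mn-withunits-to-withoutunits} to drop units, and finally invoke logical minimality to collapse derivability to a single instance. Your write-up is more explicit than the paper's (which compresses the $L_{<n}\leftrightarrow M_{<n}$ step into a single clause), but the skeleton is identical.
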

\begin{proof}
    This is easily checked for $n \leq 4$. For $n > 4$, we have from \Cref{lem:n-min-set-char,mn-compl-with-units} that $M_n$ consists of (a renaming, modulo $\ac$, of) each logically minimal constant-free negation-free non-trivial $n$-variable linear inference that is not derivable by $M_{<n}$ with units.
    By \Cref{thm:mn-withunits-to-withoutunits}, we have that it is sufficient to be not derivable from $M_{<n}$ without units, and since inferences of $M_n$ are logically minimal, this is the same as each inference of $M_n$ not being an instance of any inference in $M_{<n}$.
\end{proof}

\subsection{Enumerating \texorpdfstring{$M_n$}{Mn} using graphs}
\label{sec:enum-m-n}

With the results of the previous section, we can now compute each $M_n$ using the techniques from \cref{sec:webs} and \cref{sec:algorithm}. We start with some definitions:

\begin{defi}
    Recall that graphs have an ordering $<$ derived from their numerical representation. Further recall that we call a graph least if it is the smallest element (with respect to this ordering) in its isomorphism class. Let an inference of graphs $\R \to \S$ be \textbf{least} if $\R$ is least and there is no isomorphism $\rho$ such that $\rho(\R) = \R$ and $\rho(\S) < \S$.

    Inductively define $W_n$ to be the set containing each logically minimal non-trivial $n$-variable least linear inference between $P_4$-free graphs which is not an instance of any inference in $W_{<n}$ (i.e., under substitutions of $P_4$-free graphs for nodes).
\end{defi}

We now claim the following:

\begin{lem}
$M_n$ contains just (a renaming of) a cograph decomposition of each $\mathsf r \in W_n$.
\end{lem}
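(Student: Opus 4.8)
The plan is to prove the lemma by induction on $n$, showing that the web operation $\W$ restricts to a bijection between $\sim_\ac$-equivalence classes of constant-free negation-free $n$-variable linear inferences and $n$-node $P_4$-free graph inferences, under which $M_n$ — as characterised by \cref{cor:m-n-classification} — corresponds to (the cograph decompositions of) $W_n$. First I would assemble the "dictionary" from \cref{sec:webs}. By \cref{prop:webs-equiv-classes-ac}, $\phi \mapsto \W(\phi)$ induces a bijection between $\sim_\ac$-classes of constant-free negation-free linear formulae and $P_4$-free graphs, with cograph decomposition as an inverse. By \cref{rem:using-webs}, \cref{lem:rw-inference}, \cref{lem:rw-trivial}, together with \cref{acu-and-logical-equivalence}, validity, non-triviality and logical minimality of such an inference depend only on its web and coincide with the corresponding graph-theoretic notions; for logical minimality one uses that, since the endpoints of a non-trivial inference are constant-free on the same $n\geq 1$ variables, any interpolant may be taken constant-free on those variables, so that logical equivalence of interpolants becomes web-equality. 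Finally, a \emph{least} inference is just a canonical representative per isomorphism class, and since graph isomorphism is exactly renaming of variables, restricting to least inferences picks one representative per renaming-class — consistent with $M_n$ being unique only up to $\ac$ and renaming.

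Next I would match the recursive clauses. Fix $n>4$ and assume the lemma for all $m<n$, so $M_{<n}$ is (a renaming, mod $\ac$, of) the set of cograph decompositions of $W_{<n}$. The key point is that for a constant-free negation-free linear inference $\mathsf p$, being an instance without units of some inference in $M_{<n}$ is equivalent to $\W(\mathsf p)$ being an instance of some inference in $W_{<n}$ under substitutions of $P_4$-free graphs for nodes. Indeed, unfolding $\rightsquigarrow_S$ as $\sim_\ac \cdot \to_S \cdot \sim_\ac$, an instance without units of a pattern $\alpha \to \beta$ is obtained by substituting constant-free formulae into a context, which on webs is precisely the module-based substitution of cographs from \cref{sec:library} (recall: specialised to $P_4$-free graphs, that procedure agrees with the usual notion of rewrite step). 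Moreover "being an instance of a fixed set" is invariant under renaming the pattern, so it is immaterial that $M_{<n}$ contains formulae on arbitrary variable sets while $W_{<n}$ contains least representatives, or that $M_{<n}$ is only determined up to renaming. Hence, via the dictionary, a constant-free negation-free $n$-variable linear inference is logically minimal, non-trivial and not an instance without units of $M_{<n}$ iff its web is a logically minimal non-trivial $n$-node $P_4$-free graph inference that is not an instance of $W_{<n}$.

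With this in hand I would close the induction. Given $\mathsf r \in W_n$, take any cograph decomposition $\phi_{\mathsf r} \to \psi_{\mathsf r}$; it is constant-free and negation-free by construction of cotrees, on $n$ variables, and by the previous paragraph logically minimal, non-trivial and not an instance without units of $M_{<n}$, so \cref{cor:m-n-classification} places (a renaming of) it in $M_n$. Conversely, every $\mathsf q \in M_n$ is (a renaming mod $\ac$ of) a logically minimal constant-free negation-free non-trivial $n$-variable inference not an instance without units of $M_{<n}$; its web $\W(\mathsf q)$ is then a logically minimal non-trivial $n$-node $P_4$-free graph inference not an instance of $W_{<n}$, and the unique least inference in its renaming-class is some $\mathsf r\in W_n$ of which $\mathsf q$ is a renaming (mod $\ac$) of the cograph decomposition. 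Distinct elements of $W_n$ are non-isomorphic, hence have cograph decompositions inequivalent up to renaming and $\ac$, so the correspondence is a bijection and $M_n$ is exactly (a renaming of) the set of cograph decompositions of $W_n$. For the base cases $n\leq 4$ I would check directly that $W_0=W_1=W_2=\emptyset$ (no size-$\leq 2$ non-trivial non-equivalence inference), and that $W_3$ and $W_4$ are the least representatives of $\s$ and $\m$ respectively, the latter using completeness of $\{\s,\m\}$ for $L_{\leq 4}$ and the fact that $\s,\m$ are logically minimal, non-trivial, and neither an instance of the other.

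I expect the main obstacle to be the bookkeeping in the second paragraph: precisely identifying "instance without units modulo $\ac$" of a formula inference with "instance under $P_4$-free substitution" of the corresponding web inference, handling contexts via modules and the renaming freedom of $M_{<n}$. A secondary point is the transfer of logical minimality, which needs the small observation that interpolants of a non-trivial constant-free inference can be assumed constant-free on the same variables; everything else is a routine application of the results already established in \cref{sec:webs,sec:further-theoretical}.
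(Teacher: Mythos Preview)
Your proposal is correct and follows essentially the same approach as the paper's own proof: induction on $n$, invoking \cref{cor:m-n-classification} to characterise $M_n$, and using the inductive hypothesis to match ``not an instance without units of $M_{<n}$'' with ``not an instance of $W_{<n}$''. Your treatment is somewhat more explicit than the paper's in spelling out the transfer of logical minimality via interpolants and the module-based correspondence between formula instances and graph instances, but these are elaborations of the same argument rather than a different route.
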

\begin{proof}
The proof proceeds by induction on $n$ by using the classification of \Cref{cor:m-n-classification}.

First, all cograph decompositions of inferences in $W_n$ are logically minimal non-trivial $n$-variable inferences by definition, and must be constant-free and negation-free, as they are decompositions of graph-based inferences. Further they are not an instance (without units) of some inference in $M_{<n}$ as this would render them an instance of an inference in $W_{<n}$ by the inductive hypothesis.
Since two formulae have the same web if and only if they are $\sim_\ac$ equivalent, a renaming of some cograph decomposition of each inference in $W_n$ is present in $M_n$.

Conversely, suppose $\mathsf r : \phi \to \psi$ is a logically minimal constant-free negation-free non-trivial $n$-variable inference that is not an instance (without units) of some inference in $M_{<n}$. Then there is some isomorphism $\rho$ such that $\rho(\W(\phi)) \to \rho(\W(\psi))$ is a logically minimal non-trivial $n$-variable least linear inference. Further it is not an instance of $W_{<n}$ by inductive hypothesis and assumption, and so $\rho(\W(\phi)) \to \rho(\W(\psi)) $ is in $W_n$.
Note that $\mathsf r$ is, by definition, just a renaming of some cograph decomposition of $\rho(\W(\phi)) \to \rho(\W(\psi))$.
\end{proof}

We now find each $W_n$ by inductively finding $W_{\leq n}$ for each $n$. The base case is given that $W_{\leq 4} = \{\s,\m\}$.
For the inductive step we suppose we want to find $W_n$ and $W_{\leq n}$ for some $n > 5$, and assume that we already know $W_{<n}$. We simply use the search algorithm specified in \Cref{sec:search} with inputs \(n\) and using \(W_{<n}\) as the set of rewrites to check against. This will find us a set of \(n\)-dimensional inferences between \(P_4\)-free graphs which are logically minimal, valid, non-trivial, have a least graph as their premise, and are not derivable from \(W_{<n}\). From here, $M_n$ is calculated by simply taking cograph decompositions of the elements of $W_n$. This is exactly the procedure \(\basis\), described at the end of \Cref{sec:search}, which is part of the executable portion of the implementation.

\begin{thm}
The recursive algorithm \(\basis(n) \) introduced in \Cref{sec:search} computes the set \(M_n\).
\end{thm}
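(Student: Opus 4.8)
The plan is to prove correctness of $\basis$ by induction on $n$, so that the statement reduces to the single claim that the stage-$n$ invocation of the search algorithm of \Cref{sec:search} emits exactly the set $W_n$ from the last definition; forming cograph decompositions of that output then yields $M_n$ by the preceding lemma, with $\basis$ accumulating $M_{\leq n}=\bigcup_{m\leq n}M_m$, whose $n$-th increment is $M_n$. The base case coincides with the base case $W_{\leq 4}=\{\s,\m\}$ of the recursion defining $W_n$: running $\basis$ through levels $0,\dots,4$ returns nothing at levels $\leq 2$, switch at level $3$ and medial at level $4$, in accordance with the stipulated values $M_0=M_1=M_2=\emptyset$, $M_3=\{\s\}$ and $M_4=\{\m\}$.

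For the inductive step, assume the claim for all $m<n$, so $\basis(n-1)=M_{\leq n-1}$. The first point is that the check set handed to the stage-$n$ search, namely $\basis(n-1)$, induces on $P_4$-free graphs exactly the rewrite relation used in \Cref{sec:enum-m-n}, namely that of $W_{<n}$: by the inductive hypothesis each $M_m$ $(m<n)$ consists of cograph decompositions of $W_m$ up to renaming, and $\s,\m$ are cograph decompositions of their own webs; moreover ``being an instance of a cograph decomposition'' coincides with ``being a graph-theoretic instance'' for $P_4$-free graphs by the module-based analysis at the end of \Cref{sec:algorithm}, and since logically minimal inferences are $S$-derivable exactly when they are a single-step instance of an element of $S$, Phase~7's test against $\basis(n-1)$ agrees with the ``not an instance of $W_{<n}$'' clause defining $W_n$. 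Granting this, it remains to check that the seven phases, together with the concluding isomorphism quotient, compute $W_n$: Phase~1 enumerates all $P_4$-free graphs on $n$ nodes, correct because induced subgraphs of $P_4$-free graphs are $P_4$-free; Phase~2 selects the least webs and records isomorphisms to least representatives, whose soundness for restricting the premise is \Cref{lem:least}; Phase~3 lists maximal cliques, used by Phases~4 and~5 to decide validity and non-triviality via \Cref{lem:rw-inference} and \Cref{lem:rw-trivial}; Phase~6 keeps the logically minimal inferences using $M_\R=\Phi_\R\setminus\bigcup_{\R'\in\Phi_\R}\Phi_{\R'}$, applied to non-least $\R$ via the stored isomorphisms; Phase~7 discards those derivable from the check set; and the final quotient by the residual self-isomorphisms of the (least) premise removes duplicate right-hand sides, leaving precisely a transversal of $W_n$.

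The main obstacle is this last verification, and within it the interplay of the two isomorphism-reduction steps: restricting the premise to a least web (Phase~2, licensed by \Cref{lem:least}) and then quotienting the admissible right-hand sides by the stabiliser of that premise, equivalently keeping only ``least inferences''. One must argue that these two steps together select exactly one representative, modulo $\acu$ and renaming of variables, of each class of logically minimal non-trivial $n$-variable linear inference not derivable from $M_{<n}$ --- no such class being lost, and none represented twice --- so that the algorithm's output matches $W_n$ on the nose and hence, via cograph decomposition and \Cref{cor:m-n-classification}, equals $M_n$.
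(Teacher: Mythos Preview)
Your proposal is correct and follows essentially the same approach as the paper: the paper's justification (which is implicit rather than a formal proof) rests on the preceding lemma identifying $M_n$ with cograph decompositions of $W_n$, together with the observation that the stage-$n$ search with check set $W_{<n}$ reproduces the defining conditions of $W_n$; you make this induction explicit and additionally walk through the individual phases. One small terminological slip: you write ``a transversal of $W_n$'', but $W_n$ is already, by definition, the canonical transversal (one least inference per isomorphism class), so the output should simply \emph{equal} $W_n$ rather than be a transversal of it.
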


It should further be remarked that running this recursive algorithm for \(n = 7\) and \(n = 8\) replicates the main results of the paper, as \(M_{<7} = M_{<8} = \{\s,\m\}\), and so the recursive algorithm simply checks for switch medial derivability in these cases.

\section{New 9-variable minimal linear inferences}
\label{sec:9var-infs}
Running $\basis(9)$ found 10 distinct $9$-variable inferences which are logically minimal, non-trivial, and not derivable from switch, medial, and the four 8-variable inferences presented in \cref{sec:8var-inf}. These are the 5 inferences in \cref{fig:nine_vars}, along with each of their duals, which form the set $M_9$. None of the 9-variable inferences were self-dual.
Running $\basis(10)$ seemed infeasible on a standard desktop.

In what follows we provide some analysis of these new 9-variable inferences, though a more detailed inspection of these (and, indeed, larger inferences) could form the subject of future work.
The content of this section is new and does not appear in the preliminary version \cite{DasRice21}.

\begin{figure}
    \centering
    \begin{equation}
        \label{eq:nine_one}
        \begin{alignedat}{2}
        & &&((x \land x') \lor x'') \land ((y \land (y' \lor y'')) \lor (z \land (z' \lor z'')))\\
        &\to &\quad& ((z \lor y'') \land (x'' \lor (x' \land z'))) \lor ((x \lor y') \land (y \lor z''))
        \end{alignedat}
    \end{equation}
    \\
    \begin{equation}
        \label{eq:nine_two}
        \begin{alignedat}{2}
        & &&((x \land x') \lor x'') \land ((y \land (y' \lor y'')) \lor (z \land (z' \lor z'')))\\
        &\to &\quad& ((z \lor y'') \land (x'' \lor z')) \lor ((x' \lor y') \land (y \lor (x \land z'')))
        \end{alignedat}
    \end{equation}
    \\
    \begin{equation}
        \label{eq:nine_three}
        \begin{alignedat}{2}
        & &&((x \land x') \lor x'') \land ((y \land (y' \lor y'')) \lor (z \land (z' \lor z'')))\\
        &\to &\quad& (y \land (x \lor y')) \lor ((z \land (x' \lor z')) \lor (x'' \land (y'' \lor z'')))
        \end{alignedat}
    \end{equation}
    \\
    \begin{equation}
        \label{eq:nine_four}
        \begin{alignedat}{2}
        & &&((w \land (x \lor x')) \lor w') \land ((y \land y') \lor (z \land (z' \lor z'')))\\
        &\to &\quad& ((w' \lor (w \land (x \lor y))) \land (y' \lor z')) \lor (z \land (x' \lor z''))
        \end{alignedat}
    \end{equation}
    \\
    \begin{equation}
        \label{eq:nine_five}
        \begin{alignedat}{2}
        & &&((w \land (x \lor x')) \lor w') \land ((y \land y') \lor (z \land (z' \lor z'')))\\
        &\to &\quad& ((w' \lor (x' \land y)) \land (y' \lor z')) \lor ((x \lor z) \land (w \lor z''))
        \end{alignedat}
    \end{equation}
    \caption{9-variable inferences which form $M_9$ with their duals.}
    \label[fig]{fig:nine_vars}
\end{figure}

\setcounter{subsection}{-1}

\subsection{General comments}
Before turning to each inference individually, let us remark on some global patterns and statistics.

\subsubsection{Recurring formula structures}
Note that the inferences \eqref{eq:nine_one}, \eqref{eq:nine_two} and \eqref{eq:nine_three} have the same LHS, and also \eqref{eq:nine_four} and \eqref{eq:nine_five} have the same LHS.
On the other hand  \eqref{eq:nine_one}, \eqref{eq:nine_two} and \eqref{eq:nine_five} have the same RHS (up to renaming, associativity and commutativity).
The prevalence of certain formula structures for the LHS and RHS of minimal linear inferences feels somewhat surprising and is perhaps worthy of further investigation.

Of course the dual properties for the RHSs and LHSs of the dual inferences naturally hold too: the duals of \eqref{eq:nine_one}, \eqref{eq:nine_two} and \eqref{eq:nine_three} have the same RHS; the duals of \eqref{eq:nine_four} and \eqref{eq:nine_five} have the same RHS; and the duals of \eqref{eq:nine_one}, \eqref{eq:nine_two} and \eqref{eq:nine_five} have the same LHS (up to renaming, associativity and commutativity).

\subsubsection{Numbers of red/green edges and connectives.}
Each of the inferences in \Cref{fig:nine_vars} has more red edges (equivalently, fewer green edges) in the LHS than in the RHS.
Apart from medial, every minimal linear inference thus far encountered in the literature has this property, and it seems natural to expect this to be the case for all inferences in $M_{n}$ for $n>4$.

Each inference also has more $\land$s (equivalently, fewer $\lor$s) in the LHS than in the RHS.
In fact, each inference has precisely 4 $\land$s (and so 4 $\lor$s too) in the LHS, and 3 $\land$s (and so 5 $\lor$s) in the RHS.
Again, note that dual properties for the dual inferences hold too.

\subsubsection{Relevance to conjectures from \cite{DasStr16:no-compl-lin-sys}}
All the inferences except for \eqref{eq:nine_three} serve as further counterexamples to Conjecture 7.9 from \cite{DasStr16:no-compl-lin-sys}.
Conjecture 7.8 from that work concerned minimal linear inferences that preserve (or increase) the number of $\land$s. None of the new 8 or 9 variable inferences presented in this work have this property, and it is indeed possible that each inference in $M_{>3}$ always decreases the number of $\land$s.

All of the 8 variable inferences in $M_8$ were instances (under units, associativity and commutativity) of some inferences in $M_9$.
Switch is also an instance of each inference of $M_9$.
However, despite the now numerous counterexamples to Conjecture 7.9 from \cite{DasStr16:no-compl-lin-sys}, it is interesting to note that medial is not an instance of any of the 8 or 9 variable inferences in this work (cf.~also \cite[Conjecture 7.10]{DasStr16:no-compl-lin-sys}).

\subsection{Inference \eqref{eq:nine_one}}
The relation webs corresponding to inference \eqref{eq:nine_one} are given by:
\[ \NineGraph{x,x',x'',y,y',y'',z,z',z''}rgrrrrrrgrrrrrrrrrrrrrrggggggggggrrg
\quad \raisebox{30pt}{$\to$} \quad
\NineGraph{x,x',x'',y,y',y'',z,z',z''}ggrggggrgggrrrgggrrggrgggggggrgrgrgg \]
\begin{itemize}
    \item The LHS has 23 red edges (and 13 green edges), and the RHS has 11 red edges (and 25 green edges).
    \item The inference flips 14 red edges to green, and 2 green edges to red.
    \item The inference generalises switch, but also \eqref{eq:unknown-inference-8-repeated} (by the substitution $z'' \mapsto \bot$) and $\overline{\eqref{eq:counterexample-inference-repeated}}$ (by the substitution $x'\mapsto \top$).
\end{itemize}

\subsection{Inference \eqref{eq:nine_two}}
The relation webs corresponding to inference \eqref{eq:nine_two} are given by:
\[ \NineGraph{x,x',x'',y,y',y'',z,z',z''}rgrrrrrrgrrrrrrrrrrrrrrggggggggggrrg
\quad \raisebox{30pt}{$\to$} \quad
\NineGraph{x,x',x'',y,y',y'',z,z',z''}rggrgggrgrggggrggrrggrgggggggrgrgrgg \]
\begin{itemize}
    \item The LHS has 23 red edges (and 13 green edges), and the RHS has 11 red edges (and 25 green edges).
    \item The inference flips 14 red edges to green, and 2 green edges to red.
    \item The inference generalises switch, but also \eqref{eq:php32-derived-inf-repeated} (by the substitution $z' \mapsto \bot$) and $\overline{\eqref{eq:counterexample-inference-repeated}}$ (by the substitution $0\mapsto \top$).
\end{itemize}

Note that both this and the last inference, \eqref{eq:nine_one}, are quite similar.
Not only do they have the same LHS, their RHSs differ only in the positioning of $x$ and $x'$.
This similarity reminds us of the similarity between \eqref{eq:php32-derived-inf-repeated} and \eqref{eq:unknown-inference-8-repeated}, and indeed this is reflected by the way those 8-variable inferences are duly generalised by these 9-variable ones.

\subsection{Inference \eqref{eq:nine_three}}
The relation webs corresponding to inference \eqref{eq:nine_three} are given by:
\[ \NineGraph{x,x',x'',y,y',y'',z,z',z''}rgrrrrrrgrrrrrrrrrrrrrrggggggggggrrg
\quad \raisebox{30pt}{$\to$} \quad
\NineGraph{x,x',x'',y,y',y'',z,z',z''}ggrgggggggggrggggrggrrgggggggggggrgg \]
\begin{itemize}
    \item The LHS has 23 red edges (and 13 green edges), and the RHS has 6 red edges (and 30 green edges).
    \item The inference flips 17 red edges to green, and 0 green edges to red.
    \item This inference generalises switch, but none of the 8-variable inferences from $M_8$.
\end{itemize}
Note that this is the only inference of $M_9$ that does not flip a green edge to a red one, and thus is the only inference not to constitute a counterexample to Conjecture 7.9 from \cite{DasStr16:no-compl-lin-sys}.
In this way, since only red edges are changed to green ones, we may say that it is a `switch-like' inference.\footnote{Note that \cite{DasStr16:no-compl-lin-sys} showed that any `medial-like' inference, defined similarly, is already medial-derivable.}
It would be interesting to investigate how sparse or dense such inferences, in fact, are.
Indeed, it is not even clear whether there are infinitely many.
The inferences $\mathit{QHQ}_n$ from \cite{Str12:ext-wo-cut} would serve as natural starting points for such considerations.

Let us also point out that the RHS has a quite regular form, being a disjunction of three formulas each of the form $a\land (b\lor c)$.

\subsection{Inference \eqref{eq:nine_four}}
The relation webs corresponding to inference \eqref{eq:nine_four} are given by:
\[ \NineGraph{w,x,x',w',y,y',z,z',z''}rrgrrrrrggrrrrrgrrrrrrrrrrrggggggrrg
\quad \raisebox{30pt}{$\to$} \quad
\NineGraph{w,x,x',w',y,y',z,z',z''}rggrrgrggggrgrggggrgggrgrgrgrgggggrg \]
\begin{itemize}
    \item The LHS has 25 red edges (and 11 green edges), and the RHS has 12 red edges (and 24 green edges).
    \item The inference flips 14 red edges to green, and 1 green edge to red (namely the edge $(y,z')$).
    \item This inference generalises switch, but also \eqref{eq:unknown-inference-8-repeated} (by the substitution $x\mapsto \bot$).
\end{itemize}

Let us point out that this inference is distinguished in that its RHS has the highest logical depth among all inferences in $M_{\leq 9}$, exhibiting 4 alternations between $\lor $ and $\land$.
It would be interesting to see if inferences in $M_n$ become arbitrarily deep as $n$ increases.

\subsection{Inference \eqref{eq:nine_five}}
The relation webs corresponding to inference \eqref{eq:nine_five} are given by:
\[ \NineGraph{w,x,x',w',y,y',z,z',z''}rrgrrrrrggrrrrrgrrrrrrrrrrrggggggrrg
\quad \raisebox{30pt}{$\to$} \quad
\NineGraph{w,x,x',w',y,y',z,z',z''}rggggrggggggggrgrrgrggrgrgrgrgggggrg \]
\begin{itemize}
    \item The LHS has 25 red edges (and 11 green edges), and the RHS has 11 red edges and 25 green edges.
    \item The inference flips 15 red edges to green, and 1 green edge to red (namely the edge $(y,z')$).
    \item This inference generalises switch, but also the \eqref{eq:unknown-inference-8-repeated} (by the substitution $x\mapsto \bot$).
\end{itemize}

This inference is similar to the last one, in that the two have identical LHSs, each flips only one green edge to red, and each generalises the same inferences from $M_{\leq 8}$ (only \eqref{eq:unknown-inference-8-repeated}).
This inference is in a sense more `symmetric' in that the number of red edges in the LHS equals the number of green edges in the RHS (and vice-versa).

\section{Further remarks: applications to graph logics}
\label{sec:applications}

In this section we describe some further potential applications of our theoretical results in the previous section to \emph{graph logics}, and the generalised implementation available at \cite{Ric21:implementation}.

The content of this section is new and does not appear in the preliminary version \cite{DasRice21}.

\subsection{Graph inferences}
As mentioned earlier, our implementation is set up to be able to work with all graphs, and not just $P_4$-free graphs, in particular, we can check if an inference is an instance of any arbitrary graph rewrite. Using this, we can perform our algorithm without the restriction to $P_4$-free graphs to obtain sets $G_n$ of $n$-variable logically minimal non-trivial graph inferences which are not derivable (w.r.t.\ condition \eqref{item:valid-char-cliques}) from the set of smaller graph instances.
Note that although our software is able to check maximal stable set entailment (condition \eqref{item:valid-char-stable-sets}) as well as maximal clique entailment (condition \eqref{item:valid-char-cliques}), using the first would simply yield the dual of the results we list here.

There are still no non-trivial inferences for sizes $0$, $1$, and $2$. For $n = 3$, we have $G_3 = \{\s\}$. At $n = 4$ we get that $M_n \neq G_n$, as medial is no longer logically minimal, and in fact decomposes into the two following inferences:
\[ \FourGraph{0,1,2,3}ggrrgg \quad \to \quad \FourGraph{0,1,2,3}rgrrgg \]
\[ \FourGraph{0,1,2,3}rgrrgg \quad \to \quad \FourGraph{0,1,2,3}rgrrgr \]
These two inferences form $G_4$.

We list $G_5$ in \Cref{fig:g-5} which has 16 elements. We were further able to calculate $G_6$ which has 137 elements, and $G_7$, which has 2013 elements, but we omit these here.

Let us point out, as a word of warning, that the sort of unit-elimination that we carry out for $\{\s,\m\}$-derivability in \cref{sec:preliminaries}, and more generally for linear derivability in \cref{sec:further-theoretical}, has not been considered.
The significance of our sets $G_n$ should be established by similar such results, but this is beyond the scope of this work.

\begin{figure}
    \centering
    \bgroup
    \setlength{\tabcolsep}{20pt}
    \begin{tabular}{cc}
    \(\FiveGraph{0,1,2,3,4}ggrrrggggg
    \quad \to \quad
    \FiveGraph{0,1,2,3,4}ggrrrrgggg\)&

    \(\FiveGraph{0,1,2,3,4}rgrrrggggg
    \quad \to \quad
    \FiveGraph{0,1,2,3,4}rgrrrggrgg\)\\

    \(\FiveGraph{0,1,2,3,4}grgrrrgggg
    \quad \to \quad
    \FiveGraph{0,1,2,3,4}gggrgrgggg\)&

    \(\FiveGraph{0,1,2,3,4}grgrrrgggg
    \quad \to \quad
    \FiveGraph{0,1,2,3,4}grrrrrgggg\)\\

    \(\FiveGraph{0,1,2,3,4}grgrrrgggg
    \quad \to \quad
    \FiveGraph{0,1,2,3,4}grgrrrgggr\)&

    \(\FiveGraph{0,1,2,3,4}rrgrrrgggg
    \quad \to \quad
    \FiveGraph{0,1,2,3,4}grgrgrgggg\)\\

    \(\FiveGraph{0,1,2,3,4}rrgrrrgggg
    \quad \to \quad
    \FiveGraph{0,1,2,3,4}rrgrrrgggr\)&

    \(\FiveGraph{0,1,2,3,4}grrrrrgggg
    \quad \to \quad
    \FiveGraph{0,1,2,3,4}grgrrggggg\)\\

    \(\FiveGraph{0,1,2,3,4}rrrrrrgggg
    \quad \to \quad
    \FiveGraph{0,1,2,3,4}rrgrrrgggg\)&

    \(\FiveGraph{0,1,2,3,4}rrgrrrgrgg
    \quad \to \quad
    \FiveGraph{0,1,2,3,4}rrgrrrgggg\)\\

    \(\FiveGraph{0,1,2,3,4}rrgrrrgrgg
    \quad \to \quad
    \FiveGraph{0,1,2,3,4}rggrrrgrgg\)&

    \(\FiveGraph{0,1,2,3,4}ggrrrgrrgg
    \quad \to \quad
    \FiveGraph{0,1,2,3,4}ggrgrggrgg\)\\

    \(\FiveGraph{0,1,2,3,4}rgrrrgrrgg
    \quad \to \quad
    \FiveGraph{0,1,2,3,4}ggrrrggrgg\)&

    \(\FiveGraph{0,1,2,3,4}rgrrrgrrgg
    \quad \to \quad
    \FiveGraph{0,1,2,3,4}rggrrgrggg\)\\

    \(\FiveGraph{0,1,2,3,4}rrrrrgrrgg
    \quad \to \quad
    \FiveGraph{0,1,2,3,4}rrrgrgrrgg\)&

    \(\FiveGraph{0,1,2,3,4}grrrrrrrgg
    \quad \to \quad
    \FiveGraph{0,1,2,3,4}grrrrgrrgg\)\\
    \end{tabular}
    \egroup

    \caption{The set of $G_5$ of logically minimal nontrivial graph inferences not derivable by all smaller ones.}
    \label[fig]{fig:g-5}
\end{figure}

\newcommand{\MLL}{\mathsf{MLL}}
\newcommand{\GS}{\mathit{GS}}
\subsection{The graph logic GS}
Acclavio, Horne and Stra{\ss}burger introduce in \cite{AccHorStr20:mll-graphs-short,AccHorStr20:mll-graphs-full} a graph logic motivated by the linear logic $\MLL$.
In particular they define an extension $\GS$ of $\MLL + \mathsf{mix}$ to arbitrary graphs, that uniquely satisfy certain desiderata.
Their logic has a primitive treatment of negation, inducing a form of graph entailment.
Interestingly, $\GS$ entailment is \emph{incomparable} with the conditions \eqref{item:valid-char-cliques} and \eqref{item:valid-char-stable-sets} on graphs considered in the previous subsection, even though the former's restriction to $P_4$-free graphs is contained in the latter's.

It would be interesting to adapt our implementation to consider $\GS$-entailment instead of the graph entailments of the previous subsection, in order to better understand theorems of $\GS$ itself, but this consideration is left for future work.

\section{Conclusions}
\label{sec:conclusions}
In this work we undertook a computational approach towards the classification of linear inferences.
To this end we succeeded in exhausting the linear inferences up to 8 variables, showing that there are four (distinct) 8 variable linear inferences that are independent of switch and medial, using the algorithm $\basis$ presented in \Cref{sec:algorithm}.
One of these new inferences (and its dual) contradicts a Conjecture 7.9 from \cite{DasStr16:no-compl-lin-sys}.
Conversely, all linear inferences on 7 variables or fewer are already derivable using switch and medial.

We proposed in \Cref{sec:enum-m-n} a well-defined notion $M_{\leq n}$ of `basis' for linear inferences with $\leq n$ variables.
We showed that our algorithm $\basis(n)$ in fact correctly computes each $M_n$.
We were further able to run $\basis(n)$ for $n=9$ and thus classified in \Cref{sec:9var-infs} the logically minimal 9 variable linear inferences independent of all inferences on $<9$ variables. Our preliminary analysis uncovered that all but 1 of these was a counterexample to the aforementioned conjecture, suggesting that inferences of this form may not be scarce when the number of variables is increased. We were not able to evaluate \(\basis(10)\), but believe it is feasible to compute with more resources or with more optimisations to the algorithm, which could be explored in further work. Little is known about the nature of the sets of inferences \(M_n\), or the patterns within these sets and we believe the tool presented in the paper could function as a starting point for conducting a more in depth analysis of these sets.

The set of linear inferences is equivalent to the multiplicative fragment of Japaridze's \emph{computability logic} (see, e.g., \cite{Jap03:intro-comp-log,comp-log-survey-webpage}), and also Blass' \emph{game semantics} for linear logic \cite{Bla92:game-semantics-ll}.
It would be interesting to examine our new linear inferences and their classifications in terms of the games therein, but that is beyond the scope of the present work.

Finally, our implementation is adaptable to a variety of logics and, in particular, graph-based systems such as those from \cite{AccHorStr20:mll-graphs-short,AccHorStr20:mll-graphs-full,CalDasWar20:bgl}.
To this end we gave some initial results and proof of concept in \Cref{sec:applications}.

\bibliographystyle{alphaurl}
\bibliography{citations}

\appendix

\section{Further proofs and examples}
\label{sect:app:further-proofs-examples}

\subsection{Validity of Equation \ref{eq:php32-derived-inf-repeated}}
\label{sec:app:validity-php32-derived}
We consider each assignment that satisfies the LHS and argue that it also satisfies the RHS:
\begin{itemize}
\item $\{z,x,x'\}$ satisfies $z\land (x \lor y)$.
\item $\{z,y,z'\}$ satisfies $z\land (x \lor y)$.
\item $\{z,y',z'\}$ satisfies $(w \lor y') \land ((w' \land x') \lor z')$.
\item $\{w, w', x, x'\}$ satisfies $(w \lor y') \land ((w' \land x') \lor z')$.
\item $\{w, w', y, z'\}$ and $\{w, w', y', z'\}$ satisfy $(w \lor y') \land ((w' \land x') \lor z')$.
\end{itemize}

\subsection{Validity of Equation \ref{eq:unknown-inference-8-repeated}}
\label{sec:app:validity-unknown-inf}
We consider each assignment that satisfies the LHS and argue that it also satisfies the RHS:
\begin{itemize}
    \item $\{z,x,x'\}$ satisfies $(z \lor (w \land x)) \land (x' \lor y)$.
\item $\{z,y,z'\}$ satisfies $(z \lor (w \land x)) \land (x' \lor y)$.
\item $\{z,y',z'\}$ satisfies $(w' \lor y') \land z'$.
\item $\{w, w', x, x'\}$ satisfies $(z \lor (w \land x)) \land (x' \lor y)$.
\item $\{w, w', y, z'\}$ and $\{w, w', y', z'\}$ satisfy $(w' \lor y') \land z'$.
\end{itemize}

\subsection{Validity of Equation \ref{eq:counterexample-inference-repeated}}
\label{sec:app:validity-counterexample-inference}
We consider each assignment that satisfies the LHS and argue that it also satisfies the RHS:
\begin{itemize}
\item $\{w,w',y,y' \}$ satisfies $w\land y$.
\item $\{w,w',z,z' \}$ satisfies $w' \land z' $ and $z$.
\item $\{ x,x',y,y' \}$ satisfies $x$ and $x' \land y' $.
\item $\{x,x',z,z' \}$ satisfies $x$ and $z$.
\end{itemize}

\subsection{\texorpdfstring{$\{\s,\m\}$}{\{s,m\}}-independence and \texorpdfstring{$\{\s,\m\}$}{\{s,m\}}-minimality of Equation \ref{eq:php32-derived-inf-repeated}}
\label{sec:ind-min-php32}

There are two possible medial applications to the subformula $(x \land x') \lor ((y \lor y') \land z')$ resulting in the following new LHSs:
\begin{itemize}
\item $(z \lor (w \land w')) \land (x \lor y \lor y') \land (x' \lor z')$. In this case $\{z, y', x'\}$ is a countermodel.
\item $(z \lor (w \land w')) \land (x \lor z') \land (x' \lor y \lor y')$. In this case $\{z, z', x'\}$ is a countermodel.
\end{itemize}

There are two possible switch applications to the subformula $(y \lor y') \land z'$ resulting in the following new LHSs:
\begin{itemize}
\item $(z \lor (w \land w')) \land ((x \land x') \lor y \lor (y' \land z'))$. In this case $\{w ,w',y\}$ is a countermodel.
\item $(z \lor (w \land w')) \land ((x \land x') \lor y' \lor (y \land z'))$. In this case $\{z, y'\}$ is a countermodel.
\end{itemize}

Finally any other switch application is on the top-level conjunction, resulting in a formula of the form $z \lor X$, $(w \land w') \lor X$, $(x \land x') \lor X$ or $((y \lor y') \land z') \lor X$, which admits a countermodel $\{z\}$, $\{w, w'\}$, $\{x, x'\}$ or $\{y ,z'\}$, respectively.

\subsection{\texorpdfstring{$\{\s,\m\}$}{\{s,m\}}-independence and \texorpdfstring{$\{\s,\m\}$}{\{s,m\}}-minimality of Equation \ref{eq:unknown-inference-8-repeated}}
\label{sec:ind-min-unknown}

There are two possible medial applications to the subformula $(x \land x') \lor ((y \lor y') \land z')$ resulting in the following new LHSs:
\begin{itemize}
\item $(z \lor (w \land w')) \land (x \lor y \lor y') \land (x' \lor z')$. In this case $\{z, z', x\}$ is a countermodel.
\item $(z \lor (w \land w')) \land (x \lor z') \land (x' \lor y \lor y')$. In this case $\{z, x, y'\}$ is a countermodel.
\end{itemize}

There are two possible switch applications to the subformula $(y \lor y') \land z'$ resulting in the following new LHSs:
\begin{itemize}
\item $(z \lor (w \land w')) \land ((x \land x') \lor y \lor (y' \land z'))$. In this case $\{w ,w',y\}$ is a countermodel.
\item $(z \lor (w \land w')) \land ((x \land x') \lor y' \lor (y \land z'))$. In this case $\{z, y'\}$ is a countermodel.
\end{itemize}

Finally any other switch application is on the top-level conjunction, resulting in a formula of the form $z \lor X$, $(w \land w') \lor X$, $(x \land x') \lor X$ or $((y \lor y') \land z') \lor X$, which admits a countermodel $\{z\}$, $\{w, w'\}$, $\{x, x'\}$ or $\{y ,z'\}$, respectively.

\subsection{\texorpdfstring{$\{\s,\m\}$}{\{s,m\}}-independence and \texorpdfstring{$\{\s,\m\}$}{\{s,m\}}-minimality of Equation \ref{eq:counterexample-inference-repeated}}
\label{sec:ind-min-counterexample-inference}
Let us first consider rules applicable to the LHS.
There are four possible medial applications, resulting in the following new LHSs:
\begin{itemize}
\item $(w \lor x) \land (w' \lor x' ) \land ((y \land y') \lor (z \land z'))$. In this case $\{w,x',y,y' \}$ is a countermodel.
\item $(w \lor x') \land (w' \lor x) \land ((y \land y') \lor (z \land z'))$. In this case $\{x',w',y, y' \}$ is a countermodel.
\item $((w \land w') \lor (x \land x')) \land (y \lor z) \land (y' \lor z' )$. In this case $\{ x,x',y,z' \}$ is a countermodel.
\item $((w \land w') \lor (x \land x')) \land (y \lor z' ) \land (y' \lor z)$. In this case $\{ w,w',z', y'  \}$ is a countermodel.
\end{itemize}
Any switch application to the LHS must be on the top-level conjunction, and will have the form $(a \land a') \lor X$, for $a \in \{w,x,y,z \}$. However, $\{w,w' \}$, $\{x,x' \}$, $\{y,y' \}$ and $\{z,z'\}$ are each countermodels for the RHS.

Now let us consider the possible rule applications leading to the RHS.
There are two possible medial instances, coming from the following new RHSs:
\begin{itemize}
\item $(w \land y) \lor (x \land x' \land y') \lor (w' \land z' \land z)$. In this case $\{x,x',z,z'\}$ is a countermodel.
\item $(w \land y) \lor (x \land z) \lor (w' \land z' \land x' \land y')$. In this case $\{ w,w',z,z' \}$ is a countermodel.
\end{itemize}
Now let us consider the switch instances:
\begin{itemize}
\item If the contractum of the switch is $x \lor (w' \land z')$, then $\{x,x',y,y'\}$ is a countermodel.
\item If the contractum of the switch is $(x' \land y') \lor z$, then $\{w,w',z,z'\}$ is a countermodel.
\item If the redex of the switch has the form $w \land X$ or $y \land X$, then $\{x,x',z,z'\}$ is a countermodel.
\item If the redex of the switch has the form $X \land (x \lor (w' \land z'))$ or $X \land ((x' \land y') \lor z)$, then $\{ w,w',y,y' \}$ is a countermodel.
\end{itemize}

\end{document}